\begin{document}\sloppy

\title{Fisher Scoring for crossed factor Linear Mixed Models
}


\author{Thomas Maullin-Sapey        \and
        Thomas E. Nichols 
}


\institute{Thomas Maullin-Sapey \at
              Big Data Institute, Li Ka Shing Centre for Health Information and Discovery, Old Road Campus, Oxford, OX3 7LF, UK \\ \email{Thomas.Maullin-Sapey@bdi.ox.ac.uk}\\
              ORCID ID: 0000-0002-1890-330X
           \and
           Thomas E. Nichols \at
              Big Data Institute, Li Ka Shing Centre for Health Information and Discovery, Old Road Campus, Oxford, OX3 7LF, UK \\ \email{Thomas.Nichols@bdi.ox.ac.uk}\\
              ORCID ID: 0000-0002-4516-5103\\
}

\date{Received: date / Accepted: date}

\maketitle

\begin{abstract}

The analysis of longitudinal, heterogeneous or unbalanced clustered data is of primary importance to a wide range of applications. The Linear Mixed Model (LMM) is a popular and flexible extension of the linear model specifically designed for such purposes. Historically, a large proportion of material published on the LMM concerns the application of popular numerical optimization algorithms, such as Newton-Raphson, Fisher Scoring and Expectation Maximization to single-factor LMMs (i.e. LMMs that only contain one “factor” by which observations are grouped). However, in recent years, the focus of the LMM literature has moved towards the development of estimation and inference methods for more complex, multi-factored designs. In this paper, we present and derive new expressions for the extension of an algorithm classically used for single-factor LMM parameter estimation, Fisher Scoring, to multiple, crossed-factor designs. Through simulation and real data examples, we compare five variants of the Fisher Scoring algorithm with one another, as well as against a baseline established by the R package lmer, and find evidence of correctness and strong computational efficiency for four of the five proposed approaches. Additionally, we provide a new method for LMM Satterthwaite degrees of freedom estimation based on analytical results, which does not require iterative gradient estimation. Via simulation, we find that this approach produces estimates with both lower bias and lower variance than the existing methods.

\keywords{Fisher Scoring \and Linear Mixed Model \and Crossed Factors}
\end{abstract}

\section{Introduction}
\label{intro}
\subsection{Background}\label{Intro1}

Since its conception in the seminal work of \citet{Laird1982}, the literature on Linear Mixed Model (LMM) estimation and inference has evolved rapidly. At present, many software packages exist which are capable of performing LMM estimation and inference for large and complex LMMs in an incredibly quick and memory-efficient manner. For some packages, this exceptional speed and efficiency arises from simplifying model assumptions, whilst for others, complex mathematical operations such as sparse matrix methodology and sweep operators are utilized to improve performance (\citet{Wolfinger1994}, \citet{Bates:2015pls}). 

However, situations exist in which the current methodology cannot be applied due to practical implementation considerations. One such example is given in the field of medical imaging, in which mass-univariate analyses, where hundreds of thousands of models must be estimated concurrently, are standard practice. To efficiently perform a mass-univariate analysis within a practical time-frame, the use of vectorized computation, which exploits the repetitive nature of simplistic operations to streamline calculation, must be employed (\citet{Smith2018}, \citet{Li2019}). Unfortunately, many existing LMM tools utilize complex operations, for which vectorized support does not currently exist. As a result, alternative methodology, using more conceptually simplistic mathematical operations for which vectorized support exists, is required.

The complex nature of LMM computation has partly arisen from the gradual expansion of the definition of ``Linear Mixed Model". Previously, the term ``Linear Mixed Model" was primarily used to refer to an extension of the linear regression model which contains random effects grouped by a single random factor. Examples of this definition can be seen in \citet{Laird1982}, in which ``Linear Mixed Model" refers only to ``single-factor" longitudinal models, and in \citet{Bates1988}, where more complex, multi-factor models are described as an \textit{``extension"} of the ``Linear Mixed Model". Consequently, throughout the late 1970s and 1980s, one of the main focuses of the LMM literature was to provide parameter estimation methods, such as Fisher Scoring, Newton-Raphson and Expectation Maximization, for the single-factor LMM (e.g. \citet{Dempster1977}, \citet{Jennrich1986} and \citet{Laird1987}). By exploiting structural features of the single-factor model, implementation of these methods required only conceptually simplistic mathematical operations. For instance, the Fisher Scoring algorithm proposed by \citet{Jennrich1986} relies only upon vector addition and matrix multiplication, inversion, and reshaping operations. 

In recent times, usage of the term ``Linear Mixed Models" has grown substantially to include models which contain random effects grouped by multiple random factors. Examples of this more general definition are found in \cite{Bates2009:book} and \cite{Tibaldi2007}. For this more general ``multi-factor" LMM definition, models can be described as exhibiting either a hierarchical factor structure (i.e. factor groupings are nested inside one another) or crossed factor structure (i.e. factor groupings are not nested inside one another). For instance, a study involving students grouped by the factors ``school" and ``class" contains hierarchical factors (as every class belongs to a specific school). In contrast, a study that involves observations grouped by "subject" and "location," where subjects change location between recorded measurements, contains crossed factors (as each subject is not generally associated with a specific location or vice versa). In either case, due to the more complex nature of the multi-factor LMM, the computational approaches used in the single-factor setting can not be directly applied to the multi-factor model. For this reason, parameter estimation of the multi-factor LMM has often been viewed as a much more ``difficult" problem than its single-factor counterpart (see, for example, the discussion in chapters $2$ and $8$ of \citet{west2014linear}).

Several authors have proposed and implemented methods for multi-factor LMM estimation. However, such methods typically require conceptually complex mathematical operations, which are not naturally amenable to vectorized computation, or restrictive simplifying model assumptions, which prevent some designs from being estimated. For example, the popular R package lmer performs LMM estimation via minimization of a penalized least squares cost function based on a variant of Henderson's mixed model equations (\citet{Henderson1959}, \citet{Bates:2015pls}). However, sparse matrix methods are required to achieve fast evaluation of the cost function, and advanced numerical approximation methods are needed for optimization (e.g. the Bound Optimization BY Quadratic Approximation, BOBYQA, algorithm, \citet{Powell2009}). The commonly used SAS and SPSS packages, PROC-MIXED and MIXED, employ a Newton-Raphson algorithm proposed by \citet*{Wolfinger1994}. In this approach, though, derivatives and Hessians must be computed using the sweep operator, W-transformation, and Cholesky decomposition updating methodology (\citet{SASMIXED}, \cite{SPSSMIXED}). The Hierarchical Linear Models (HLM) package takes an alternative option and restricts its inputs to only LMMs which contain hierarchical factors (\cite{Raudenbush2002}); although, the Hierarchical Cross-classified Models (HCM) sub-module does make allowances for specific use-case crossed LMMs. To perform parameter estimation, HLM employs a range of different methods, each tailored to a particular model design. As a result, there are many models for which HLM does not provide support.

Methods have also been proposed for evaluating the score vector (the derivative of the log-likelihood function) and the Fisher Information matrix (the expected Hessian of the negative log-likelihood function) required to perform Fisher Scoring for the multi-factor LMM. For example, alongside the Newton-Raphson approach adopted by PROC-MIXED and MIXED, \citet{Wolfinger1994} also describe a Fisher Scoring algorithm. However, evaluation of the expressions they provide requires use of the sweep operator, W-transformation, and Cholesky decomposition updating methodology; operations for which widespread vectorized support do not yet exist. More recently, expressions for score vectors and Fisher Information matrices were provided by \citet{zhu2018essential}. However, the approach \citet{zhu2018essential} adopt to derive these expressions produces an algorithm that requires independent computation for each variance parameter in the LMM. This algorithm's serialized nature results in substantial overheads in terms of computation time, thus limiting the method's utility in practical situations where time-efficiency is a crucial consideration.

To our knowledge, no approach has yet been provided for multi-factor LMM parameter estimation, which utilizes only simplistic, universally available operations which are naturally amenable to vectorized computation. In this work we revisit a Fisher Scoring approach suggested for the single-factor LMM, described in \citet{Demidenko:2013mfx}, extending it to the multi-factor setting, with the intention of revisiting the motivating example of the mass-univariate model in later work.

The novel contribution of this work is to provide new derivations and closed-form expressions for the score vector and Fisher Information matrix of the multi-factor LMM. We show how these expressions can be employed for LMM parameter estimation via the Fisher Scoring algorithm and can be further adapted for constrained optimization of the random effects covariance matrix. Additionally, we demonstrate that such closed-form expressions have use beyond the scope of the Fisher Scoring algorithm. An example of this is provided in the setting of mixed model inference, where the degrees of freedom of the approximate T-statistic are not known and must be estimated (\citet{verbeke2001:book}). We show how our methods may be combined with the Satterthwaite-based method for approximating degrees of freedom for the LMM by using an approach based on the work of \cite{Kuznetsova2017}.

In this paper, we first propose five variants of the Fisher Scoring algorithm. Following this, we provide a discussion of initial starting values for the algorithm, approaches to ensuring positive definiteness of the random effects covariance matrix, and methods for improving the algorithm's computational efficiency during implementation. Detail on constrained optimization, allowing for structural assumptions to be placed on the random effects covariance matrix, is then provided. Proceeding this, new expressions for Satterthwaite estimation of the degrees of freedom of the approximate T-statistic for the multi-factor LMM are given. Finally, we verify the correctness of the proposed algorithms and degrees of freedom estimates via simulation and real data examples, benchmarking the performance against that of the R package lmer.

\subsection{Preliminaries}\label{prelim}

In this section, we introduce preliminary statistical concepts and notation which will be employed throughout the remainder of this work. In Section \ref{modelDesc}, a formal definition of the LMM is provided and in Section \ref{NotationSec}, additional notation is detailed.

\subsubsection{The model}\label{modelDesc}
Both the single-factor and multi-factor LMM, for $n$ observations, take the following form:
\begin{equation}\label{LMMdef}
Y=X\beta + Zb + \epsilon \newline
\end{equation}
\begin{equation}\label{LMMdef2}
\epsilon \sim N(0,\sigma^2 I_n), \hspace{0.3cm} b \sim N(0, \sigma^2 D),
\end{equation}

\noindent
where the known quantities are:
\begin{itemize}
    \item[$\bullet$] $Y$: the $(n \times 1)$ vector of responses.
    \item[$\bullet$] $X$: the $(n \times p)$ fixed effects design matrix.
    \item[$\bullet$] $Z$: the $(n \times q)$ random effects design matrix.
\end{itemize}
\noindent
And the unknown parameters are:

\begin{itemize}
    \item[$\bullet$] $\beta$: the $(p \times 1)$ fixed effects parameter vector.
    \item[$\bullet$] $\sigma^2$: the scalar fixed effects variance.
    \item[$\bullet$] $D$: the $(q \times q)$ random effects covariance matrix.
\end{itemize}
\noindent
From (\ref{LMMdef}) and (\ref{LMMdef2}) the marginal distribution of the response vector, $Y$ is:
\begin{equation}\nonumber
Y \sim N(X\beta,\sigma^2 (I_n + ZDZ')).
\end{equation}
The log-likelihood for the LMM specified by (\ref{LMMdef}) and (\ref{LMMdef2}) is derived from the marginal distribution of $Y$. Dropping constant terms, this log-likelihood is given by:
\begin{equation}\label{llh}
l(\theta)=-\frac{1}{2}\bigg\{ n\log(\sigma^2)+\sigma^{-2}e'V^{-1}e+\log|V|\bigg\},
\end{equation}
where $\theta$ is shorthand for all the parameters $(\beta, \sigma^2, D)$, $V=I_n+ZDZ'$ and $e=Y-X\beta$. Throughout the main body of this work, we shall consider parameter estimation performed via Maximum Likelihood (ML) estimation of (\ref{llh}). However, we note that the approaches we describe for ML estimation can also be easily adapted to use a Restricted Maximum Likelihood (ReML) criterion. Further detail on ReML estimation is provided in Appendix \ref{remlApp}.

The distinction between the multi-factor and single-factor LMM lies in the specification of the random effects in the model. Random effects are often described in terms of factors, categorical variables that group the random effects, and levels, individual instances of such a categorical variable. At this point, we highlight that the term ``factor", in this work, refers only to categorical variables which group random effects and does not refer to groupings of fixed effects. We denote the total number of factors in the model as $r$ and denote the $k^{th}$ factor in the model as $f_k$ for $k \in \{1,...,r\}$. For a given factor $f_k$, $l_k$ will be used to denote the number of levels possessed by $f_k$, and $q_k$ the number of random effects which $f_k$ groups. The single-factor LMM corresponds to the case $r=1$, while the multi-factor setting corresponds to the case $r>1$.

An example of how this notation may be used in practice is given as follows. Suppose an LMM contains observations that are grouped according to ‘subject’ (i.e. the participant whose observation was recorded) and ‘location’ (i.e. the place the observation was recorded). Further, suppose that the LMM includes a random intercept and random slope which each model subject-specific behaviour, and a random intercept which models location-specific behaviour. Two factors are present in this design: the factor $f_1$ is ‘subject’ and the factor $f_2$ is ‘location’. Therefore, $r=2$. The number of subjects is $l_1$ and the number of locations is $l_2$. The number of covariates grouped by the first factor, $q_1$, is $2$ (i.e. the random intercept and the random slope) and the number grouped by the second factor, $q_2$, is $1$ (i.e. the random intercept). 

The values of $r$, $\{q_k\}_{k\in\{1,\hdots,r\}}$ and $\{l_k\}_{k\in\{1,\hdots,r\}}$ determine the structure of the random effects design matrix, $Z$, and random effects covariance matrix, $D$. To specify $Z$ formally is notationally cumbersome and of little relevance to the aims of this work. For this reason, the reader is referred to the work of \citet{Bates:2015pls} for further detail on the construction of $Z$. Here it will be sufficient to note that, under the assumption that it’s columns are appropriately ordered, $Z$ will be comprised of $r$ horizontally concatenated blocks. The $k^{th}$ block of $Z$, denoted $Z_{(k)}$, has dimension $(n \times l_kq_k)$ and describes the random effects which are grouped by the $k^{th}$ factor. Additionally, each block, $Z_{(k)}$, can be further partitioned column-wise into $l_k$ blocks of dimension $(n \times q_k)$. The $j^{th}$ block of $Z_{(k)}$, denoted $Z_{(k,j)}$, corresponds to the random effects which belong to the $j^{th}$ level of the $k^{th}$ factor. In summary,
\begin{equation}\nonumber
\begin{aligned}[b]
    & Z = [Z_{(1)},Z_{(2)},\hdots Z_{(r)}], \\
    & Z_{(k)} = [Z_{(k,1)},Z_{(k,2)},\hdots Z_{(k,l_k)}] \hspace{0.5cm}\text{(for }k\in \{1,\hdots, r\}\text{)}
    \end{aligned}
\end{equation}

An important property of the matrix $Z$ is that for arbitrary $k\leq r$, $Z_{(k,i)}$ and $Z_{(k,j)}$ satisfy the below condition for all $i\neq j$:
\begin{equation}\label{Zprop}
\begin{aligned}[b]
    & \text{ If row }w\text{ of }Z_{(k,i)}\text{ contains any non-zero values }\\
    & \implies \text{then row }w\text{ of }Z_{(k,j)}\text{ contains only zero values. }\\
    \end{aligned}
\end{equation}
 
A consequence of the above property is that for any arbitrary factor $f_k$, the rows of $Z_{(k)}$ can be permuted in order to obtain a block diagonal matrix. As, for the single-factor LMM, $Z\equiv Z_{(1)}$, it follows that the observations of the single-factor LMM can be arranged such that $Z$ is block diagonal. This feature of the single-factor LMM simplifies the derivation of the Fisher Information matrix and score vector required for Fisher Scoring. However, this simplification cannot be generalized to the multi-factor LMM. In general, it is not true that the rows of $Z$ can be permuted in such a way that the resultant matrix is block-diagonal. As emphasized in Section \ref{Intro1}, due to this, many of the results derived in the single-factor LMM have not been generalized to the multi-factor setting.

To describe the random effects covariance matrix, $D$, it is assumed that factors are independent from one another and that for each factor, factor $f_k$, there is a $(q_k \times q_k)$ unknown covariance matrix, $D_k$, representing the ``within-factor" covariance for the random effects grouped by $f_k$. The random effects covariance matrix, $D$, appearing in (\ref{LMMdef2}), can now be given as:
\begin{equation}\nonumber
    D= \bigoplus_{k=1}^r (I_{l_k} \otimes D_k),
\end{equation}
where $\oplus$ represents the direct sum, and $\otimes$ the Kronecker product. Note that, whilst $D$ is large in dimension (having dimension $(q \times q)$ where $q=\sum_i q_il_i$), $D$ contains $q_u =\sum_i\frac{1}{2}q_i(q_i+1)$ unique elements. Typically, it is true that $q_u<<q^2$. As a result, the task of describing the random effects covariance matrix $D$ reduces in practice to specifying only a small number of parameters.

\subsubsection{Notation}\label{NotationSec}

In this section, we introduce notation which will be used throughout the remainder of this work. The conventional hat operator notation is adopted to denote estimators resulting from likelihood maximisation procedures. For example, $\beta$ represents the true (unknown) fixed effects parameter vector for the LMM, whilst the maximum likelihood estimate of $\beta$ is denoted $\hat{\beta}$. The conventional subscript notation, $A_{[X,Y]}$, is used to denote the sub-matrix of matrix $A$, composed of all elements of $A$ with row indices $x \in X$ and column indices $y \in Y$. The replacement of $X$ or $Y$ with a colon, $:$, represents all rows or all columns of $A$, respectively. If a scalar, $x$ or $y$, replaces $X$ or $Y$, this represents the elements with row indices $x=X$ or column indices $y=Y$, respectively. The notation $(k)$ may also replace $X$ and $Y$ where $(k)$ represents the indices of the columns of $Z$ which correspond to factor $f_k$. Similarly, $X$ and $Y$ may be substituted for the ordered pair $(k,j)$ where $(k,j)$ represents the indices of the columns of $Z$ which correspond to level $j$ of factor $f_k$. We highlight again our earlier notation, $Z_{(k,j)}$, which, due to its frequent occurrence acts as a shorthand for $Z_{[:,(k,j)]}$, i.e. the columns of $Z$ corresponding to level $j$ of factor $f_k$. 

Finally, we shall also adopt the notations `vec', `vech', $N_k$, $K_{m,n}$, $\mathcal{D}_k$ and $\mathcal{L}_k$ as used in \citet{Magnus1980}, defined as follows:

\begin{itemize}
    \item[$\bullet$] `vec' represents the mathematical vectorizarion operator which transforms an arbitrary $(k \times k)$ matrix, $A$, to a $(k^2 \times 1)$ column vector, vec$(A)$, composed of the columns of $A$ stacked into a column vector. (Note: this is not to be confused with the concept of computational vectorization discussed in Section \ref{Intro1}).
    
    \item[$\bullet$] `vech' represents the half-vectorization operator which transforms an arbitrary square matrix, $A$, of dimension $(k \times k)$ to a $(k(k+1)/2 \times 1)$ column vector, vech$(A)$, composed by stacking the elements of $A$ which fall on and below the diagonal into a column vector.
    
    \item[$\bullet$] $N_k$ is defined as the unique matrix of dimension $(k^2 \times k^2)$ which implements symmetrization for any arbitrary square matrix $A$ of dimension $(k \times k)$ in vectorized form. I.e. $N_k$ satisfies the following relation:
    $$N_k\text{vec}(A)=\text{vec}(A+A')/2.$$
    
    \item[$\bullet$] $K_{m,n}$ is the unique ``Commutation" matrix of dimension $(mn \times mn)$, which permutes, for any arbitrary matrix $A$ of dimension $(m \times n)$, the vectorization of $A$ to obtain the vectorization of the transpose of $A$. I.e. $K_{m,n}$ satisfies the following relation:
    $$\text{vec}(A)=K_{m,n}\text{vec}(A').$$ 
    
    \item[$\bullet$] $\mathcal{D}_k$ is the unique ``Duplication" matrix of dimension $(k^2 \times k(k+1)/2)$, which maps the half-vectorization of any arbitrary symmetric matrix $A$ of dimension $(k \times k)$ to it's vectorization. I.e. $\mathcal{D}_k$ satisfies the following relation:
    $$\text{vec}(A)=\mathcal{D}_k\text{vech}(A).$$ 
    
    \item[$\bullet$] $\mathcal{L}_k$ is the unique $1-0$ ``Elimination" matrix of dimension $(k(k+1)/2 \times k^2)$, which maps the vectorization of any arbitrary lower triangular matrix $A$ of dimension $(k \times k)$ to it's half-vectorization. I.e. $\mathcal{L}_k$ satisfies the following relation:
    $$\text{vech}(A)=\mathcal{L}_k\text{vec}(A).$$ 
\end{itemize}
To help track the notational conventions employed in this work, an index of notation is provided in the Supplementary Material, Section $S10$.

\section{Methods}
\subsection{Fisher Scoring Algorithms}\label{AlgoSection}

In this section, we employ the Fisher Scoring algorithm for ML estimation of the parameters $(\beta, \sigma^2, D)$. The Fisher Scoring algorithm update rule takes the following form:

\begin{equation}\label{FS}
\theta_{s+1} = \theta_{s} + \alpha_s\mathcal{I}(\theta_{s})^{-1}\frac{dl(\theta_s)}{d\theta}_,
\end{equation}
\\
where $\theta_s$ is the vector of parameter estimates given at iteration $s$, $\alpha_s$ is a scalar step size, the score vector of $\theta_s$, $\frac{dl(\theta_s)}{d\theta}$, is the derivative of the log-likelihood with respect to $\theta$ evaluated at $\theta=\theta_s$, and $\mathcal{I}(\theta_{s})$ is the Fisher Information matrix of $\theta_s$;
    $$\mathcal{I}(\theta_{s})=E\bigg[\bigg(\frac{dl(\theta)}{d \theta}\bigg)\bigg(\frac{dl(\theta)}{d \theta}\bigg)'\bigg|\theta=\theta_s\bigg].$$

A more general formulation of Fisher Scoring, which allows for low-rank Fisher Information matrices, is given by \citet{rao1972:psuedo}:

\begin{equation}\label{FFS}
\theta_{s+1} = \theta_{s} + \alpha_s\mathcal{I}(\theta_{s})^{+}\frac{dl(\theta_s)}{d\theta}_,
\end{equation}
\\
where superscript plus, $^+$, is the Moore-Penrose (or ``Pseudo") inverse. For notational brevity, when discussing algorithms of the form (\ref{FS}) and (\ref{FFS}) in the following sections, the subscript $s$, representing iteration number, will be suppressed unless its inclusion is necessary for clarity.

For the LMM, several different representations of the parameters of interest, $(\beta, \sigma^2,D)$, can be used for numerical optimization and result in different Fisher Scoring iteration schemes. In this section, we consider the following three representations for $\theta$:
    \begin{equation}\nonumber
        \theta^h = \begin{bmatrix}\beta\\ \sigma^2\\ \text{vech}(D_1)\\\vdots\\ \text{vech}(D_r)\\
        \end{bmatrix},
        \theta^f = \begin{bmatrix}\beta\\ \sigma^2\\ \text{vec}(D_1)\\\vdots\\ \text{vec}(D_r)\\
        \end{bmatrix},
        \theta^c = \begin{bmatrix}\beta\\ \sigma^2\\ \text{vech}(\Lambda_1)\\\vdots\\ \text{vech}(\Lambda_r)\\
        \end{bmatrix},
    \end{equation}
where $\Lambda_k$ represents the lower triangular Cholesky factor of $D_k$, such that $D_k=\Lambda_k\Lambda_k'$. We will refer to the representations ($\theta^h$, $\theta^f$ and $\theta^c$) as the ``half", ``full" and ``Cholesky" representations of $(\beta, \sigma^2, D)$, respectively. In a slight abuse of notation, the function $l$ will be allowed to take any representation of $\theta$ as input, with the interpretation unchanged (i.e. $l(\theta^f)=l(\theta^h)=l(\theta^c)$). For example, if the full representation is being used, the log-likelihood will be denoted $l(\theta^f)$, but if the half representation is being used, the log likelihood will be denoted $l(\theta^h)$.

In the following sections, the score vectors and Fisher Information matrices required to perform five variants of Fisher Scoring for the multi-factor LMM will be stated with proofs provided in Appendices \ref{derivAppendix} and \ref{InfoMatAppendix}. For notational convenience, we denote the sub-matrix of the Fisher Information matrix of $\theta^{h}$ with row indices corresponding to parameter vector $a$ and column indices corresponding to parameter vector $b$ as $\mathcal{I}^{h}_{a,b}$. In other words, $\mathcal{I}^{h}_{a,b}$ is the sub-matrix of $\mathcal{I}(\theta^{h})$, defined by:
    
    $$\mathcal{I}^{h}_{a,b}=E\bigg[\bigg(\frac{dl(\theta^{h})}{d a}\bigg)\bigg(\frac{dl(\theta^{h})}{d b}\bigg)'\bigg|\theta=\theta_s\bigg].$$

For further simplification of notation, when $a$ and $b$ are equal, the second subscript will be dropped and the matrix $\mathcal{I}^{h}_{a,b}=\mathcal{I}^{h}_{a,a}$ will be denoted simply as $\mathcal{I}^{h}_{a}$. Analogous notation is used for the full and Choleksy representations.

\subsubsection{Fisher Scoring}\label{FSsection}

The first variant of Fisher Scoring we provide uses the ``half"-representation for $(\beta, \sigma^2, D)$, $\theta^h$, and is based on the standard form of Fisher Scoring given by (\ref{FS}). This may be considered the most natural approach for a Fisher Scoring algorithm as $\theta^h$ is an unmodified vector of the unique parameters of the LMM and (\ref{FS}) is the standard update rule. For this approach the elements of the score vector are:
\begin{equation}\label{FSderiv1}
    \frac{d l(\theta^h)}{d \beta} = \sigma^{-2}X'V^{-1}e,
\end{equation}
\begin{equation}\label{FSderiv2}
    \frac{d l(\theta^h)}{d \sigma^2} = -\frac{n}{2}\sigma^{-2}+\frac{1}{2}\sigma^{-4}e'V^{-1}e.    
\end{equation}
For $k \in \{1,\hdots, r\}$:
\begin{equation}\label{FSderiv3}
    \frac{d l(\theta^h)}{d \text{vech}(D_k)} = \frac{1}{2}\mathcal{D}_{q_k}'\text{vec}\bigg( \sum_{j=1}^{l_k}Z'_{(k,j)}V^{-1}\bigg(\frac{ee'}{\sigma^2}-V\bigg)V^{-1}Z_{(k,j)}\bigg).    
\end{equation}
and the entries of the Fisher Information matrix are given by:
\begin{equation}\label{FSFI1}
    \mathcal{I}^h_{\beta}=\sigma^{-2}X'V^{-1}X,\hspace{0.4cm}
    \mathcal{I}^h_{\beta,\sigma^2}=\mathbf{0}_{p,1},\hspace{0.4cm}\mathcal{I}^h_{\sigma^2}=\frac{n}{2}\sigma^{-4}.
\end{equation}

\noindent
For $k \in \{1,\hdots, r\}$:
\begin{equation}\label{FSFI2}
\begin{aligned}[b]
 & \mathcal{I}^h_{\beta,\text{vech}(D_k)}=\mathbf{0}_{p,q_k(q_k+1)/2},\\
    & \mathcal{I}^h_{\sigma^2,\text{vech}(D_k)}= \frac{1}{2}\sigma^{-2}\text{vec}'\bigg(\small\sum_{j=1}^{l_k} Z_{(k,j)}'V^{-1}Z_{(k,j)}\bigg)\mathcal{D}_{q_k}. \\
\end{aligned}
\end{equation}

\noindent
For $k_1,k_2\in\{1,\hdots,r\}$:
\begin{equation}\label{FSFI3}
  \begin{aligned}[b]
    &\mathcal{I}^h_{\text{vech}(D_{k_1}),\text{vech}(D_{k_2})}=\\ &\frac{1}{2}\mathcal{D}'_{q_{k_1}}\sum_{j=1}^{l_{k_2}}\sum_{i=1}^{l_{k_1}}(Z'_{(k_1,i)}V^{-1}Z_{(k_2,j)}\otimes Z'_{(k_1,i)}V^{-1}Z_{(k_2,j)})\mathcal{D}_{q_{k_2}}.
    \end{aligned}
\end{equation}
where $\mathbf{0}_{n,k}$ denotes the $(n \times k)$ dimensional matrix of zeros. Due to its to natural approach to Fisher Scoring, this algorithm is referred to as FS in the remainder of this text. Pseudocode for the FS algorithm is given in Algorithm \ref{FSalgorithm}. Discussion of the initial estimates used in the algorithm is deferred to Section \ref{initval}.
\vspace*{0.2cm}

\IncMargin{1em}

\begin{algorithm}\label{FSalgorithm}
\SetAlgoLined
\footnotesize{
\caption{Fisher Scoring (FS)}
\vspace{0.3cm}
 Assign $\theta^h$ to an initial estimate using (\ref{beta0}) and (\ref{D0})
 \BlankLine
 \While{current $l(\theta^h)$ and previous $l(\theta^h)$ differ by more than a predefined tolerance}{

\BlankLine
 Calculate $\frac{dl(\theta^h)}{d\theta^h}$ using (\ref{FSderiv1})-(\ref{FSderiv3}).
\BlankLine
 Calculate $\mathcal{I}(\theta^h)$ using (\ref{FSFI1})-(\ref{FSFI3})
\BlankLine
 Assign $\theta^h=\theta^h + \alpha\mathcal{I}(\theta^h)^{-1}\frac{dl(\theta^h)}{d\theta^h}$
 \BlankLine
 Assign $\alpha=\frac{\alpha}{2}$ if $l(\theta^h)$ has decreased in value.
}
\BlankLine
}
\end{algorithm}\DecMargin{1em}

\subsubsection{Full Fisher Scoring}\label{FFSsection}

The second variant of Fisher Scoring considered in this work uses the ``full", $\theta^f$, representation of the model parameters, and shall therefore be referred to as ``Full Fisher Scoring" (FFS). In this approach, for each factor, $f_k$, the elements of vec$(D_k)$ are to be treated as distinct from one another with numerical optimization for $D_k$ performed over the space of all $(q_k \times q_k)$ matrices. This approach differs from the FS method proposed in the previous section, in which optimization was performed on the space of only those $(q_k \times q_k)$ matrices that are symmetric. This optimization procedure is realized by treating symmetric matrix elements of $D_k$ as distinct and, for a given element, using the partial derivative with respect to the element during optimization instead of the total derivative with respect to both the element and its symmetric counterpart. This change is reflected by denoting the elements of the score vector which correspond to vec$(D_k)$ using a partial derivative operator, $\partial$, rather than the total derivative operator, $d$. The primary motivation for the inclusion of the FFS approach is that it serves as a basis for which the constrained covariance approaches of Section \ref{covstruct} can be built upon. However, it should be noted that, as it does not require the construction or use of duplication matrices, FFS also provides simplified expressions and potential improvement in terms of computation speed. As a result, FFS is of some theoretical and practical interest and is detailed in full here. 

An immediate obstacle to this approach is that the Fisher Information matrix of $\theta^f$ is rank-deficient, and, therefore, cannot be inverted. Intuitively, this is to be expected, as repeated entries in $\theta^f$ result in repeated rows in $\mathcal{I}(\theta^f)$. Mathematically, this can be seen by noting that $\mathcal{I}^f_{\text{vec}(D_{k})}$ can be expressed as a product containing the matrix $N_{q_k}$ (defined in Section \ref{NotationSec}), which is low-rank by construction (see Appendix \ref{FullAppendix}). Consequently, the Fisher Scoring update rule for $\theta^f$ must be based on the Pseudo-inverse formulation of Fisher Scoring given in (\ref{FFS}).

As the derivatives of the log-likelihood with respect to $\beta$ and $\sigma^2$ do not depend upon the parameterisation of $D$, both FFS and FS employ the same expressions for the elements of the score vector which correspond to $\beta$ and $\sigma^2$, given by (\ref{FSderiv1}) and (\ref{FSderiv2}) respectively. The score vector for $\{$vec($D_k)\}_{k\in \{1,\dots r\}}$ used by FFS is given by:
\begin{equation}\label{FFSderiv1}
    \frac{\partial l(\theta^f)}{\partial \text{vec}(D_k)} = \frac{1}{2}\text{vec}\bigg( \sum_{j=1}^{l_k}Z'_{(k,j)}V^{-1}\bigg(\frac{ee'}{\sigma^2}-V\bigg)V^{-1}Z_{(k,j)}\bigg).    
\end{equation}
The entries of the Fisher Information matrix of $\theta^f$, based on the derivatives given in (\ref{FSderiv1}), (\ref{FSderiv2}) and (\ref{FFSderiv1}), are given by:
\begin{equation}\nonumber
    \mathcal{I}^f_{\beta}=\mathcal{I}^h_{\beta}, \hspace{0.7cm}
    \mathcal{I}^f_{\beta,\sigma^2}=\mathcal{I}^h_{\beta,\sigma^2},\hspace{0.7cm}\mathcal{I}^f_{\sigma^2}=\mathcal{I}^h_{\sigma^2}.
\end{equation}
\noindent
For $k \in \{1,\hdots, r\}$:
\begin{equation}\label{FFSFI1}\begin{aligned}[b]
& \mathcal{I}^f_{\beta,\text{vec}(D_k)}=\mathbf{0}_{p,q_k^2},\\
& \mathcal{I}^f_{\sigma^2,\text{vec}(D_k)}= \frac{1}{2}\sigma^{-2}\text{vec}'\bigg(\small\sum_{j=1}^{l_k} Z_{(k,j)}'V^{-1}Z_{(k,j)}\bigg). \\
\end{aligned}
\end{equation}
\noindent
For $k_1,k_2\in\{1,\hdots,r\}$:
\begin{equation}\label{FFSFI2}
  \begin{aligned}[b]
    &\mathcal{I}^f_{\text{vec}(D_{k_1}),\text{vec}(D_{k_2})}=\\ &\frac{1}{2}\sum_{j=1}^{l_{k_2}}\sum_{i=1}^{l_{k_1}}(Z'_{(k_1,i)}V^{-1}Z_{(k_2,j)}\otimes Z'_{(k_1,i)}V^{-1}Z_{(k_2,j)})N_{q_k}.
    \end{aligned}
\end{equation}
Derivations for the above can be found in Appendices \ref{derivAppendix} and \ref{InfoMatAppendix}. In addition, in Appendix \ref{FullAppendix}, we show that the Full Fisher Scoring algorithm can also be expressed in the form:
\begin{equation}\label{FFS2}
\theta^f_{s+1} = \theta^f_{s} + \alpha_s F(\theta_{s}^f)^{-1}\frac{\partial l(\theta_s^f)}{\partial\theta}_,
\end{equation}
where, unlike in (\ref{FS}) and (\ref{FFS}), $F(\theta^f)$ is not the Fisher Information matrix. Rather, $F(\theta^f)$ is a matrix of equal dimensions to $\mathcal{I}(\theta^f)$ with all of its elements equal to those of $\mathcal{I}(\theta^f)$, apart from those which were specified by (\ref{FFSFI2}), which instead are, for $k_1,k_2\in\{1,\hdots,r\}$:
\begin{equation}\label{FFSF2}
  \begin{aligned}[b]
    &F_{\text{vec}(D_{k_1}),\text{vec}(D_{k_2})}=\\ &\frac{1}{2}\sum_{j=1}^{l_{k_2}}\sum_{i=1}^{l_{k_1}}(Z'_{(k_1,i)}V^{-1}Z_{(k_2,j)}\otimes Z'_{(k_1,i)}V^{-1}Z_{(k_2,j)})
    \end{aligned},
\end{equation}
where the same subscript notation has been adopted to index $F(\theta^f)$ as was adopted for $\mathcal{I}(\theta^f)$.

The above approach, which utilizes the matrix $F(\theta^f)$, rather than $\mathcal{I}(\theta^f)$, is adopted to match the formulation of the Fisher Scoring algorithm given in Section 2.11 of \citet{Demidenko:2013mfx} for the single-factor LMM. Pseudocode for the FFS algorithm using the representation of the update rule given by (\ref{FFS2}), is provided by Algorithm \ref{FFSalgorithm}.\newline
\vspace{0.3cm}
\IncMargin{1em}

\begin{algorithm}\label{FFSalgorithm}
\SetAlgoLined
\footnotesize{
\caption{Full Fisher Scoring (FFS)}
\vspace{0.3cm}
 Assign $\theta^f$ to an initial estimate using (\ref{beta0}) and (\ref{D0})
 \BlankLine
 \While{current $l(\theta^f)$ and previous $l(\theta^f)$ differ by more than a predefined tolerance}{

\BlankLine
 Calculate $\frac{\partial l(\theta^f)}{\partial\theta^f}$ using (\ref{FSderiv1}),(\ref{FSderiv2}) and (\ref{FFSderiv1}).
\BlankLine
 Calculate $F(\theta^f)$ using (\ref{FFSFI1}) and (\ref{FFSF2}).
\BlankLine
 Assign $\theta^f=\theta^f + \alpha F(\theta^f)^{-1}\frac{\partial l(\theta^f)}{\partial\theta^f}$
 \BlankLine
 Assign $\alpha=\frac{\alpha}{2}$ if $l(\theta^f)$ has decreased in value.
}
\BlankLine
}
\end{algorithm}\DecMargin{1em}

\subsubsection{Simplified Fisher Scoring}\label{SFSsection}

The third Fisher Scoring algorithm proposed in this work relies on the half-representation of the parameters $(\beta,\sigma^2,D)$ and takes an approach, similar to that of coordinate ascent, which is commonly adopted in the single-factor setting (c.f. \citet{Demidenko:2013mfx}). In this approach, instead of performing a single update step upon the entire vector $\theta^h$ in the form of $(\ref{FS})$, updates for $\beta, \sigma^2$ and $\{D_k\}_{k \in \{1,\hdots, r\}}$ are performed individually in turn. For $\beta$ and $\sigma^2$ each iteration uses the standard Generalized Least Squares (GLS) estimators given by:
\begin{equation}\label{betaUpdate}
    \beta_{s+1} = (X'V_s^{-1}X)^{-1}X'V_s^{-1}Y,
\end{equation}
\begin{equation}\label{sigma2Update}
    \sigma^2_{s+1} = \frac{e_{s+1}'V^{-1}_{s}e_{s+1}}{n}. 
\end{equation}

To update the random effects covariance matrix, $D_k$, for each factor, $f_k$, individual Fisher Scoring updates are applied to vech$(D_k)$. These updates are performed using the block of the Fisher Information matrix corresponding to vech$(D_k)$, given by (\ref{FSFI3}), and take the following form for $k \in \{1,\hdots,r\}$:
\begin{equation}\label{vechDUpdate}
    \text{vech}(D_{k,s+1})=\text{vech}(D_{k,s})+\alpha_s\big(\mathcal{I}^{h}_{\text{vech}(D_{k,s})}\big)^{-1}\frac{dl(\theta^h_s)}{d\text{vech}(D_{k,s})}.
\end{equation}
In line with the naming convention used in \cite{Demidenko:2013mfx}, this method shall be referred to as ``Simplified" Fisher Scoring (SFS). This is due to the relative simplicity, both in terms of notational and computational complexity, of the updates (\ref{betaUpdate})-(\ref{vechDUpdate}) used in the SFS algorithm in comparison to those used in the FS algorithm of Section \ref{FSsection}, given by (\ref{FSderiv1})-(\ref{FSFI3}). Pseudocode for the SFS algorithm is given by Algorithm \ref{SFSalgorithm}.\newline
\vspace{0.3cm}

\IncMargin{1em}

\begin{algorithm}\label{SFSalgorithm}
\SetAlgoLined
\footnotesize{
\caption{Simplified Fisher Scoring (SFS)}
\vspace{0.3cm}
 Assign $\theta^f$ to an initial estimate using (\ref{beta0}) and (\ref{D0})
 \BlankLine
 \While{current $l(\theta^h)$ and previous $l(\theta^h)$ differ by more than a predefined tolerance}{

\BlankLine
 Update $\beta$ using (\ref{betaUpdate})
\BlankLine
 Update $\sigma^2$ using (\ref{sigma2Update})
\BlankLine
\For{$k \in \{1,...r\}$}{
\BlankLine
Update vech($D_k$) using (\ref{vechDUpdate})
}
 \BlankLine
 Assign $\alpha=\frac{\alpha}{2}$ if $l(\theta^h)$ has decreased in value.
 \BlankLine
}}
\BlankLine
\end{algorithm}\DecMargin{1em}

\subsubsection{Full Simplified Fisher Scoring}\label{FSFSsection}

The Full Simplified Fisher Scoring algorithm (FSFS) combines the ``Full" and ``Simplified" approaches described in Sections \ref{FFSsection} and \ref{SFSsection}. In the FSFS algorithm individual updates are applied to $\beta$ and $\sigma^2$ using (\ref{betaUpdate}) and (\ref{sigma2Update}) respectively and to $\{\text{vec}(D_k)\}_{k\in \{1,\hdots, r\}}$ using a Fisher Scoring update step, based on the matrix $F_{\text{vec}(D_{k})}$ given by (\ref{FFSF2}). The update rule for $\{\text{vec}(D_k)\}_{k\in \{1,\hdots, r\}}$ takes the following form:
\begin{equation}\label{vecDUpdate}
    \text{vec}(D_{k,s+1})=\text{vec}(D_{k,s})+\alpha_s F^{-1}_{\text{vec}(D_{k,s})}\frac{\partial l(\theta^f_s)}{\partial\text{vec}(D_{k,s})}.
\end{equation}

In Appendix \ref{FullAppendix}, we justify the above update rule by demonstrating that it is equivalent to using an update rule of the form (\ref{FFS}), given by:
\begin{equation}\nonumber
    \text{vec}(D_{k,s+1})=\text{vec}(D_{k,s})+\alpha_s\big(\mathcal{I}^{f}_{\text{vec}(D_{k,s})}\big)^{+}\frac{\partial l(\theta^f_s)}{\partial\text{vec}(D_{k,s})}.
\end{equation}
Pseudocode for the FSFS algorithm is given by Algorithm \ref{FSFSalgorithm}.\newline
\vspace{0.1cm}
\IncMargin{1em}
\begin{algorithm}\label{FSFSalgorithm}
\SetAlgoLined
\footnotesize{
\caption{Full Simplified Fisher Scoring (FSFS)}
\vspace{0.3cm}
 Assign $\theta^f$ to an initial estimate using (\ref{beta0}) and (\ref{D0})
 \BlankLine
 \While{current $l(\theta^f)$ and previous $l(\theta^f)$ differ by more than a predefined tolerance}{

\BlankLine
 Update $\beta$ using (\ref{betaUpdate})
\BlankLine
 Update $\sigma^2$ using (\ref{sigma2Update})
\BlankLine
\For{$k \in \{1,...r\}$}{
\BlankLine
Update vec($D_k$) using (\ref{vecDUpdate})
\BlankLine
}
 Assign $\alpha=\frac{\alpha}{2}$ if $l(\theta^f)$ has decreased in value.
 \BlankLine}}
\BlankLine
\end{algorithm}
\DecMargin{1em}

\subsubsection{Cholesky Simplified Fisher Scoring}\label{CSFSsection}

The final variant of the Fisher Scoring algorithm we consider is based on the ``simplified" approach described in Section \ref{SFSsection} and uses the Cholesky parameterisation of $(\beta,\sigma^2,D)$, $\theta^c$. This approach follows directly from the below application of the chain rule of differentiation for vector-valued functions,

\begin{equation}\nonumber
    \frac{dl(\theta^c)}{d\text{vech}(\Lambda_k)}=\frac{\partial\text{vech}(D_k)}{\partial\text{vech}(\Lambda_k)}\frac{\partial l(\theta^c)}{\partial \text{vech}(D_k)}.
\end{equation}

An expression for the derivative which appears second in the above product was given by (\ref{FSderiv3}). It therefore follows that in order to evaluate the score vector of $\text{vech}(\Lambda_k)$ (i.e. the derivative of $l$ with respect to $\text{vech}(\Lambda_k)$), only an expression for the first term of the above product is required. This term can be evaluated to $\mathcal{L}_{q_k}(\Lambda_k' \otimes I_{q_k})(I_{q_k^2}+K_{q_k})\mathcal{D}_{q_k}$, proof of which is provided in Appendix \ref{CholAppendix}.

Through similar arguments to those used to prove Corollaries \ref{covdldDkdBetaCor}-\ref{covdldDk1Dk2Cor} of Appendix \ref{InfoMatAppendix}, it can be shown that the Fisher Information matrix for $\theta^c$ is given by:
\begin{equation}\nonumber
    \mathcal{I}^c_{\beta}=\mathcal{I}^h_{\beta},\hspace{0.4cm}
    \mathcal{I}^c_{\beta,\sigma^2}=\mathcal{I}^h_{\beta,\sigma^2},\hspace{0.4cm}
    \mathcal{I}^c_{\sigma^2}=\mathcal{I}^h_{\sigma^2}.
\end{equation}

\noindent
For $k \in \{1,\hdots, r\}$:
\begin{equation}\label{CFSFI1}\begin{aligned}[b]
& \mathcal{I}^c_{\beta,\text{vech}(\Lambda_k)}=\mathbf{0}_{p,q_k(q_k+1)/2}, \\
& \mathcal{I}^c_{\sigma^2,\text{vech}(\Lambda_k)}= \mathcal{I}^h_{\sigma^2,\text{vech}(D_k)}\bigg(\frac{\partial\text{vech}(D_k)}{\partial\text{vech}(\Lambda_k)}\bigg)'.\\
\end{aligned}
\end{equation}

\noindent
For $k_1,k_2\in\{1,\hdots,r\}$:
\begin{equation}\label{CFSFI2}
  \begin{aligned}[b]
    &\mathcal{I}^c_{\text{vech}(\Lambda_{k_1}),\text{vech}(\Lambda_{k_2})}=\\ &\bigg(\frac{\partial \text{vech}(D_{k_1})}{\partial \text{vech}(\Lambda_{k_1})}\bigg)\mathcal{I}^h_{\text{vech}(D_{k_1}),\text{vech}(D_{k_2})}\bigg(\frac{\partial \text{vech}(D_{k_2})}{\partial \text{vech}(\Lambda_{k_2})}\bigg)'.
    \end{aligned}
\end{equation}

From the above, it can be seen that a non-``simplified" Cholesky-based variant of the Fisher Scoring algorithm, akin to the FS and FFS algorithms described in Sections \ref{FSsection} and \ref{FFSsection}, may be constructed. This would involve constructing the Fisher Information matrix of $\theta^c$, $\mathcal{I}(\theta^c)$, using (\ref{CFSFI1}) and (\ref{CFSFI2}), and employing a Fisher Scoring procedure similar to that specified by Algorithms \ref{FSalgorithm} and \ref{FFSalgorithm}. Whilst this approach is feasible, in terms of implementation, preliminary tests have indicated that the performance of this approach, in terms of computation time, is significantly worse than the previously proposed algorithms. For this reason, we only consider the ``simplified" version of the Cholesky Fisher Scoring algorithm, analogous to the ``simplified" approaches described in Sections \ref{SFSsection} and \ref{FSFSsection}, is considered here. The Cholesky Simplified Fisher Scoring (CSFS) algorithm adopts
(\ref{betaUpdate}) and (\ref{sigma2Update}) as update rules for $\beta$ and $\sigma^2$, respectively, and the following update rule for $\{\text{vech}(\Lambda_k)\}_{k \in \{1,...,r\}}$.
\begin{equation}\label{vechLambdaUpdate}\small{
    \text{vech}(\Lambda_{k,s+1})=\text{vech}(\Lambda_{k,s})+\alpha_s\big(\mathcal{I}^{c}_{\text{vech}(\Lambda_{k,s})}\big)^{-1}\frac{d l(\theta^c_s)}{d \text{vech}(\Lambda_{k,s})}}.
\end{equation}
Pseudocode summarizing the CSFS approach is given by Algorithm \ref{CSFSalgorithm}.
\vspace{0.3cm}
\IncMargin{1em}
\begin{algorithm}\label{CSFSalgorithm}
\SetAlgoLined
\footnotesize{
\caption{Cholesky Simplified Fisher Scoring (CSFS)}
\vspace{0.3cm}
 Assign $\theta^c$ to an initial estimate using (\ref{beta0}) and (\ref{D0})
 \BlankLine
 \While{current $l(\theta^c)$ and previous $l(\theta^c)$ differ by more than a predefined tolerance}{

\BlankLine
 Update $\beta$ using (\ref{betaUpdate})
\BlankLine
 Update $\sigma^2$ using (\ref{sigma2Update})
\BlankLine
\For{$k \in \{1,...r\}$}{
\BlankLine
Update vec($\Lambda_k$) using (\ref{vechLambdaUpdate})
\BlankLine
}
 Assign $\alpha=\frac{\alpha}{2}$ if $l(\theta^c)$ has decreased in value.
 \BlankLine}}
\BlankLine
\end{algorithm}
\DecMargin{1em}

\subsection{Initial values}\label{initval}

Choosing which initial values of $\beta$, $\sigma^2$ and $D$ will be used as starting points for optimization is an important consideration for the Fisher Scoring algorithm. Denoting these initial values as $\beta_0$, $\sigma^2_0$ and $D_0$, respectively, this work follows the recommendations of \citet{Demidenko:2013mfx} and evaluates $\beta_0$ and $\sigma^2_0$ using the OLS estimates given by,
\begin{equation}\label{beta0}
    \beta_0 = (X'X)^{-1}X'Y, \hspace{0.6cm}\sigma^2_0 = \frac{e_0'e_0}{n}.
\end{equation}
where $e_0$ is defined as $e_0=Y-X\beta_0$. For the initial estimate of $\{D_k\}_{k\in\{1,\dots ,r\}}$, an approach similar to that suggested in \cite{Demidenko:2013mfx} is also adopted, which substitutes $V$ for $I_n$ in the update rule for vec($D_k$), equation (\ref{vecDUpdate}). The resulting initial estimate for $\{D_k\}_{k\in\{1,\dots ,r\}}$ is given by
\begin{equation}\label{D0}
  \begin{aligned}[b]
    & \text{vec}(D_{k,0}) =& \bigg(\sum_{j=1}^{l_k}Z_{(k,j)}'Z_{(k,j)} \otimes Z_{(k,j)}'Z_{(k,j)}\bigg)^{-1} \times &  \\              & &  \text{vec}\bigg(\sum_{j=1}^{l_k}Z_{(k,j)}'\bigg(\frac{e_0e_0'}{\sigma_0^2}-I_n\bigg)Z_{(k,j)}\bigg) &
    \end{aligned}.
\end{equation}

\subsection{Ensuring non-negative definiteness}

In order to ensure numerical optimization produces a valid covariance matrix when performing LMM parameter estimation, optimization must be constrained to ensure that for each factor, $f_k$, $D_k$ is non-negative definite. This is a well-documented obstacle for numerical approaches to LMM parameter estimation for which many solutions have been suggested. A common approach, utilized in \citet{Bates:2015pls} for example, is to reparameterize optimization in terms of the Cholesky factor of the covariance matrix, $\Lambda_k$, as oppose to the covariance matrix itself, $D_k$. This approach is adopted by the CSFS method, described in Section \ref{CSFSsection}. However, the methods described in Sections \ref{FSsection}-\ref{FSFSsection} require an alternative approach as optimization does not account for the constraint that $D_k$ must be non-negative definite. For the FS, FFS, SFS and FSFS algorithms, we use an approach described by \citet{Demidenko:2013mfx}, for the single factor LMM. Denoting the eigendecomposition of $\{D_k\}_{k \in \{1,\hdots,r\}}$ as $D_k=U_k\Sigma_{k}U_k'$, we project $D_k$ onto the space of all $(q_k \times q_k)$ non-negative definite matrices by, following each update of the covariance matrix, $D$, replacing $\{D_k\}_{k \in \{1,..r\}}$ with $\{D_{+,k}\}_{k \in \{1,..r\}}$,
\begin{equation}\nonumber
    D_{+,k}=U_k\Sigma_{+,k}U_k',    
\end{equation}
where $\Sigma_{+,k}=\max(\Sigma_k,0)$ with `max' representing the element-wise maximum. For each factor, $f_k$, this ensures $D_k$, and as a result $D$, is non-negative definite.

\subsection{Computational efficiency}\label{compeff}

This section provides discussion on the computational efficiency of evaluating the Fisher Information matrices and score vectors of Sections \ref{FSsection}-\ref{CSFSsection}. This discussion is, in large part, motivated by the mass-univariate
setting (c.f. Section \ref{Intro1}), in which not one, but rather hundreds of thousands of models must be estimated concurrently. As a result, this discussion prioritizes both time efficiency and memory consumption concerns and, further, operates under the assumption that sparse matrix methodology, such as that employed by the R package `lmer', cannot be employed. This assumption has been made primarily as a result of the current lack of support for computationally vectorized sparse matrix operations. 

A primary concern, for both memory consumption and time efficiency, stems from the fact that many of the matrices used in the evaluation of the score vectors and Fisher Information matrices possess dimensions which scale with $n$, the number of observations. For example, $V$ has dimensions $(n \times n)$ and is frequently inverted in the FSFS algorithm. In practice, it is not uncommon for studies to have $n$ ranging into the thousands. As such, inverting $V$ directly may not be computationally feasible. To address this issue, we define the ``product forms" as;
\begin{equation}\nonumber
    P=X'X,\hspace{0.14cm} Q=X'Y,\hspace{0.14cm} R=X'Z,\hspace{0.14cm} S=Y'Y,\hspace{0.14cm} T=Y'Z,\hspace{0.14cm} U=Z'Z.
\end{equation}
Working with the product forms is preferable to using the original matrices $X,Y$ and $Z$, as the dimensions of the product forms do not scale with $n$. Instead, the dimensions of the product forms depend only on the number of fixed effects, $p$, and the second dimension of the random effects design matrix, $q$. As an example, consider the expressions below, which appear frequently in (\ref{FFSderiv1}) and (\ref{FFSF2}) and have been reformulated in terms of the product forms:
\begin{equation}\nonumber 
\begin{aligned}[b]
    & Z_{(k_1,i)}'V^{-1}Z_{(k_2,j)} = (U - UD(I_q+DU)^{-1}U)_{[(k_1,i),(k_2,j)]},\\
     & Z_{(k,j)}'V^{-1}e = ((I_q - UD)(I_q+DU)^{-1}(T'-R'\beta))_{[(k,j),:]}.\\
    \end{aligned}
\end{equation}

For computational purposes, the right-hand side of the above expressions are much more convenient than the left-hand side. In order to evaluate the left-hand side in both cases, an $(n \times n)$ inversion of the matrix $V$ must be performed. In contrast, the right-hand side is expressible purely in terms of the product forms and $(\beta, \sigma^2, D)$, with the only inversion required being that of the $(q \times q)$ matrix $(I_q+DU)$. These examples can be generalized further. In fact, all of the previous expressions (\ref{FSderiv1})-(\ref{D0}) can be rewritten in terms of only the product forms and $(\beta, \sigma^2, D)$. This observation is important as it implies that an algorithm for mixed model parameter estimation may begin by taking the matrices $X, Y$ and $Z$ as inputs, but discard them entirely once the product forms have been constructed. As a result, both computation time and memory consumption no longer scale with $n$.

Another potential source of concern regarding computation speed arises from noting that the algorithms we have presented contain many summations of the following two forms:
\begin{equation}\label{form1}
\sum_{i=1}^{c_0} A_iB_i'  \hspace{0.2cm}\text{ and } \hspace{0.2cm}
\sum_{i=1}^{c_1}\sum_{i=1}^{c_2} G_{i,j} \otimes H_{i,j}.
\end{equation}
where matrices $\{A_i\}$ and $\{B_i\}$ are of dimension $(m_1 \times m_2)$, and matrices $\{G_{i,j}\}$ and $\{H_{i,j}\}$ are of dimension $(n_1 \times n_2)$. We denote the matrices formed from vertical concatenation of the $\{A_i\}$ and $\{B_i\}$ matrices as $A$ and $B$, respectively, and $G$ and $H$ the matrices formed from block-wise concatenation of the $\{G_{i,j}\}$ and $\{H_{i,j}\}$, respectively. Instances of such summations can be found, for example, in equations (\ref{FSderiv3}) and (\ref{FSFI3}). 

From a computational standpoint, summations of the forms shown in (\ref{form1}) are typically realized by ‘for’ loops. This cumulative approach to computation can cause a potential issue for LMM computation since typically the number of summands corresponds to the number of levels, $l_k$, of some factor, $f_k$. In particular applications of the LMM, such as repeated measures and longitudinal studies, some factors may possess large quantities of levels. As this means `for' loops of this kind could hinder computation and result in slow performance, we provide alternative methods for calculating summations of the forms shown in (\ref{form1}).

For the former summation shown in (\ref{form1}), we utilize the ``generalized vectorisation", or ``vec$_m$" operator, defined by \citet{Turkington2013} as the operator which performs the below mapping for a horizontally partitioned matrix $M$:
\begin{equation}\nonumber
M = \begin{bmatrix} M_{1} & M_{2} & \hdots M_{c_0} \end{bmatrix} \rightarrow \text{vec}_{m}(M) = \begin{bmatrix} M_{1} \\ M_{2} \\ \vdots \\ M_{c_0}\end{bmatrix},
\end{equation}
where the partitions $\{M_i\}$ are evenly sized and contain $m$ columns. Using the definition of the generalized vectorisation operator, the former summation in (\ref{form1}) can be reformulated as:
\begin{equation}\nonumber
    \sum_{i=1}^l A_iB_i'=\text{vec}_{m_2}(A')'\text{vec}_{m_2}(B').
\end{equation}

The right-hand side of the expression above is of practical utility as the `vec$_m$' operator can be implemented efficiently. The `vec$_m$' operation can be performed, for instance, using matrix resizing operations such as the `reshape' operators commonly available in many programming languages such as MATLAB and Python. The computational costs associated with this approach are significantly lesser than those experienced when evaluating the summation directly using `for' loops.

For the latter summation in (\ref{form1}), we first define the notation $\tilde{M}$ to denote the transformation below, for the block-wise partitioned matrix $M$:

\begin{equation}\label{tildeTransform} M = \begin{bmatrix} M_{1,1} & M_{1,2} & ... & M_{1,c_2} \\ M_{2,1} & M_{2,2} & ... & M_{2,c_2} \\ \vdots & \vdots & \ddots & \vdots \\ M_{c_1,1} & M_{c_1,2} & ... & M_{c_1,c_2} \end{bmatrix} \rightarrow \tilde{M} = \begin{bmatrix} \text{vec}(M_{1,1})' \\ \text{vec}(M_{1,2})' \\ \vdots \\ \text{vec}(M_{1,c_2})' \\ \text{vec}(M_{2,1})' \\ \vdots \\ \text{vec}(M_{c_1,c_2})'\end{bmatrix}.
\end{equation}

Using a modification of Lemma 3.1 (i) of \citet{Neudecker1983}, we obtain the below identity:
\begin{equation}\label{NeudMod}
    \text{vec}\bigg(\sum_{i,j} G_{i,j}\otimes H_{i,j}\bigg) = (I_{n_2} \otimes K_{n_1,n_2} \otimes I_{n_1})\text{vec}(\tilde{H}'\tilde{G}),
\end{equation}
where $K_{n_1,n_2}$ is the $(n_1n_2\times n_1n_2)$ Commutation matrix (c.f. Section \ref{NotationSec}). The matrices $\tilde{H}$ and $\tilde{G}$ can be obtained from $H$ and $G$ using resizing operators with little computation time. The matrix $(I_{n_2} \otimes K_{n_1,n_2} \otimes I_{n_1})$ can be calculated via simple means and is a permutation matrix depending only on the dimensions of $H$ and $G$. As a result, this matrix can be calculated once and stored as a vector. Therefore, the matrix multiplication in the above expression does not need to be performed directly, but instead can be evaluated by permuting the elements of vec$(\tilde{H}'\tilde{G})$ according to the permutation vector representing $(I_{n_2} \otimes K_{n_1,n_2} \otimes I_{n_1})$. To summarize, evaluation of expressions of the latter form shown in (\ref{form1}) can be performed by using only reshape operations, a matrix multiplication, and a permutation. This method of evaluation provides notably faster computation time than the corresponding `for' loop evaluated over all values of $i$ and $j$.  

The above method, combined with the product form approach, was used to obtain the results of Sections \ref{SimMeth}-\ref{RealDatRes}.

\subsection{Constrained covariance structure}\label{covstruct}

In many applications involving the LMM, it is often desirable to use a constrained parameterisation for $D_k$. Examples include compound symmetry, first-order auto-regression and a Toeplitz structure. A more comprehensive list of commonly employed covariance structures in LMM analyses can be found, for example, in \citet{Wolfinger1996}. In this section, we describe how the Fisher Scoring algorithms of the previous sections can be adjusted to model dependence between covariance elements.

When a constraint is placed on the covariance matrix $D_k$, it is assumed that the elements of $D_k$ can be defined as continuous, differentiable functions of some smaller parameter vector, vecu$(D_k)$. Colloquially, vecu$(D_k)$ may be thought of as the vector of ``unique" parameters required to specify the constrained parameterization of $D_k$. To perform constrained optimization, we adopt a similar approach to the previous sections and define the constrained representation of $\theta$ as $\theta^{con}=[\beta', \sigma^2, \text{vecu}(D_1)',...\text{vecu}(D_k)']'$. Denoting the Jacobian matrix $\partial$vec$(D_k)/\partial$vecu$(D_k)$ as $\mathcal{C}_k$, the score vector and Fisher Information matrix of $\theta^{con}$ can be constructed as follows; for $k \in \{1,\hdots, r\}$,
\begin{equation}\label{derivCk}
\begin{aligned}
& \frac{dl(\theta^{con})}{d\text{vecu}(D_k)}=\mathcal{C}_k\frac{\partial l(\theta^{con})}{\partial\text{vec}(D_k)}, \\
& \mathcal{I}^{con}_{\beta,\text{vecu}(\tilde{D}_k)}=\mathbf{0}_{p,\tilde{q}_k}, \hspace{0.5cm} \mathcal{I}^{con}_{\sigma^2,\text{vecu}(\tilde{D}_k)}= \mathcal{I}^f_{\sigma^2,\text{vec}(D_k)}\mathcal{C}_k', \\
\end{aligned}
\end{equation}
\noindent
and, for $k_1,k_2\in\{1,\hdots,r\}$,
\begin{equation}\nonumber
    \mathcal{I}^{con}_{\text{vecu}(\tilde{D}_{k_1}),\text{vecu}(\tilde{D}_{k_2})}= \mathcal{C}_{k_1}\mathcal{I}^f_{\text{vec}(D_{k_1}),\text{vec}(D_{k_2})}\mathcal{C}_{k_2}',
\end{equation}
where $\tilde{q}_k$ is the length of vecu$(D_k)$ and $\mathcal{I}^f$ is the ``full" Fisher Information matrix defined in Section \ref{FFSsection}. The above expressions can be derived trivially using the definition of the Fisher Information matrix and the chain rule. In the remainder of this work, the matrix $\mathcal{C}_k$ is referred to as a ``constraint matrix" due to the fact it ``imposes" constraints during optimization. A fuller discussion of constraint matrices, alongside examples, is provided in Appendix \ref{conApp}. 

We note here that this approach can be extended to situations in which all covariance parameters can be expressed in terms of one set of parameters, $\rho_D$, common to all $\{\text{vec}(D_k)\}_{k \in \{1,\hdots, r\}}$. In such situations, a constraint matrix, $\mathcal{C}$, may be defined as the Jacobian matrix $\partial[\text{vec}(D_1)',...\text{vec}(D_1)']'/\partial\rho_D$ and Fisher Information matrices and score vectors may be derived in a similar manner to the above. Models requiring this type of commonality between factors are rare in the LMM literature, since the covariance matrices $\{D_k\}_{k \in \{1,...r\}}$ are typically defined independently of one another. However, one such example is provided by the ACE model employed for twin studies, in which the elements of $v(D)$ can all be expressed in terms of three variance components; $\rho_D=[\sigma^2_a, \sigma^2_c, \sigma^2_e]'$. Further information on the ACE model is presented in Section \ref{AceExample}.

\subsection{Degrees of freedom estimation}\label{swsection}

Several methods exist for drawing inference on the fixed effects parameter vector, $\beta$. Often, research questions can be expressed as hypotheses of the below form:
\begin{equation}\nonumber
    H_{0}: L\beta = 0, \hspace{0.5cm} H_{1}: L\beta \neq 0,
\end{equation}
where $L$ is a fixed and known $(1 \times p)$-sized contrast matrix specifying a hypothesis, or prior belief, about linear relationships between the elements of $\beta$, upon which an inference is to be made. In the setting of the LMM, a commonly employed statistic for testing hypotheses of this form is the approximate T-statistic, given by:
\begin{equation}\nonumber
    T = \frac{L\hat{\beta}}{\sqrt{\hat{\sigma}^2L(X'\hat{V}^{-1}X)^{-1}L'}},
\end{equation}
where $\hat{V}=I_n+Z\hat{D}Z'$. As noted in \cite{Dempster1981}, using an estimate of $D$ in this manner results in an underestimation of the true variability in $\hat{\beta}$. For this reason, the relation $T \sim t_{v}$ is not exact and is treated only as an approximating distribution (i.e. an ``approximate T statistic'') where the degrees of freedom, $v$, must be estimated empirically. A standard method for estimating the degrees of freedom, $v$, utilizes the Welch-Satterthwaite equation, originally described in \cite{Satterthwaite1946} and \cite{Welch1947}, given by: 
\begin{equation}\label{swdf} v(\hat{\eta})=\frac{2(S^2(\hat{\eta}))^2}{\text{Var}(S^2(\hat{\eta}))},
\end{equation}
where $\hat{\eta}$ represents an estimate of the variance parameters $\eta=(\sigma^2,D_1,\hdots D_r)$ and $S^2(\hat{\eta})$ is given by:
\begin{equation}\nonumber
S^2(\hat{\eta})=\hat{\sigma}^2L(X'\hat{V}^{-1}X)^{-1}L'.
\end{equation}
Typically, as the variance estimates obtained under ML estimation are biased downwards, ReML estimation is employed to obtain the estimate of $\eta$ employed in the above expression. If ML were used to estimate $\eta$ instead, the degrees of freedom, $v(\hat{\eta})$, would be underestimated and, consequently, conservative p-values and a reduction in statistical power for resulting hypothesis tests would be observed.

To obtain an approximation for $v(\hat{\eta})$, a second order Taylor expansion is applied to the unknown variance on the denominator of (\ref{swdf}):
\begin{equation}\label{swdf2}\text{Var}(S^2(\hat{\eta})) \approx \bigg(\frac{d S^2(\hat{\eta})}{d \hat{\eta}}\bigg)'\text{Var}(\hat{\eta})\bigg(\frac{d S^2(\hat{\eta})}{d \hat{\eta}}\bigg).
\end{equation}

This approach is well-documented, and has been adopted, most notably, by the R package lmerTest, first presented in \cite{Kuznetsova2017}. To obtain the derivative of $S^2$ and asymptotic variance covariance matrix, lmerTest utilizes numerical estimation methods. As an alternative, we define the ``half" representation of $\hat{\eta}$, $\hat{\eta}^h$, by $\hat{\eta}^h=[\hat{\sigma}^2, \text{vech}(\hat{D}_1)',\hdots,\text{vech}(\hat{D}_r)']'$ and present the below exact closed-form expression for the derivative of $S^2$ in terms of $\hat{\eta}^h$;
\begin{equation}\nonumber\begin{aligned}
& \frac{d S^2(\hat{\eta}^h)}{d \hat{\sigma}^2}=L(X'\hat{V}^{-1}X)^{-1}L',\\
& \frac{d S^2(\hat{\eta}^h)}{d  \text{vech}(\hat{D}_k)} = 
\hat{\sigma}^{2}\mathcal{D}_{q_k}'\bigg(\sum_{j=1}^{l_k}\hat{B}_{(k,j)}\otimes \hat{B}_{(k,j)}\bigg),
\end{aligned}
\end{equation}
\\
\noindent
where $\hat{B}_{(k,j)}$ is given by:
\begin{equation}\nonumber
    \hat{B}_{(k,j)} = Z_{(k,j)}'\hat{V}^{-1}X(X'\hat{V}^{-1}X)^{-1}L'.
\end{equation}
We obtain an expression for var$(\hat{\eta}^h)$ by noting that the asymptotic variance of $\hat{\eta}^h$ is given by $\mathcal{I}(\hat{\eta}^h)^{-1}$ where $\mathcal{I}(\hat{\eta}^h)$ is a sub-matrix of $\mathcal{I}(\hat{\theta}^h)$, given by equations (\ref{FSFI1})-(\ref{FSFI3}).

In summary, we have provided all the closed-form expressions necessary to perform the Satterthwaite degrees of freedom estimation method for any LMM described by (\ref{LMMdef}) and (\ref{LMMdef2}). For the remainder of this work, estimation of $v$ using the above expressions is referred to as the ``direct-SW'' method. This name reflects the direct approach taken for the evaluation of the right-hand side of (\ref{swdf2}). We note that this approach may also be extended to models using the constrained covariance structures of Section \ref{covstruct} by employing the Fisher Information matrix given by equation \ref{derivCk} and transforming the derivative of $S^2(\eta)$ appropriately (details of which are provided by Theorem \ref{s2derivthm2} of Appendix \ref{swproof}). We conclude this section by noting that this method can also be used in a similar manner for estimating the degrees of freedom of an approximate F-statistic based on the multi-factor LMM, as described in Appendix \ref{Fsect}.


\subsection{Simulation methods}\label{SimMeth}

To assess the accuracy and efficiency of each of the proposed LMM parameter estimation methods described in Sections \ref{FSsection}-\ref{CSFSsection} and the direct-SW degrees of freedom estimation method described in Section \ref{swsection}, extensive simulations were conducted. The methods used to perform these simulations are described in this section, with the results of the simulations presented in Section \ref{SimRes}. All reported results were obtained using an Intel(R) Xeon(R) Gold 6126 2.60GHz processor with 16GB RAM.

\subsubsection{Parameter estimation}\label{paramEstSimsection}

The algorithms of Sections \ref{FSsection}-\ref{CSFSsection} have been implemented in the programming language Python. Results taken across three simulation settings, each with a different design structure, are provided for parameter estimation performed using each of the algorithms. The Fisher Scoring algorithms presented in this work are compared against one another, the R package lmer, and the baseline truth used to generate the simulations. All methods are contrasted in terms of output, computation time and, for the methods presented in this paper, the number of iterations until convergence. Convergence of each method was assessed by verifying whether successive log-likelihood estimates differed by more than a predefined tolerance of $10^{-6}$. $1000$ individual simulation instances were run for each simulation setting.

All model parameters were held fixed across all runs. In every simulation setting, test data was generated according to model (\ref{LMMdef}). Each of the three simulation settings imposed a different structure on the random effects design and covariance matrices, $Z$ and $D$. The first simulation setting employed a single factor design ($r = 1$) with two random effects (i.e. $q_1 = 2$) and $50$ levels (i.e. $l_1 = 50$). The second simulation setting employed two crossed factors ($r=2$) where the number of random effects and numbers of levels for each factor were given by $q_1=3$, $q_2 =2$, $l_1=100$ and $l_2=50$, respectively. The third simulation setting used three crossed factors (i.e. $r= 3$) with the number of random effects and levels for each of the factors given by $q_1=4$, $q_2 =3$, $q_3 =2$, $l_1=100$, $l_2=50$ and $l_3=10$, respectively. In all simulations, the number of observations, $n$, was held fixed at $1000$. In each simulated design, the first column of the fixed effects design matrix, $X$, and the first random effect in the random effects design matrix, $Z$, were treated as intercepts. Within each simulation instance, the remaining (non-zero) elements of the variables $X$ and $Z$, as well as those of the error vector $\epsilon$, were generated at random according to the standard univariate normal distribution. The random effects vector $b$ was simulated according to a normal distribution with covariance $D$, where $D$ was predefined, exhibited no particular constrained structure and contained a mixture of both zero and non-zero off-diagonal elements. The assignment of observations to levels for each factor was performed at random with the probability of an observation belonging to any specific level held constant and uniform across all levels. 

Fisher Scoring methods for the single-factor design have been well-studied (c.f. \citet{Demidenko:2013mfx}) and the inclusion of the first simulation setting is only for comparison purposes. The second and third simulation settings represent more complex crossed factor designs for which the proposed Fisher Scoring based parameter estimation methods did not previously exist. To assess the performance of the proposed algorithms, the Mean Absolute Error ($MAE$) and Mean Relative Difference ($MRD$) were used as performance metrics. The methods considered for parameter estimation were those described in Sections \ref{FSsection}-\ref{CSFSsection} and the baseline truth used for comparison was either the baseline truth used to generate the simulated data or the lmer computed estimates.

All methods were contrasted in terms of the $MAE$ and $MRD$ of both $\beta$ and the variance product $\sigma^2D$. The variance product $\sigma^2D$ was chosen for comparison instead of the individual components $\sigma^2$ and $D$ as the variance product $\sigma^2D$ is typically of intrinsic interest during the inference stage of conventional statistical analyses and is often employed for further computation. For each of the methods proposed in Sections \ref{FSsection}-\ref{CSFSsection}, both ML and ReML estimation variants of the methods were assessed, where the ReML variants of each method were performed according to the adjustments detailed in Appendix \ref{remlApp}. The computation time for each method was also recorded and contrasted against lmer. To ensure a fair comparison of computation time, the recorded computation times for lmer were based only on the performance of the `optimizeLmer' function, which is the module employed for parameter estimation by lmer.

The specific values of $\beta, \sigma^2,$ and $D$ employed for each simulation setting can be found in Section $S1$ of the Supplementary Material. Formal definitions of MAE and MRD measures used for comparison are given in Section $S2$ of the Supplementary Material.

\subsubsection{Degrees of freedom estimation}\label{dfmeth}

The accuracy and validity of the direct-SW degrees of freedom estimation method proposed in Section \ref{swsection} were assessed through further simulation. To achieve this, as LMM degrees of freedom are assumed to be specific to the experiment design, a single design was chosen at random from each of the three simulation settings described in Section \ref{paramEstSimsection}. With the fixed effects and random effects matrices, $X$ and $Z$, now held constant across simulations, $1000$ simulations were run for each simulation setting. The random effects vector, $b$, and random error vector, $\epsilon$, were allowed to vary across simulations according to normal distributions with the appropriate variances. In each simulation, degrees of freedom were estimated via the direct-SW method for a predefined contrast vector, corresponding to a fixed effect that was truly zero.

The direct-SW estimated degrees of freedom were compared to a baseline truth and the degrees of freedom estimates produced by the R package lmerTest. Baseline truth was established in each simulation setting using  $1,000,000$ simulations to empirically estimate $\text{Var}(S^2(\hat{\eta}))$, giving a single value for the denominator of $v(\hat{\eta})$ from Equation (\ref{swdf}). Following this, in each of the $1000$ simulation instances described above, the numerator of (\ref{swdf}) was recorded, giving $1000$ estimates of $v(\hat{\eta})$. The final estimate was obtained as an average of these $1000$ values. All lmerTest degrees of freedom estimates were obtained using the same simulated data as was used by the direct-SW method and were computed using the `contest1D' function from the lmerTest package. Whilst all baseline truth measures were computed using parameter estimates obtained using the FSFS algorithm of Section \ref{FSFSsection}, we believe this has not induced bias into the simulations as, as discussed in Section \ref{MainSimRes}, all simulated FSFS-derived parameter estimates considered in this work agreed with those provided by lmer within a tolerance level on the scale of machine error.

\subsection{Real data methods}\label{RealDatMeth}

To illustrate the usage of the methodology presented in Sections \ref{AlgoSection}-\ref{swsection} in practical situations, we provide two real data examples. These examples are described fully in this section and the results are presented in Section \ref{RealDatRes}. Again, all reported results were obtained using an Intel(R) Xeon(R) Gold 6126 2.60GHz processor with 16GB RAM. For each reported computation time, averages were taken across $50$ repeated runs of the given analysis.

\subsubsection{The SAT score example}\label{SATexample}

The first example presented here is based on data from the longitudinal evaluation of school change and performance (LESCP) dataset (\citet{Turnbull1999}). This dataset has notably been previously analyzed by  \citet{Raudenbush2008} and was chosen for inclusion in this work because it previously formed the basis for between-software comparisons of LMM software packages by \citet{west2014linear}. The LESCP study was conducted in $67$ American schools in which SAT (Student Aptitude Test) math scores were recorded for randomly selected samples of students. As in \citet{west2014linear}, one of the $67$ schools from this dataset was chosen as the focus for this analysis. For each individual SAT score, unique identifiers (ID's) for the student who took the test and for the teacher who prepared the student for the test were recorded. As many students were taught by multiple teachers and all teachers taught multiple students, the grouping of SAT scores by student ID and the grouping of SAT scores by teacher ID constitute two crossed factors. In total, $n=234$ SAT scores were considered for analysis. The SAT scores were taken from $122$ students taught by a total of $12$ teachers, with each student sitting between $1$ and $3$ tests.

The research question in this example concerns how well a student's grade (i.e. year of schooling) predicted their mathematics SAT score (i.e. did students improve in mathematics over the course of their education?). For this question, between-student variance and between-teacher variance must be taken into consideration as different students possess different aptitudes for mathematics exams and different teachers possess different aptitudes for teaching mathematics. In practice, this is achieved by employing an LMM which includes random intercept terms for both the grouping of SAT scores according to student ID and the grouping of SAT scores according to teacher ID. For the $k^{th}$ mathematics SAT test taken by the $i^{th}$ student, under the supervision of the $j^{th}$ teacher, such a model could be stated as follows:
\begin{equation}\nonumber
    \text{MATH}_{i,j,k} = \beta_0 + \beta_1 \times \text{YEAR}_{i,j,k} + s_i + t_j + \epsilon_{i,j,k},
\end{equation}
where  MATH$_{i,j,k}$ is the SAT score achieved and YEAR$_{i,j,k}$ is the grade of the student at the time the test was taken. In the above model, $\beta_0$ and $\beta_1$ are unknown parameters and $s_i$, $t_j$ and $\epsilon_{i,j,k}$ are independent mean-zero random variables which differ only in terms of their covariance. $s_i$ is the random intercept which models between-student variance, $t_j$ is the random intercept which models between-teacher variance, and $\epsilon_{i,j,k}$ is the random error term. The random variables $s_i$, $t_j$ and $\epsilon_{i,j,k}$ are assumed to be mutually independent and follow the below distributions:
\begin{equation}\nonumber
\begin{aligned}[b]
    & s_i \sim N(0, \sigma^2_s), \hspace{1cm} t_j \sim N(0, \sigma^2_t), \\
    & \epsilon_{i,j,k} \sim N(0,\sigma^2),
\end{aligned}
\end{equation}
where the parameters $\sigma^2_s$, $\sigma^2_t$ and $\sigma^2$ are the unknown student, teacher and residual variance parameters, respectively.

The random effects in the SAT score model can be described using the notation presented in  previous sections as follows; $r=2$ (i.e. observations are grouped by two factors, student ID and teacher ID), $q_1=q_2=1$ (i.e. one random effect is included for each factor, the random intercepts $s_i$ and $t_j$, respectively), and $l_1=122,l_2=12$ (i.e. there are $122$ students and $12$ teachers). When the model is expressed in the form described by equations (\ref{LMMdef}) and (\ref{LMMdef2}), the random effects design matrix $Z$ is a $0-1$ matrix. In this setting, the positioning of the non-zero elements in $Z$ indicates the student and teacher associated with each test score. The random effects covariance matrices for the two factors, student ID and teacher ID, are given by $D_0=[\sigma^2_s]$ and $D_1=[\sigma^2_t]$, respectively.

For the SAT score model, the estimated parameters obtained using each of the methods detailed in Sections \ref{FSsection}-\ref{CSFSsection} are reported. For comparison, the parameter estimates obtained by the R package lmer are also given. As the estimates for the fixed effects parameters, $\beta_0$ and $\beta_1$, are of primary interest, ML was employed to obtain all reported parameter estimates. For this example, methods are contrasted in terms of output, computation time and the number of iterations performed.

\subsubsection{The twin study example}\label{AceExample}

The second example presented in this work aims to demonstrate the flexibility of the constrained covariance approaches described in Section \ref{covstruct}. This example is based on the ACE model; an LMM commonly employed to analyze the results of twin studies by accounting for the complex covariance structure exhibited between related individuals. The ACE model achieves this by separating between-subject response variation into three categories: variance due to additive genetic effects ($\sigma^2_a$), variance due to common environmental factors ($\sigma^2_c$), and residual error ($\sigma^2_e$). 
For this example, we utilize data from the Wu-Minn Human Connectome Project (HCP) (\citet{Essen2013}). The HCP dataset contains brain imaging data collected from $1,200$ healthy young adults, aged between $22$ and $35$, including data from $300$ twin pairs and their siblings. We do not make use the imaging data in the HCP dataset but, instead, focus on the baseline variables for cognition and alertness.

The primary research question considered in this example focuses on how well a subject's quality of sleep predicts their English reading ability. The variables of primary interest used to address this question are subject scores in the Pittsburgh Sleep Quality Index (PSQI) questionnaire and an English language recognition test (ENG).  Other variables included the subjects' age in years (AGE) and sex (SEX), as well as an age-sex interaction effect. A secondary research question considered asks ``How much of the between-subject variance observed in English reading ability test scores can be explained by additive genetic and common environmental factors?''. To address this question, the covariance parameters $\sigma^2_a$, $\sigma^2_c$ and $\sigma^2_e$ must be estimated.

To model the covariance components of the ACE model, a simplifying assumption that all family units share common environmental factors was made. Following the work of \cite{Winkler2015}, family units were first categorized by their internal structure into what shall be referred to as ``family structure types'' (i.e. unique combinations of full-siblings, half-siblings and identical twin pairs which form a family unit present in the HCP dataset). In the HCP dataset, $19$ such family structure types were identified. In the following text, each family structure type shall be treated as a factor in the model. For the $i^{th}$ observation to belong to the $j^{th}$ level of the $k^{th}$ factor in the model may be interpreted as the $i^{th}$ subject belonging to the $j^{th}$ family exhibiting family structure of type $k$. The model employed for this example is given by:

\begin{equation}\nonumber
\begin{aligned}
    \text{ENG}_{k,j,i} = & \beta_0 + \beta_1 \times \text{AGE}_{i} + \beta_2 \times \text{SEX}_{i} + \hdots & \\ 
    &  \beta_3 \times \text{AGE}_{i} \times \text{SEX}_{i} + \beta_4 \times \text{PSQI}_{i} + ... \\
    & \gamma_{k,j,i} + \epsilon_{k,j,i}, & 
\end{aligned}
\end{equation}
where both $\gamma_{k,j,i}$ and $\epsilon_{k,j,i}$ are mean-zero random variables. The random error, $\epsilon_{k,j,i}$, is defined by;
\begin{equation}\nonumber
\begin{aligned}[b]
    & \epsilon_{k,j,i} \sim N(0,\sigma^2_e), \\
\end{aligned}
\end{equation}
and the random term $\gamma_{k,j,i}$ models the within-``family unit" covariance. Further detail on the specification of $\gamma_{k,j,i}$ can be found in Appendix (\ref{GammaApp}).

In the notation of the previous sections, the number of factors in the ACE model, $r$, is equal to the number of family structure types present in the model (i.e. $19$). For each family structure type present in the model, family structure type $k$, $l_k$ is the number of families who exhibit such structure and $q_k$ is the number of subjects present in any given family unit with such structure. As there is a unique random effect (i.e. a unique random variable, $\gamma_{k,j,i}$) associated with each individual subject, none of which are scaled by any coefficients, the random effects design matrix, $Z$, is the $(n \times n)$ identity matrix. To describe $\{D_k\}_{k \in \{1,...,r\}}$ requires the known matricies $\mathbf{K}^a_k$ and $\mathbf{K}^c_k$, which specify the kinship (expected genetic material) and environmental effects shared between individuals, respectively (See Appendix \ref{KinApp} for more details). Given $\mathbf{K}^a_k$ and $\mathbf{K}^c_k$, the covariance components $\sigma^2$ and $\{D_k\}_{k \in \{1,...,r\}}$ are given as $\sigma^2=\sigma^2_e$ and $D_k =  \sigma^{-2}_e(\sigma^2_a\mathbf{K}^a_k + \sigma^2_c\mathbf{K}^c_k)$ respectively.

As the covariance components $\sigma_a$ and $\sigma_c$ are of practical interest in this example, optimization is performed according to the ReML criterion using the adjustments described in Appendix \ref{remlApp} and covariance structure is constrained via the methods outlined in Section \ref{covstruct}. Further detail on the constrained approach for the ACE model can be found in Appendix \ref{AceConApp}. Discussion of computational efficiency for the ACE model is also provided in Appendix \ref{ACEcompApp}.

Section \ref{AceRes} reports the maximized restricted log-likelihood values and parameter estimates obtained using the Fisher Scoring method. Also given are approximate T-statistics for each fixed effects parameter, alongside corresponding degrees of freedom estimates and p-values obtained via the methods outlined in Section \ref{swsection}. To verify correctness, the restricted log-likelihood of the ACE model was also maximized numerically using the implementation of Powell’s bi-directional search based optimization method (\cite{Powell1964}) provided by the SciPy Python package. The maximized restricted log-likelihood values and parameter estimates produced were then contrasted against those provided by the Fisher Scoring method. The OLS (Ordinary Least Squares) estimates, which would be obtained had the additive genetic and common environmental variance components not been accounted for in the LMM analysis, are also provided for comparison. 

\section{Results}
\subsection{Simulation results}\label{SimRes}

\subsubsection{Parameter estimation results}\label{MainSimRes}

The results of the parameter estimation simulations of \ref{paramEstSimsection} were identical. In all three settings, all parameter estimates and maximised likelihood criteria produced by the FS, FFS, SFS and FSFS methods were identical to those produced by lmer up to a machine error level of tolerance. In terms of the maximisation of likelihood criteria, there was no evidence to suggest that lmer outperformed any of the FS, FFS, SFS or FSFS methods, or vice versa. The CSFS method provided worse performance, however, with maximised likelihood criteria which were lower than those reported by lmer in approximately $2.5\%$ of the simulations run for simulation setting $3$. The reported maximised likelihood criteria for these simulations were all indicative of convergence failure. For each parameter estimation method considered during simulation, the observed performance was unaffected by the choice of likelihood estimation criteria (i.e. ML or ReML) employed.

The $MRD$ and $MAE$ values provided evidence for strong agreement between the parameter estimates produced by lmer and the FS, FFS, SFS and FSFS methods. The largest $MRD$ values observed for these methods, taken relative to lmer, across all simulations and likelihood criteria, were $1.03 \times 10^{-3}$ and $2.12 \times 10^{-3}$ for $\beta$ and $\sigma^2D$, respectively. The largest observed $MAE$ values for these methods, taken relative to lmer, across all simulations, were given by$1.02 \times 10^{-5}$ and $4.30 \times 10^{-4}$ for $\beta$ and $\sigma^2D$, respectively. These results provide clear evidence that the discrepancies between the parameter estimates produced by lmer and those produced by the FS, FFS, SFS and FSFS methods are extremely small in magnitude. Due to the extremely small magnitudes of these differences, $MAE$ and $MRD$ values are not reported in further detail here. For more detail, see Supplementary Material Sections $S3$-$S6$.

To compare the proposed methods in terms of computational performance, Tables \ref{perfTable} and \ref{perfTable2} present the average time and number of iterations performed for each of the $5$ methods, using the ML and ReML likelihood criteria, respectively. Corresponding computation times are also provided for lmer. For both the ML and ReML likelihood criteria, all Fisher Scoring methods demonstrated considerable efficiency in terms of computation speed. While the improvement in speed is minor for simulation setting $1$, for multi-factor simulation settings $2$ and $3$, the performance gains can be seen to be considerable. Overall, the only method that consistently demonstrated notably worse performance than the others was the CSFS algorithm. In addition to requiring a longer computation time, the CSFS algorithm employed many more iterations. 

Within a setting where $q$ (i.e. the number number of columns in the random effects design matrix) is small, the results of these simulations demonstrate strong computational efficiency for the FS, FFS, SFS and FSFS methods. However, we emphasise that no claim is made to suggest that the observed results will generalise to larger values of $q$ or all possible designs. For large sparse LMMs that include a large number of random effects, without further adaptation of the methods presented in this work to employ sparse matrix methodology, it is expected that lmer will provide superior performance. We again emphasise here that our purpose in undertaking this work is not to compete with the existing LMM parameter estimation tools. Instead, the focus of this work is to develop methodology which provides performance comparable to the existing tools but can be used in situations in which the existing tools may not be applicable due to practical implementation considerations.
\afterpage{
\begin{table*}[hbt!]\footnotesize
\centering
\begin{tabular*}{\textwidth}{c @{\extracolsep{\fill}}cccccc@{}}
 \hline
 \textbf{Method} & \textbf{FS} & \textbf{FFS} & \textbf{SFS} & \textbf{FSFS} & \textbf{CSFS} & \textbf{lmer} \\ 
 \hline
    & & & & & & \\
   \textit{Simulation $1$} & & & & & & \\
    & & & & & & \\
   $t$ (Time/s) & $0.041$ & $0.037$ & $0.026$ & $0.055$ & $0.057$ & $0.059$ \\
    & $(0.012)$ & $(0.011)$ & $(0.007)$ & $(0.058)$ & $(0.017)$ & $(0.021)$ \\  
   $n_{it}$ (No. of iterations) & $5.243$ & $5.243$ & $6.048$ & $6.048$ & $10.461$ & - \\
    & $(0.613)$ & $(0.613)$ & $(0.256)$ & $(0.256)$ & $(1.851)$ & - \\ 
    & & & & & & \\
 \hline
    & & & & & & \\
   \textit{Simulation $2$} & & & & & &  \\
    & & & & & & \\
   $t$ (Time/s) & $0.129$ & $0.112$ & $0.105$ & $0.125$ & $0.220$ & $0.648$ \\
    & $(0.047)$ & $(0.042)$ & $(0.036)$ & $(0.061)$ & $(0.070)$ & $(0.050)$ \\  
   $n_{it}$ (No. of iterations) & $6.694$ &  $6.694$ & $8.323$ & $8.323$ & $14.281$ & - \\
    & $(0.964)$ & $(0.964)$ & $(0.534)$ & $(0.534)$ & $(1.381)$ & - \\ 
    & & & & & & \\
 \hline
    & & & & & & \\
   \textit{Simulation $3$} & & & & & & \\
   & & & & & & \\
   $t$ (Time/s) & $1.081$ & $0.889$ & $1.872$ & $1.941$ & $3.970$ & $5.979$ \\
    & $(0.434)$ & $(0.429)$ & $(0.915)$ & $(0.869)$ & $(1.094)$ & $(0.299)$ \\  
   $n_{it}$ (No. of iterations) & $8.133$ & $8.133$ & $16.054$ & $16.054$ & $33.437$ & - \\
    & $(0.743)$ & $(0.743)$ & $(0.835)$ & $(0.835)$ & $(24.265)$ & - \\ 
   & & & & & & \\
 \hline
\end{tabular*}
  \caption{The average time in seconds and the number of iterations reported for maximum likelihood estimation performed using the FS, FFS, SFS, FSFS and CSFS methods. For each simulation setting, results displayed are taken from averaging across $1000$ individual simulations. Also given are the average times taken by the R package lmer to perform maximum likelihood parameter estimation on the same simulated data. Standard deviations are given in brackets below each entry in the table.}
  \label{perfTable}
\end{table*}

\begin{table*}[hbt!]\footnotesize
\centering
\begin{tabular*}{\textwidth}{c @{\extracolsep{\fill}}cccccc@{}}
 \hline
 \textbf{Method} & \textbf{FS} & \textbf{FFS} & \textbf{SFS} & \textbf{FSFS} & \textbf{CSFS} & \textbf{lmer} \\ 
 \hline
    & & & & & & \\
   \textit{Simulation $1$} & & & & & & \\
    & & & & & & \\
   $t$ (Time/s) & $0.057$ & $0.052$ & $0.043$ & $0.064$ & $0.088$ & $0.071$ \\
    &  $(0.014)$ & $(0.011)$ & $(0.008)$ & $(0.042)$ & $(0.021)$ & $(0.021)$ \\  
   $n_{it}$ (No. of iterations) & $5.261$ & $5.261$ & $6.053$ & $6.053$ & $10.404$ & - \\
    & $(0.591)$ & $(0.591)$ & $(0.257)$ & $(0.257)$ & $(1.829)$ & - \\ 
    & & & & & & \\
 \hline
    & & & & & & \\
   \textit{Simulation $2$} &  &  &  & & &  \\
    & & & & & & \\
   $t$ (Time/s) & $0.175$ & $0.154$ & $0.157$ & $0.169$ & $0.306$ & $0.810$ \\
    &  $(0.043)$ & $(0.039)$ & $(0.035)$ & $(0.041)$ & $(0.067)$ & $(0.065)$ \\  
   $n_{it}$ (No. of iterations) & $6.758$ & $6.758$ & $8.413$ & $8.413$ &   $14.283$ & - \\
    & $(0.984)$ & $(0.984)$ & $(0.561)$ & $(0.561)$ & $(1.404)$ & - \\ 
    & & & & & & \\
 \hline
    & & & & & & \\
   \textit{Simulation $3$} & & & & & & \\
   & & & & & & \\
   $t$ (Time/s) & $1.615$ & $1.378$ & $2.968$ & $3.002$ & $6.604$ & $7.457$ \\
    & $(0.404)$ & $(0.344)$ & $(0.752)$ & $(0.814)$ & $(5.455)$ & $(0.442)$ \\  
   $n_{it}$ (No. of iterations) & $8.084$ & $8.084$ & $16.018$ & $16.018$ & $34.611$ & - \\
    & $(0.730)$ & $(0.730)$ & $(0.798)$ & $(0.798)$ & $(25.526)$ & - \\ 
   & & & & & & \\
 \hline
\end{tabular*}
  \caption{The average time in seconds and the number of iterations reported for restricted maximum likelihood estimation performed using the FS, FFS, SFS, FSFS and CSFS methods. For each simulation setting, results displayed are taken from averaging across $1000$ individual simulations. Also given are the average times taken by the R package lmer to perform restricted maximum likelihood parameter estimation on the same simulated data. Standard deviations are given in brackets below each entry in the table.}
  \label{perfTable2}
\end{table*}
\clearpage
}

Two potential causes for the poor performance of the CSFS method have been identified below. The first cause is presented in \citet{Demidenko:2013mfx}, in which it is argued that the Cholesky parameterised algorithm will provide slower convergence than that of the unconstrained alternative methods due to structural differences between the likelihood surfaces over which parameter estimation is performed. This reasoning offers a likely explanation for the higher number of iterations and computation time observed for the CSFS method across all simulations. However, this does not explain the small number of simulations in simulation setting $3$ in which evidence of convergence failure was observed.

The second possible explanation for the poor performance of the CSFS algorithm presents a potential reason for the observed convergence failure. \citet{Pinheiro1996} note that the Cholesky parameterisation of an arbitrary matrix is not unique and the number of possible choices for the Cholesky factorisation of a square positive definite matrix of dimension $(q \times q)$ increases with the dimension $q$. This means that, for each covariance matrix $D_k$, there are multiple Cholesky factors, $\Lambda_k$, which satisfy $\Lambda_k\Lambda_k'=D_k$. The larger $D_k$ is in dimension, the greater the number of $\Lambda_k$ that correspond to $D_k$ there are. \citet{Pinheiro1996} argue that when optimal solutions are numerous and close together in the parameter space, numerical problems can arise during optimisation. We note that when compared with the other simulation settings considered here, the design for simulation setting $3$ contains the largest number of factors and random effects. Consequently, the covariance matrices, $D_k$, are large for this simulation setting and numerous Cholesky factors which correspond to the same optimal solution for the covariance matrix, $D$, exist. For this reason, simulation setting $3$ is the most susceptible to numerical problems of the kind described by \citet{Pinheiro1996}. We suggest that this is a likely accounting for the $2.5\%$ of simulations in which convergence failure was observed. In summary, the simulation results offer strong evidence for the correctness and efficiency of the Fisher Scoring methods proposed, with the exception of the CSFS method, which experiences slower performance and convergence failure in rare cases.

\subsubsection{Degrees of freedom estimation results}

\begin{table*}[hbt!]
\centering
\begin{tabular*}{\textwidth}{c @{\extracolsep{\fill}}ccc@{\hskip 0.5in}}
 \hline
 \textbf{Method} & \textbf{Truth}  & \textbf{Direct-SW} & \textbf{lmerTest} \\ 
 \hline
    & & & \\
   \textit{Simulation $1$} & & & \\
    & & & \\
    Mean & $910.93$ & $910.68$ & $906.62$ \\
    Standard Deviation & $0.0$ & $2.36$ & $2.39$ \\
    Mean Squared Error & $0.0$ & $5.64$ & $24.23$ \\
    & & & \\
 \hline
    & & & \\
   \textit{Simulation $2$} &  &  & \\
    & & & \\
    Mean & $844.61$ & $842.17$ & $837.79$ \\
    Standard Deviation & $0.0$ & $7.16$ & $7.26$ \\
    Mean Squared Error & $0.0$ & $57.20$ & $99.21$ \\
    & & & \\
 \hline
    & & & \\
   \textit{Simulation $3$} & & & \\
    & & & \\
    Mean & $707.01$ & $700.96$ & $695.08$ \\
    Standard Deviation & $0.0$ & $17.19$ & $17.67$ \\
    Mean Squared Error & $0.0$ & $332.01$ & $454.53$ \\
    & & & \\
 \hline
\end{tabular*}
  \caption{The mean, standard deviation and mean squared error for $1,000$ degrees of freedom estimates in each simulation setting. Results are displayed for both the lmerTest and Direct-SW methods, alongside ``true" mean values, which were established using the moment-matching based approach outlined in Section \ref{dfmeth} and computed using $1,000,000$ simulation instances.}\label{dfTab}
\end{table*}

Across all degrees of freedom simulations, results indicated that the degrees of freedom estimates produced by the direct-SW method possessed both lower bias and lower variance than those produced by lmerTest. The degrees of freedom estimates, for each of the three simulation settings, are summarized in Table \ref{dfTab}. 

It can be seen from Table \ref{dfTab} that throughout all simulation settings both direct-SW and lmerTest appear to underestimate the true value of the degrees of freedom. However, the bias observed for the lmerTest estimates is notably more severe that of the direct-SW method, suggesting that the estimates produced by direct-SW have a higher accuracy than those produced by lmerTest. The observed difference in the standard deviation of the degrees of freedom estimates between the lmerTest and direct-SW methods is less pronounced. However, in all simulation settings, lower standard deviations are reported for direct-SW, suggesting that the estimates produced by direct-SW have a higher precision than those produced by lmerTest. 

For both lmerTest and direct-SW, the observed bias and variance increase with simulation complexity. The simulations provided here indicate that the severity of disagreement between direct-SW and lmerTest increases as the complexity of the random effects design increases. This observation matches the expectation that the accuracy of the numerical gradient estimation employed by lmerTest will worsen as the complexity of the parameter space increases. Reported mean squared errors are also provided, indicating further that the direct-SW method outperformed lmerTest in terms of accuracy and precision in all simulation settings.

\subsection{Real data results}\label{RealDatRes}

\subsubsection{The SAT score results}\label{SATres}
For the SAT score example described in Section \ref{SATexample}, the log-likelihood, fixed effect parameters and variance components estimates produced by the Fisher Scoring algorithms were identical to those produced by lmer. Further, the reported computation times for this example suggested little difference between all methods considered in terms of computational efficiency. Of the Fisher Scoring methods considered, the FS and FFS methods took the fewest iterations to converge while the SFS and FSFS methods took the most iterations to converge. The results presented here exhibit strong agreement with those reported in \citet{west2014linear}, in which the same model was used as the basis for a between-software comparison of LMM software packages. For completeness, the full table of results can be found in Supplementary Material Section $S9$.

\subsubsection{The twin study results}\label{AceRes}

The results for the twin study example described in Section \ref{AceExample} are presented in Table \ref{AceTab}. It can be seen from Table \ref{AceTab} that the Powell optimizer and Fisher Scoring method attained extremely similar optimized likelihood values, with Fisher Scoring converging notably faster. This result offers further evidence for the correctness of the parameter estimates produced by the Fisher Scoring method as it is unlikely that both Powell estimation and Fisher Scoring would converge to the same solution if the solution were suboptimal. The parameter estimates produced by Fisher Scoring and Powell optimization can be seen to be smaller in magnitude than those estimated by OLS, highlighting how the inclusion of additional variance terms in the model can have a meaningful impact on the conclusion of the analysis.

Also provided in Table \ref{AceTab}, are approximate t-tests based on the Fisher Scoring and Powell optimizer parameter estimates, with corresponding standard t-tests given based on the OLS estimates. At the $5\%$ significance level, the approximate t-tests conclude that the fixed effects parameters corresponding to the `Intercept', `Age and Sex Interaction' and the `PSQI Score' are non-zero in value. While it may be expected that age should affect reading ability, no significant effect was observed for the `Age' covariate. This lack of observed effect may be explained by the narrow age range of subjects present in the HCP dataset, with subjects ranging from 22 to 35 years in age, and by the fact that individual observations were recorded in units of years. In general, the OLS-based t-tests produced similar conclusions to those produced by the FS-based approximate t-tests. A notable exception, however, is given by the t-tests for the fixed effect associated with the `Sex' covariate. While the standard OLS t-test reported at the $5\%$ significance level that the `Sex' fixed effect was non-zero in value, the FS-based approximate t-test concluded that there was not evidence to support this claim. This result further highlights the importance of modeling all relevant variance terms.

\section{Discussion}

In this work, we have presented derivations for and demonstrated potential applications of, score vector and Fisher Information matrix expressions for the LMMs containing multiple random factors. While many of the examples presented in this paper were benchmarked against existing software, it is not the authors' intention to suggest that the proposed methods are superior to existing software packages. Instead, this work aims to complement existing LMM parameter estimation research. This aim is realized through careful exposition of the score vectors and Fisher Information matrices and detailed description of methodology and algorithms.

Modern approaches to LMM parameter estimation typically depend upon conceptually complex mathematical operations which require support from a range of software packages and infrastructure. This work has been, in large part, motivated by current deficiencies in vectorized support for such operations. Vectorized computation is a crucial requirement for medical imaging applications, in which hundreds of thousands of mixed models must be estimated concurrently. It is with this application in mind that this work has been undertaken. We intend to pursue the application itself in future work.
\afterpage{
\begin{table*}[hbt!]
\centering
\begin{tabular*}{\textwidth}{cc @{\extracolsep{\fill}}cccc@{}}
 \hline
 \multicolumn{2}{c}{\textbf{Estimation Method}} & \textbf{OLS} & \textbf{Powell} & \textbf{FS} \\ 
 \hline
    & & & &\\
   \multicolumn{2}{c}{\textit{Performance}} & & & \\
    \multicolumn{2}{c}{$l$ (Log-likelihood)} & $-2133.49$ & $-2005.83$ & $-2005.81$ \\
    \multicolumn{2}{c}{$t$ (Time in seconds)} & $<.001$ & $93.32$ & $2.31$ \\
    & & &\\
 \hline
    & & & \\
   \multicolumn{2}{c}{\textit{Fixed effects parameters (Standard Errors)}} & & & \\
    \multicolumn{2}{c}{$\beta_0$ (Intercept)} & $121.46$ $(3.61)$ & $118.43$ $(3.40)$ & $118.34$ $(3.42)$ \\
    \multicolumn{2}{c}{$\beta_1$ (Age)} & $-0.11$ $(0.12)$ & $-0.04$ $(0.11)$ & $-0.04$ $(0.11)$\\
    \multicolumn{2}{c}{$\beta_2$ (Sex)} & $-10.74$ $(5.07)$ & $-7.22$ $(4.64)$ & $-7.61$ $(4.66)$ \\ 
    \multicolumn{2}{c}{$\beta_3$ (Age:sex)} & $0.45$ $(0.18)$ & $0.33$ $(0.16)$ & $0.34$ $(0.16)$ \\
    \multicolumn{2}{c}{$\beta_4$ (PSQI score)} & $-0.49$ $(0.12)$ & $-0.31$ $(0.10)$ & $-0.30$ $(0.10)$\\
    & & &\\
 \hline
    & & &\\
   \multicolumn{2}{c}{\textit{Covariance parameters}} & & & \\
    \multicolumn{2}{c}{$\sigma^2_a$ (Additive genetic)} & $0.00$ & $48.20$ & $50.67$ \\
    \multicolumn{2}{c}{$\sigma^2_c$ (Common environment)} & $0.00$ & $27.53$ & $26.89$ \\
    \multicolumn{2}{c}{$\sigma^2_e$ (Residual error)} & $110.04$ & $33.57$ & $32.71$ \\
    & & & \\
 \hline
    & & &\\
   \multicolumn{2}{c}{\textit{Tests for Fixed Effects}} & & & \\
    & & &\\
   & \textit{T-statistic} & $33.67$ & $34.83$ & $34.65$ \\
     Intercept & \textit{degrees of freedom} & $1105.00$ & $920.59$ & $921.57$ \\
    & \textit{p-value} & $<.001^*$ & $<.001^*$ & $<.001^*$ \\
    & & &\\
    & \textit{T-statistic} & $-0.94$ & $-0.350$ & $-0.320$ \\
    Age & \textit{degrees of freedom} & $1105.00$ & $907.44$ & $908.42$ \\
    & \textit{p-value} & $.350$ & $.726$ & $.749$\\
    & & &\\
    & \textit{T-statistic} &  $-2.12$ & $-1.56$ & $-1.63$ \\
    Sex & \textit{degrees of freedom} & $1105.00$ & $833.01$ & $832.96$ \\
    & \textit{p-value} & $.034^*$ & $.120$ & $.103$ \\
    & & &\\
    & \textit{t-statistic} & $2.58$ & $2.03$ &  $2.10$ \\
    Age:sex & \textit{degrees of freedom} & $1105.00$ & $830.53$ & $830.61$ \\
    & \textit{p-value} & $.010^*$ & $.043^*$ & $.036^*$\\
    & & &\\
    & \textit{T-statistic} &  $-4.25$ & $-3.19$ & $-3.17$ \\
    PSQI score & \textit{degrees of freedom} & $1105.00$ & $933.67$ & $928.87$ \\
    & \textit{p-value} & $<.001^*$ & $.001^*$ & $.001^*$\\
    & & &\\
    \hline
\end{tabular*}
  \caption{Performance metrics, parameter estimates and approximate T-test results for the twin study example. Standard errors for the fixed effects parameter estimates are given in brackets alongside the corresponding estimates. For each model parameter, a T-statistic, direct-SW degrees of freedom estimate, and p-value are provided, corresponding to the approximate t-test for a non-zero effect. P-values that are significant at the $5\%$ level are indicated using a $^*$ symbol.}\label{AceTab}
\end{table*}
\clearpage
}

Although the methods presented in this work are well suited for the desired application, we stress that there are many situations where current software packages will likely provide superior performance. One such situation can be found by observing that the Fisher Scoring method requires the storage and inversion of a matrix of dimensions $(q \times q)$. This is problematic, both in terms of memory and computation accuracy, for designs which involve very large numbers of random effects, grouping factors or factor levels. While such applications have not been considered extensively in this work, we note that many of the expressions provided in this document may benefit from combination with sparse matrix methodology to overcome this issue. We suggest that the potential for improved computation time via the combination of the Fisher Scoring approaches described in this paper with sparse matrix methodology may also be the subject of future research.

\section{Declarations}

\subsection{Funding}

This work was supported by the Li Ka Shing Centre for Health Information and Discovery and NIH grant [R01EB026859] (TMS, TN) and the Wellcome Trust award [100309/Z/12/Z] (TN).

\subsection{Conflicts of interest/Competing interests}

Not applicable.

\subsection{Availability of data and material}

Data were provided in part by the Human Connectome Project, WU-Minn Consortium (Principal Investigators: David Van Essen and Kamil Ugurbil; 1U54MH091657) funded by the 16 NIH Institutes and Centers that support the NIH Blueprint for Neuroscience Research; and by the McDonnell Center for Systems Neuroscience at Washington University. 

The longitudinal evaluation of school change and performance (LESCP) dataset employed in Sections \ref{SATexample} and \ref{SATres} is publicly available and can be found, for example, as one of the example datasets included in the Heirachical Linear Models (HLM) software package (\cite{Raudenbush2002}).

\subsection{Code availability}

All code used to generate the results of the simulations and real data examples presented in Sections \ref{SimMeth}-\ref{AceRes} is publicly available and can be found in the below GitHub repository:\\
\\
\noindent
\underline{\href{https://github.com/TomMaullin/LMMPaper}{https://github.com/TomMaullin/LMMPaper}}\\
\\
Included in this repository are detailed notebooks demonstrating how the code may be executed in order to reproduce the results presented in this work.

\subsection{Authors' contributions}

Not applicable.

\section{Appendix} 
\subsection{Score vectors}\label{derivAppendix}

In this appendix, we provide full derivations for the derivatives  (\ref{FSderiv3}) and (\ref{FFSderiv1}). For derivatives (\ref{FSderiv1}) and (\ref{FSderiv2}), we note that the derivations for the multi-factor LMM are identical to those given for the single-factor setting in \citet{Demidenko:2013mfx}. As a result, we refer the interested reader to this source for proofs. To obtain the derivatives (\ref{FSderiv3}) and (\ref{FFSderiv1}), two lemmas, lemma \ref{lemma1} and lemma \ref{lemma2}, are required. Before stating the two lemmas, we briefly make explicit the definition of derivative employed throughout this paper.

Throughout this work, for arbitrary matrices $A$ and $B$ of dimensions $(p \times q)$ and $(m \times n)$ respectively, the following definition of matrix (partial) derivative has been implicitly employed:

\begin{equation}\nonumber
    \frac{\partial \text{vec}(A)}{\partial \text{vec}(B)} = \begin{bmatrix}\frac{\partial A_{[1,1]}}{\partial B_{[1,1]}} & \hdots & \frac{\partial A_{[p,1]}}{\partial B_{[1,1]}} & \hdots & \frac{\partial A_{[p,q]}}{\partial B_{[1,1]}} \\
    \vdots & \ddots & \vdots & \ddots & \vdots \\
    \frac{\partial A_{[1,1]}}{\partial B_{[m,1]}} & \hdots & \frac{\partial A_{[p,1]}}{\partial B_{[m,1]}} & \hdots & \frac{\partial A_{[p,q]}}{\partial B_{[m,1]}} \\
    \vdots & \ddots & \vdots & \ddots & \vdots \\
    \frac{\partial A_{[1,1]}}{\partial B_{[m,n]}} & \hdots & \frac{\partial A_{[p,1]}}{\partial B_{[m,n]}} & \hdots & \frac{\partial A_{[p,q]}}{\partial B_{[m,n]}} \\
    \end{bmatrix}, 
\end{equation}
with the equivalent definition holding for the total derivative. In addition, for a scalar value, $a$, and matrix, $B$, defined as above, we employ the below matrix (partial) derivative definition:

\begin{equation}\nonumber
    \frac{\partial{a}}{\partial B} = \begin{bmatrix}
     \frac{\partial a}{\partial B_{[1,1]}} & \hdots & \frac{\partial a}{\partial B_{[1,n]}} \\
     \vdots & \ddots & \vdots \\
     \frac{\partial a}{\partial B_{[m,1]}} & \hdots & \frac{\partial a}{\partial B_{[m,n]}} \\
    \end{bmatrix},
\end{equation}
with, again, the equivalent definition holding for the total derivative. The above notation can be useful since by definition, the two derivatives defined above satisfy the below relation:
\begin{equation}\nonumber
    \text{vec}\bigg(\frac{\partial{a}}{\partial B}\bigg)= \frac{\partial a}{\partial \text{vec}(B)}.
\end{equation}

The above matrix derivative definitions are commonly employed in the mathematical and statistical literature, but are not universal. For example, sources such as \cite{Neudecker1983} and \citet{Magnus1999} define $\partial \text{vec}(A)/\partial \text{vec}(B)$ as the transpose of the definition supplied above. This discrepancy between definitions is noteworthy as some of the referenced results used in the following proofs are the transpose of those found in the original source. For a full account and in-depth discussion of the different definitions of matrix derivatives used in the literature, we recommend the work of \citet{Turkington2013}.

\begin{lemma}\label{lemma1}
Let $g$ be a column vector, $A$ be a square matrix and $\{B_s\}$ be a set of arbitrary matrices of equal size. Let $K$ be an unstructured matrix which none of $g$, $A$ or any of the $\{B_s\}$ depend on. Further, assume $A,\{B_s\}, g$ and $K$ can be multiplied as required. The below matrix derivative expression now holds;
\begin{equation}\label{lem1result}
\begin{aligned}[b]
& \frac{\partial}{\partial K} \bigg[ g'(A+\sum_t B_tKB_t')^{-1}g \bigg]=\\
& -\sum_s B_s'\big(A'+\sum_t B_tKB_t'\big)^{-1}gg'\big(A'+\sum_t B_tKB_t'\big)^{-1}B_s.
\end{aligned}
\end{equation}
\end{lemma}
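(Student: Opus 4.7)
The plan is to proceed by matrix differential calculus. Abbreviate $M \equiv A + \sum_t B_t K B_t'$, so that the target quantity is $f = g' M^{-1} g$. Since $A$, $g$, and all $\{B_t\}$ are independent of $K$, the differential satisfies $dM = \sum_t B_t (dK) B_t'$, and the standard identity $d(M^{-1}) = -M^{-1} (dM) M^{-1}$ combined with $df = g' \, d(M^{-1}) \, g$ immediately gives
\begin{equation*}
df = -\sum_t g' M^{-1} B_t \, (dK) \, B_t' M^{-1} g.
\end{equation*}

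Next, since $df$ is a scalar, it equals its own trace, and the cyclic property of the trace can be used to move $dK$ to the right of the expression:
\begin{equation*}
df = -\sum_t \operatorname{tr}\bigl(B_t' M^{-1} g g' M^{-1} B_t \, dK\bigr).
\end{equation*}
Under the derivative convention $(\partial f/\partial K)_{ij} = \partial f/\partial K_{ij}$ stated at the start of Appendix \ref{derivAppendix}, any representation of the form $df = \operatorname{tr}(C' \, dK)$ identifies $C$ with $\partial f/\partial K$. Reading off $C'$ from the display and then transposing yields
\begin{equation*}
\frac{\partial f}{\partial K} = -\sum_s B_s' (M')^{-1} g g' (M')^{-1} B_s.
\end{equation*}

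To match this expression to (\ref{lem1result}), I would observe that $M' = A' + \sum_t B_t K' B_t'$. In the application contexts in which this lemma is invoked (namely the derivations of (\ref{FSderiv3}) and (\ref{FFSderiv1})), $K$ is set to a covariance matrix $D_k$ and is therefore symmetric, so $K' = K$ and $M' = A' + \sum_t B_t K B_t'$, which is exactly the form appearing in (\ref{lem1result}). I do not expect any conceptually hard step: the only real care required is in the transpose bookkeeping when extracting the matrix-valued derivative from the trace form, and in noting the implicit symmetry of $K$ that aligns the derived expression with the stated one. As a sanity check that avoids trace manipulation entirely, one can instead compute $\partial M/\partial K_{ij} = \sum_t B_t E_{ij} B_t'$ with $E_{ij}$ the single-entry matrix; the $(i,j)$-entry of $\partial f/\partial K$ then factors directly as $-\sum_t (B_t' M^{-T} g)_i (B_t' M^{-1} g)_j$, the $(i,j)$-entry of the same outer-product matrix obtained above.
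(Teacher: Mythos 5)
Your proof is correct and follows essentially the same route as the paper's: both hinge on the identity $d(M^{-1})=-M^{-1}(dM)M^{-1}$ (the paper carries this out in Ricci index notation and finishes with the same transposition step you perform via the trace trick). Your additional remark that the stated right-hand side really involves $M'=A'+\sum_t B_tK'B_t'$, and agrees with the displayed form only because $K$ is symmetric in every application ($K=D_k$), is a careful catch that the paper's own proof passes over silently.
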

\begin{proof} To begin, denote $M=A+\sum_sB_sKB_s'$. For clarity and convenience, in this proof, we shall briefly depart from the usual notation of $H_{[i,j]}$ for the $(i,j)^{th}$ element of an arbitrary matrix $H$ and instead employ Ricci calculus notation, in which the $(i,j)^{th}$ element of $H$ is given by $H^i_j$ and all summations are made implicit. The left hand side of (\ref{lem1result}) is now given, using Ricci calculus notation, as;
\begin{equation}\label{lem11}
    \frac{\partial}{\partial K^m_n}\bigg[(g')_i(M^{-1})^i_jg^j\bigg] = (g')_ig^j\frac{\partial}{\partial K^m_n}\big[M^{-1}\big]^i_j.
\end{equation}
\noindent
We employ a result from \citet{giles2008:collect} which states, in Ricci calculus notation;
\begin{equation}\nonumber\frac{\partial [F^{-1}]^i_j}{\partial x}=-(F^{-1})^i_q\frac{\partial F^q_p}{\partial x}(F^{-1})^p_j.
\end{equation}
Applying this result to (\ref{lem11}), it can be seen that the right hand side of (\ref{lem11}) is equal to: 
\begin{equation}\nonumber
  -(g')_ig^j(M^{-1})^i_q\frac{\partial M^q_p}{\partial K^m_n}(M^{-1})^p_j.
\end{equation}
Evaluating the derivative in the above expression and rearranging the terms gives the below:
\begin{equation}\nonumber
\begin{aligned}
     & -(B_s')^n_p(M^{-1})^p_jg^j(g')_i(M^{-1})^i_qB^q_{sm}\\
     & =-[B_s'M^{-1}gg'M^{-1}B_s]^n_m.\\
 \end{aligned}
\end{equation}

\noindent
By applying a transposition to the matrices in the above expression, in order to interchange the indices $m$ and $n$, the result follows.\qed\end{proof}

\begin{lemma}\label{lemma2} Let $A, \{B_s\}$ and $K$ be defined as in Lemma \ref{lemma1}. Then the following is true: 

\begin{equation}\label{lemma2result}
\frac{\partial}{\partial K} \log|A+\sum_t B_tKB_t'| = \sum_s B_s'(A+\sum_t B_tK'B_t')^{-1}B_s.
\end{equation}

\end{lemma}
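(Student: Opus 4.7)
The plan is to apply standard matrix-derivative identities to the log-determinant, paralleling the strategy used in Lemma \ref{lemma1}. First, I would set $M = A + \sum_t B_t K B_t'$ and reduce the problem to an entry-wise computation via the chain rule: $\partial \log|M| / \partial K_{[m,n]} = \sum_{i,j} (\partial \log|M| / \partial M_{[i,j]})(\partial M_{[i,j]} / \partial K_{[m,n]})$. This decomposition lets me handle the ``outer'' derivative and the ``inner'' derivative separately, each of which is elementary.

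Next, I would substitute the well-known identity $\partial \log|M| / \partial M_{[i,j]} = (M^{-1})_{[j,i]}$ (derivable, e.g., from the cofactor expansion of the determinant) and compute $\partial M_{[i,j]} / \partial K_{[m,n]} = \sum_t (B_t)_{[i,m]} (B_t)_{[j,n]}$ directly from the defining sum for $M$. Combining these two expressions and recognizing the resulting double sum over $i,j$ as a matrix product would yield $\partial \log|M|/\partial K_{[m,n]} = \sum_t [B_t' M^{-1} B_t]_{[n,m]}$. Under the matrix-derivative convention adopted in this paper, the $(m,n)$ entry of $\partial \log|M|/\partial K$ equals this scalar, so the full matrix derivative is the transpose $\sum_s (B_s' M^{-1} B_s)' = \sum_s B_s' (M')^{-1} B_s$. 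Since $M' = A' + \sum_t B_t K' B_t'$, the claimed formula \eqref{lemma2result} follows (with $A' = A$ implicit in the intended applications, where $A$ is symmetric, e.g. $A = I_n$).

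A slicker alternative that I would keep in mind is to work in matrix-differential form via $d \log|M| = \text{tr}(M^{-1} dM)$, substitute $dM = \sum_t B_t (dK) B_t'$, apply the cyclic invariance of the trace to collect the differential $dK$ on the right, and then read off the derivative matrix from the identification $\text{tr}(H \, dK) = \sum_{i,j} H_{[j,i]} (dK)_{[i,j]}$. This avoids most of the index-level bookkeeping while producing the same transposed form.

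The main obstacle is handling the transposition conventions consistently: the identity $\partial \log|M| / \partial M_{[i,j]} = (M^{-1})_{[j,i]}$ produces a transpose relative to $M^{-1}$, and it is precisely this transpose that is responsible for the appearance of $K'$ (rather than $K$) inside the inverse on the right-hand side of \eqref{lemma2result}. Care is needed to see that, combined with the tacit assumption $A = A'$, this reproduces the stated formula exactly.
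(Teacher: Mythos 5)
Your proposal is correct and follows essentially the same route as the paper's proof: an element-wise chain-rule decomposition, the identity $\partial\log|M|/\partial M_{[i,j]}=(M^{-1})_{[j,i]}$, direct computation of $\partial M_{[i,j]}/\partial K_{[m,n]}$, and a final transposition to account for the derivative convention (the paper carries this out in Ricci index notation, but the steps are the same). Your explicit tracking of where the $K'$ arises, and the remark that $A=A'$ is tacitly used, is a fair and accurate reading of the statement.
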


\begin{proof} As in the proof of Lemma \ref{lemma1}, the notation of $M=A+\sum_t B_tK'B_t'$ will be adopted and the proof will be given in Ricci calculus notation. Expanding the derivative with respect to $K^m_n$ using the partial derivatives of the elements of $M$ gives the following:
\begin{equation}\label{lem21}
    \frac{\partial}{\partial K^m_n} \log|M| = \frac{\partial\log|M|}{\partial M^i_j}\frac{\partial M^i_j}{\partial K^m_n}.    
\end{equation}
\noindent
We now make use of the following well-known result, which can be found in \citet{giles2008:collect}.

\begin{equation}\label{lem22}
    \frac{\partial \log|X|}{\partial X^{i}_j}=(X'^{-1})^i_j.
\end{equation}
\noindent
Inserting (\ref{lem22}) into (\ref{lem21}) and noting that $A^i_j$ does not depend on $K^m_n$, yields:
\begin{equation}\nonumber\label{lem23}
    \frac{\partial}{\partial K^m_n} \log|M| = (M'^{-1})^i_j\frac{\partial (B_sK(B_s)')^i_j}{\partial K^m_n}.    
\end{equation}
\noindent
Evaluating the derivative on the right hand side of the above now returns:
\begin{equation}\nonumber\label{lem24}
    \frac{\partial}{\partial K^m_n} \log|M| = (M'^{-1})^i_j(B_s)_m^i(B_s')^{n}_j = (B'_s)_i^m(M'^{-1})^i_j (B_s)^j_{n} .
\end{equation}
Combining the summation terms in the above expression yields the following.
\begin{equation}\nonumber\label{lem25}
    \frac{\partial}{\partial K^m_n} \log|M| = (B'_s M'^{-1} B_s)^m_n .
\end{equation}
Finally, converting the above into matrix notation now gives (\ref{lemma2result}) as desired. \qed
\end{proof}
We are now in a position to derive the partial derivative matrix $\frac{\partial l}{\partial D_k}$, the matrix derivative with respect to $D_k$ which does not account for the equal elements of $D_k$ induced by symmetry.

\begin{theorem}\label{DmatTheorem} The partial derivative matrix $\frac{\partial l}{\partial D_k}$ is given by the following.

\begin{equation}\label{dldDk}
 \frac{\partial l(\theta)}{\partial D_k} = \frac{1}{2}\sum_{j=1}^{l_k}Z_{(k,j)}'V^{-1}\bigg(\frac{ee'}{\sigma^2}-V\bigg)V^{-1}Z_{(k,j)}.
\end{equation}

\end{theorem}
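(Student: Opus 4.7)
The plan is to apply Lemmas \ref{lemma1} and \ref{lemma2} term-by-term to the log-likelihood, after first rewriting $V$ in the additive form those lemmas require. Starting from $D = \bigoplus_{k=1}^r(I_{l_k}\otimes D_k)$ and the block-column decomposition $Z = [Z_{(1)},\dots,Z_{(r)}]$ with $Z_{(k)} = [Z_{(k,1)},\dots,Z_{(k,l_k)}]$, a direct multiplication gives
$$
V \;=\; I_n + ZDZ' \;=\; I_n + \sum_{k'=1}^{r}\sum_{j=1}^{l_{k'}} Z_{(k',j)} D_{k'} Z_{(k',j)}'.
$$
Fixing the index $k$ of interest and collecting every term not involving $D_k$ into $A := I_n + \sum_{k'\neq k}\sum_{j}Z_{(k',j)} D_{k'} Z_{(k',j)}'$, one obtains the required form $V = A + \sum_{s=1}^{l_k} B_s D_k B_s'$ with $B_s = Z_{(k,s)}$, a matrix $A$ that does not depend on $D_k$, and $K=D_k$ playing the role of the variable argument.

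Next, I would differentiate the three pieces of $l(\theta)$ separately. The term $n\log(\sigma^2)$ has no $D_k$ dependence and contributes zero. For the quadratic form $\sigma^{-2} e' V^{-1} e$, since $e = Y - X\beta$ is independent of $D_k$, Lemma \ref{lemma1} applies directly with $g=e$, yielding
$$
\frac{\partial}{\partial D_k}\bigl[\sigma^{-2} e'V^{-1}e\bigr] = -\sigma^{-2}\sum_{j=1}^{l_k} Z_{(k,j)}' V^{-1} e e' V^{-1} Z_{(k,j)},
$$
where symmetry of $V$ has been used to identify $V'^{-1}$ with $V^{-1}$. For the log-determinant term, Lemma \ref{lemma2} gives
$$
\frac{\partial}{\partial D_k}\log|V| = \sum_{j=1}^{l_k} Z_{(k,j)}' V^{-1} Z_{(k,j)}.
$$

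Combining these two contributions with the overall prefactor $-\tfrac12$ from $l(\theta)$, and rewriting the log-determinant term by inserting $V^{-1}VV^{-1}=V^{-1}$ so that both summands share a common outer $V^{-1}$ envelope, produces
$$
\frac{\partial l(\theta)}{\partial D_k} = \tfrac12 \sum_{j=1}^{l_k} Z_{(k,j)}' V^{-1}\!\left(\frac{ee'}{\sigma^2} - V\right)V^{-1} Z_{(k,j)},
$$
which is exactly (\ref{dldDk}).

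The main obstacle here is purely bookkeeping rather than mathematical depth: one must carefully verify that the additive decomposition of $V$ isolates $D_k$ in precisely the shape demanded by Lemmas \ref{lemma1} and \ref{lemma2} (i.e.\ that the pieces involving other factors $D_{k'}$ can be absorbed into the constant matrix $A$), and one must ensure that the partial derivative notation is consistent, since the result is the \emph{unsymmetrized} partial derivative $\partial/\partial D_k$ (treating each entry of $D_k$ as free), which will later be pre-multiplied by $\mathcal{D}_{q_k}'$ in (\ref{FSderiv3}) to yield the total derivative with respect to $\mathrm{vech}(D_k)$.
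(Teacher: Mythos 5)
Your proposal is correct and follows essentially the same route as the paper's own proof: rewrite $V$ in the additive form $I_n+\sum_{k',j}Z_{(k',j)}D_{k'}Z_{(k',j)}'$, absorb the terms not involving $D_k$ into the constant matrix of Lemmas \ref{lemma1} and \ref{lemma2}, differentiate the quadratic form and the log-determinant separately, and combine with the overall $-\tfrac{1}{2}$ prefactor. If anything your bookkeeping is slightly more careful than the paper's, whose displayed intermediate expression for $\partial\bigl[\sigma^{-2}e'V^{-1}e\bigr]/\partial D_k$ omits the minus sign delivered by Lemma \ref{lemma1} (an apparent typo, since the final combination only yields (\ref{dldDk}) with that sign present).
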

\begin{proof}
To derive (\ref{dldDk}), we use the expression for the log-likelihood of the LMM, given by (\ref{llh}). As the first term inside the brackets of (\ref{llh}) does not depend on $D_k$, we need only consider the second and third term for differentiation. We now note that, by (\ref{Zprop}) and the block diagonal structure of $D$, it can be seen that:

\begin{equation}\label{vdef}
    V= I+ZDZ'=I+\sum_{k=1}^{r}\sum_{j=1}^{l_r}Z_{(k,j)}D_kZ_{(k,j)}'.
\end{equation}
By substituting (\ref{vdef}) into the second term of (\ref{llh}), it can be seen that:
\begin{equation}\nonumber
\begin{aligned}
& \sigma^{-2}e'V^{-1}e =\sigma^{-2}e'\bigg(I+\sum_{k=1}^{r}\sum_{j=1}^{l_r}Z_{(k,j)}D_kZ_{(k,j)}'\bigg)^{-1}e.
\end{aligned}
\end{equation}
By applying the result of Lemma \ref{lemma1}, it can now be seen that the partial derivative matrix of the above, taken with respect to $D_k$, can be evaluated to:
\begin{equation}\nonumber\frac{\partial}{\partial D_k}\bigg[\sigma^{-2}e'V^{-1}e\bigg]=\sigma^{-2}\sum_{j=1}^{l_k}Z_{(k,j)}'V^{-1}ee'V^{-1}Z_{(k,j)}.
\end{equation}
Similarly, by substituting (\ref{vdef}) into the third term of (\ref{llh}), it can be seen that:
\begin{equation}\nonumber\log|V|=\log\big|I+\sum_{k=1}^{r}\sum_{j=1}^{l_r}Z_{(k,j)}D_kZ_{(k,j)}'\big|.\end{equation}
By now applying the result of Lemma \ref{lemma2}, the partial derivative matrix of the above expression, taken with respect to $D_k$, can be seen to be given by:
\begin{equation}\nonumber
    \frac{\partial}{\partial D_k}\big[\log|V|\big]=\sum_{j=1}^{l_k}Z_{(k,j)}'V^{-1}Z_{(k,j)}.
\end{equation}
Combining the previous two derivative expressions, we obtain (\ref{dldDk}) as desired.\qed
\end{proof}
Through applying the vectorisation operator to (\ref{dldDk}), the following corollary is obtained. This is the result stated by (\ref{FFSderiv1}) in Section \ref{FFSsection}.

\begin{corollary}\label{partialDerivCorollary}
The partial derivative vector of the log-likelihood with respect to vec$(D_k)$ is given as:
\begin{equation}\nonumber
 \frac{\partial l(\theta)}{\partial \text{vec}(D_k)} = \frac{1}{2}\text{vec}\bigg(\sum_{j=1}^{l_k}Z_{(k,j)}'V^{-1}\bigg(\frac{ee'}{\sigma^2}-V\bigg)V^{-1}Z_{(k,j)}\bigg).
\end{equation}
\end{corollary}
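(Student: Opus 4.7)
The plan is to derive this corollary as an immediate consequence of Theorem \ref{DmatTheorem}, which already supplies the matrix-valued derivative $\partial l(\theta)/\partial D_k$ in closed form. Since the corollary only restates that result as a vectorised derivative, the proof reduces to applying the vec operator to both sides and invoking the notational identity
$$\text{vec}\bigg(\frac{\partial a}{\partial B}\bigg) = \frac{\partial a}{\partial \text{vec}(B)},$$
which was made explicit in the preamble to Lemma \ref{lemma1}. This identity is the bridge between the two equivalent matrix derivative conventions used throughout the paper.

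Concretely, I would first cite Theorem \ref{DmatTheorem} to obtain
$$\frac{\partial l(\theta)}{\partial D_k} = \frac{1}{2}\sum_{j=1}^{l_k}Z_{(k,j)}'V^{-1}\bigg(\frac{ee'}{\sigma^2}-V\bigg)V^{-1}Z_{(k,j)}.$$
Next, I would apply vec to both sides, pulling vec inside the finite sum by linearity. Finally, by the identity above with $a = l(\theta)$ and $B = D_k$, the left-hand side becomes $\partial l(\theta)/\partial \text{vec}(D_k)$, which yields the statement of the corollary.

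There is no real obstacle here; the only thing worth flagging is that one must be careful that the matrix derivative convention in use agrees with the vec-based convention, since, as noted in the discussion preceding Lemma \ref{lemma1}, some references (e.g.\ \citet{Magnus1999}) use the transpose convention. Provided this consistency is maintained, the corollary follows in a single step from the theorem, with no further calculation required.
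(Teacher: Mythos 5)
Your proposal matches the paper's own derivation exactly: the corollary is obtained by applying the vec operator to the result of Theorem \ref{DmatTheorem} and invoking the identity $\text{vec}(\partial a/\partial B) = \partial a/\partial\text{vec}(B)$ stated in the preamble to Lemma \ref{lemma1}. Your added caution about the transpose convention is sensible but not needed beyond what the paper already notes.
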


\noindent
Using Theorem 5.12 of \citet{Turkington2013}, which states that, in our notation:
\begin{equation}\nonumber
    \frac{dl(\theta)}{d\text{vec}(D_k)}=\mathcal{D}_{q_k}\mathcal{D}_{q_k}'\frac{\partial l(\theta)}{\partial \text{vec}(D_k)},
\end{equation}
\noindent
the following corollary is now obtained.

\begin{corollary}\label{vecTotalDerivCor}
The total derivative vector of the log-likelihood with respect to vec($D_k$) is given as:
\begin{equation}\nonumber
\begin{aligned}
 &\frac{d l(\theta)}{d \text{vec}(D_k)} = \\
 &\frac{1}{2}\mathcal{D}_{q_k}\mathcal{D}'_{q_k}\text{vec}\bigg(\sum_{j=1}^{l_k}Z_{(k,j)}'V^{-1}\bigg(\frac{ee'}{\sigma^2}-V\bigg)V^{-1}Z_{(k,j)}\bigg).
\end{aligned}
\end{equation}
\end{corollary}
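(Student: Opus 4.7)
The plan is to obtain this corollary as an essentially immediate consequence of Corollary \ref{partialDerivCorollary} combined with the cited Theorem 5.12 of \citet{Turkington2013}. The conceptual point is that the ``partial'' derivative $\partial l/\partial \text{vec}(D_k)$ treats all $q_k^2$ entries of $D_k$ as algebraically independent, whereas the ``total'' derivative $dl/d\text{vec}(D_k)$ must account for the fact that the symmetric off-diagonal pairs of entries in $D_k$ are equal. The duplication-based operator $\mathcal{D}_{q_k}\mathcal{D}_{q_k}'$ is precisely the symmetrization that converts the former into the latter.

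First I would invoke Theorem 5.12 of \citet{Turkington2013} in the form
\begin{equation}\nonumber
\frac{dl(\theta)}{d\text{vec}(D_k)} = \mathcal{D}_{q_k}\mathcal{D}_{q_k}'\frac{\partial l(\theta)}{\partial\text{vec}(D_k)},
\end{equation}
which is stated directly in the excerpt preceding the corollary. Here one must be careful that the matrix derivative conventions of \citet{Turkington2013} agree with those used in this paper up to the transposition discussed in Appendix \ref{derivAppendix}; I would insert a one-sentence remark noting the compatibility of the two conventions for this identity.

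Next I would substitute the expression for $\partial l(\theta)/\partial\text{vec}(D_k)$ supplied by Corollary \ref{partialDerivCorollary}, namely
\begin{equation}\nonumber
\frac{\partial l(\theta)}{\partial\text{vec}(D_k)} = \frac{1}{2}\text{vec}\bigg(\sum_{j=1}^{l_k}Z_{(k,j)}'V^{-1}\Big(\frac{ee'}{\sigma^2}-V\Big)V^{-1}Z_{(k,j)}\bigg),
\end{equation}
directly into the right-hand side of the previous display. Pulling the scalar $\tfrac{1}{2}$ out in front and leaving $\mathcal{D}_{q_k}\mathcal{D}_{q_k}'$ multiplying the vectorized matrix gives exactly the claimed formula.

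There is essentially no obstacle in this derivation beyond a notational one. The only subtle point to be mindful of is the derivative convention: the identity from \citet{Turkington2013} is being applied in the paper's own convention, so I would briefly justify that the factor $\mathcal{D}_{q_k}\mathcal{D}_{q_k}'$ appears on the left (rather than on the right, as it would under the transposed convention). Beyond this bookkeeping, no nontrivial computation is required, so the proof can be stated in a few lines.
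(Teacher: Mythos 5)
Your proposal matches the paper's own proof exactly: the paper obtains this corollary by applying Theorem 5.12 of \citet{Turkington2013} in the form $\frac{dl(\theta)}{d\text{vec}(D_k)}=\mathcal{D}_{q_k}\mathcal{D}_{q_k}'\frac{\partial l(\theta)}{\partial \text{vec}(D_k)}$ and substituting the expression from Corollary \ref{partialDerivCorollary}. Your additional remark on the derivative convention is consistent with the discussion the paper already provides at the start of Appendix \ref{derivAppendix}.
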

\noindent
Finally, by noting that the vectorisation and half-vectorisation operators satisfy the following relationship:
\begin{equation}\nonumber
    \text{vec}(D_k)=\mathcal{D}^+_{q_k}\text{vech}(D_k),
\end{equation}
the following corollary is obtained. This is the result stated by (\ref{FSderiv3}) in Section \ref{FSsection}.

\begin{corollary}\label{vechTotalDerivCor}The total derivative vector of the log-likelihood with respect to vech$(D_k)$ is given as:
\begin{equation}\nonumber
 \frac{d l(\theta)}{d \text{vech}(D_k)} = \frac{1}{2}\mathcal{D}'_{q_k}\text{vec}\bigg(\sum_{j=1}^{l_k}Z_{(k,j)}'V^{-1}\bigg(\frac{ee'}{\sigma^2}-V\bigg)V^{-1}Z_{(k,j)}\bigg).
\end{equation}
\end{corollary}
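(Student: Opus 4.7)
The plan is to obtain Corollary \ref{vechTotalDerivCor} by reducing it, via a chain-rule argument, to one of the two preceding corollaries in the appendix. The cleanest route starts from Corollary \ref{partialDerivCorollary}, which supplies the partial derivative $\partial l/\partial\text{vec}(D_k)$, and exploits the defining relation of the duplication matrix, $\text{vec}(D_k) = \mathcal{D}_{q_k}\text{vech}(D_k)$, given in Section \ref{NotationSec}. Under the matrix-derivative convention spelled out at the top of Appendix \ref{derivAppendix}, this relation immediately gives the Jacobian $\partial\text{vec}(D_k)/\partial\text{vech}(D_k) = \mathcal{D}_{q_k}'$.

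The key step is then a single application of the chain rule for scalar-valued functions in this convention: for a scalar $l$ depending on $x$ through $y=y(x)$, one has $\partial l/\partial x = (\partial y/\partial x)\,\partial l/\partial y$. Taking $y=\text{vec}(D_k)$ and $x=\text{vech}(D_k)$, and substituting the expression from Corollary \ref{partialDerivCorollary}, produces
\begin{equation}\nonumber
\frac{\partial l(\theta)}{\partial\text{vech}(D_k)} = \frac{1}{2}\mathcal{D}_{q_k}'\,\text{vec}\bigg(\sum_{j=1}^{l_k} Z_{(k,j)}' V^{-1}\Big(\tfrac{ee'}{\sigma^2}-V\Big)V^{-1}Z_{(k,j)}\bigg),
\end{equation}
which matches the claimed form. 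To convert this into a statement about the total derivative, I would note that vech$(D_k)$ consists of the $q_k(q_k+1)/2$ algebraically independent parameters of $D_k$, so no symmetry-induced identifications are in effect; consequently the total and partial derivatives with respect to vech$(D_k)$ coincide, giving $dl/d\text{vech}(D_k)=\partial l/\partial\text{vech}(D_k)$ and completing the proof.

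As a sanity check, I would also verify consistency with Corollary \ref{vecTotalDerivCor}: left-multiplying its right-hand side by $\mathcal{D}_{q_k}^{+}=(\mathcal{D}_{q_k}'\mathcal{D}_{q_k})^{-1}\mathcal{D}_{q_k}'$ collapses the factor $\mathcal{D}_{q_k}\mathcal{D}_{q_k}'$ to $\mathcal{D}_{q_k}'$ via $\mathcal{D}_{q_k}^{+}\mathcal{D}_{q_k}=I_{q_k(q_k+1)/2}$, reproducing the same answer and confirming that Corollaries \ref{vecTotalDerivCor} and \ref{vechTotalDerivCor} are mutually consistent through the usual vec–vech pseudoinverse identity.

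The main obstacle, such as it is, is bookkeeping rather than analysis: the partial versus total derivative distinction for symmetric matrices has to be handled carefully, since an off-diagonal entry of $D_k$ contributes to two positions of vec$(D_k)$ but only one of vech$(D_k)$. This is precisely why the proof goes through the partial form (Corollary \ref{partialDerivCorollary}) with the transpose Jacobian $\mathcal{D}_{q_k}'$, rather than attempting to push the total form (Corollary \ref{vecTotalDerivCor}) through a symmetry-aware chain rule, where the non-invertibility of $\mathcal{D}_{q_k}'\mathcal{D}_{q_k}$ to the identity would create an apparent obstruction.
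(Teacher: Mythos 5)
Your proof is correct and follows essentially the paper's own route: both start from Corollary \ref{partialDerivCorollary} and reduce the claim to duplication-matrix bookkeeping. The only difference is that you obtain $dl/d\text{vech}(D_k)=\mathcal{D}_{q_k}'\,\partial l/\partial\text{vec}(D_k)$ in a single chain-rule step using the Jacobian $\partial\text{vec}(D_k)/\partial\text{vech}(D_k)=\mathcal{D}_{q_k}'$ (together with the observation that total and partial derivatives coincide for the algebraically free parameters vech$(D_k)$), whereas the paper first passes through Corollary \ref{vecTotalDerivCor} via Turkington's Theorem 5.12 and then projects with $\mathcal{D}_{q_k}^{+}$, collapsing $\mathcal{D}_{q_k}^{+}\mathcal{D}_{q_k}\mathcal{D}_{q_k}'$ to $\mathcal{D}_{q_k}'$ --- which is precisely the consistency check you carry out at the end.
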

This result diverges slightly from the proof given for the single-factor LMM in \citet{Demidenko:2013mfx}, which uses the inverse duplication matrix $\mathcal{D}^+_{q_k}$ in the place of the transposed duplication matrix,  $\mathcal{D}'_{q_k}$, in the above. The authors believe that the derivative given in \citet{Demidenko:2013mfx} is, in fact, vech$\big(\frac{\partial l(\theta)}{\partial\text{vec}(D_k)}\big)$, and not $\frac{d l(\theta)}{d\text{vech}(D_k)}$ as stated. It should be noted, however, that this is a minor amendment and has little impact on the performance of the algorithm in practice.

\subsection{Fisher Information matrix}\label{InfoMatAppendix}

In this appendix, we derive the Fisher Information matrix for the full representation of the parameter vector, $\theta^f$. The derivation of the elements $\mathcal{I}^f_\beta$, $\mathcal{I}^f_{\beta,\sigma^2}$ and $\mathcal{I}^f_{\sigma^2}$ for the multi-factor LMM is identical to that used for the single factor LMM given in \citet{Demidenko:2013mfx} and, therefore, will not be repeated here. Provided here are only derivations of components of the Fisher Information matrix which relate to $D_k$, for some factor $f_k$. To derive these results, we will follow a similar argument to that used by \citet{Demidenko:2013mfx} for the single-factor setting and will begin by noting that the Fisher Information matrix elements, $\{\mathcal{I}_{a,b}^f\}_{a,b \in \{\beta, \sigma^2, \text{vec}(D_1),\hdots,\text{vec}(D_r)\}}$, can be expressed by the following formula:

\begin{equation}\nonumber
    \mathcal{I}_{a,b}^f = \text{cov}\bigg(\frac{\partial l(\theta^f)}{\partial a},\frac{\partial l(\theta^f)}{\partial b}\bigg).
\end{equation}
Theorems \ref{covdldDkdBeta} and \ref{covdldDkdSigma2} provide derivation of equation (\ref{FFSFI1}) and Theorem \ref{covdldDk1Dk2} provides derivation of equation (\ref{FFSFI2}). Following this, Corollaries \ref{covdldDkdBetaCor}-\ref{covdldDk1Dk2Cor} detail the derivation of equations (\ref{FSFI2}) and (\ref{FSFI3}).

\begin{theorem}\label{covdldDkdBeta} For any arbitrary integer $k$ between $1$ and $r$, the covariance of the partial derivatives of $l(\theta^f)$ with respect to $\beta$ and vec$(D_k)$ is given by:

\begin{equation}\nonumber
    \mathcal{I}^f_{\beta,\text{vec}(D_k)}=\text{cov}\bigg(\frac{\partial l(\theta^f)}{\partial\beta},\frac{\partial l(\theta^f)}{\partial \text{vec}(D_k)}\bigg)=\mathbf{0}_{p,q_k^2}.
\end{equation}

\end{theorem}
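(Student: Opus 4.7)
The plan is to compute the covariance directly from the two explicit score expressions, (\ref{FSderiv1}) and (\ref{FFSderiv1}), and exploit the Gaussianity of $e = Y - X\beta$ to show that the relevant expectation is a third-order central moment of a mean-zero Gaussian, hence zero.

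First I would check that both score components are mean-zero random variables. Since $E[e] = 0$, we have $E[\partial l/\partial \beta] = \sigma^{-2}X'V^{-1}E[e] = 0$. For the second component, since $E[ee'] = \sigma^2 V$, the matrix $ee'/\sigma^2 - V$ inside (\ref{FFSderiv1}) has zero expectation, so $E[\partial l/\partial \text{vec}(D_k)] = 0$ as well. Consequently,
\begin{equation}\nonumber
\mathcal{I}^f_{\beta,\text{vec}(D_k)} = E\!\left[\frac{\partial l(\theta^f)}{\partial \beta}\bigg(\frac{\partial l(\theta^f)}{\partial \text{vec}(D_k)}\bigg)'\right].
\end{equation}

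Next I would substitute the explicit forms of the two derivatives. The constant $-V$ term contributes nothing to the covariance since it is non-stochastic and multiplies a mean-zero quantity; so, after pulling out scalar constants and the outer deterministic matrices $X'V^{-1}$ and $\sigma^{-2}Z'_{(k,j)}V^{-1} \otimes Z'_{(k,j)}V^{-1}$ (using $\text{vec}(AXB) = (B' \otimes A)\text{vec}(X)$), the covariance reduces to a linear combination of entries of the form $E[e_a e_b e_c]$. Because $e \sim N(0, \sigma^2 V)$ has mean zero, every third-order central moment vanishes, so every such entry is zero.

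The main ``obstacle'' is purely bookkeeping: one must be careful that the vectorization manipulation places the random factor $e$ cleanly on one side and that the deterministic matrices $X'V^{-1}$, $Z'_{(k,j)}V^{-1}$ and the Kronecker factors are extracted without disturbing which indices of $e$ are being summed against which. Once this is done, the conclusion reduces to the standard Gaussian fact that $\text{cov}(Ae,\,e'Be) = 0$ whenever $e$ is a mean-zero Gaussian vector and $A, B$ are deterministic, which I would cite (or prove in one line via $E[e_i e_j e_k] = 0$). Combining these observations yields $\mathcal{I}^f_{\beta,\text{vec}(D_k)} = \mathbf{0}_{p,q_k^2}$.

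I would also remark, as a sanity check for later use, that this argument does not depend on the parameterisation of $D_k$: replacing $\text{vec}(D_k)$ by $\text{vech}(D_k)$ amounts to right-multiplying by $\mathcal{D}'_{q_k}$, which leaves the zero block zero and immediately yields the corresponding entry $\mathcal{I}^h_{\beta,\text{vech}(D_k)} = \mathbf{0}_{p,q_k(q_k+1)/2}$ in (\ref{FSFI2}) as a corollary.
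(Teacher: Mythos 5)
Your proposal is correct and follows essentially the same route as the paper's proof: both reduce the covariance to an expectation of a product of a linear and a quadratic function of a mean-zero Gaussian, which vanishes because third-order central moments of the Gaussian are zero (the paper merely whitens first, writing $u=\sigma^{-1}V^{-1/2}e$ so that $u\sim N(0,I_n)$, and tests the matrix derivative against an arbitrary vector $c$, while you work directly with $e$ and the vectorized form). Your closing remark about right-multiplying by $\mathcal{D}'_{q_k}$ to obtain $\mathcal{I}^h_{\beta,\text{vech}(D_k)}=\mathbf{0}$ matches how the paper derives Corollary \ref{covdldDkdBetaCor}.
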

\begin{proof} First, let $u$ and $T_{(k,j)}$ denote the following quantities:
\begin{equation}\label{utdef}
    u = \sigma^{-1}V^{-\frac{1}{2}}e, \hspace{0.5cm} T_{(k,j)}=Z_{(k,j)}'V^{-\frac{1}{2}}.
\end{equation}
As $e\sim N(0, \sigma^2 V)$, it follows trivially that $u\sim N(0,I_n)$. Let $c$ be an arbitrary column vector of length $q_k$. Noting that the derivative of the log-likelihood function has mean zero, the below can be seen to be true:
\begin{equation}\nonumber\text{cov}\bigg(\frac{\partial l(\theta | y)}{\partial \beta},c'\frac{\partial l(\theta | y)}{\partial D_k}c \bigg) = \mathbb{E}\bigg[ \frac{\partial l(\theta | y)}{\partial \beta}c'\frac{\partial l(\theta | y)}{\partial D_k}c\bigg]. 
\end{equation}
By rewriting in terms of $u$ and $T_{(k,j)}$, and noting that $\mathbb{E}[u]=0$, the right hand side of the above can be seen to simplify to:
\begin{equation}\nonumber \mathbb{E}\bigg[\sigma^{-1}XV^{-\frac{1}{2}}u c'\bigg(\frac{1}{2}\sum_{j=1}^{l_k}(T_{(k,j)}u)(T_{(k,j)}u)'\bigg)c\bigg]
\end{equation}
\begin{equation}\nonumber
 =\frac{1}{2}\sigma^{-1}XV^{-\frac{1}{2}}\sum_{j=1}^{l_k}\mathbb{E}\bigg[u c'T_{(k,j)}uu'\bigg]T_{(k,j)}'c=\mathbf{0}_{p,q_k^2}.
\end{equation}
That the above is equal to a matrix of zeros, follows directly from noting that the third moment of the Normal distribution is $0$. The result of the theorem now follows.\qed
\end{proof}

\begin{theorem}\label{covdldDkdSigma2} For any arbitrary integer $k$ between $1$ and $r$, the covariance of the partial derivatives of $l(\theta^f)$ with respect to $\sigma^2$ and vec$(D_k)$ is given by:
\begin{equation}\nonumber
\begin{aligned}[b]
    & \mathcal{I}^f_{\sigma^2, \text{vec}(D_k)}=\text{cov}\bigg(\frac{\partial l(\theta^f)}{\partial\sigma^2},\frac{\partial l(\theta^f)}{\partial \text{vec}(D_k)}\bigg)= \\
    & \frac{1}{2\sigma^2}\text{vec}'\bigg(\sum_{j=1}^{l_k}Z'_{(k,j)}V^{-1}Z_{(k,j)}\bigg).
\end{aligned}
\end{equation}

\end{theorem}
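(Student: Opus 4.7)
The plan is to follow the same substitution-based strategy used in the proof of Theorem \ref{covdldDkdBeta}. Since the score has mean zero, the covariance equals the expectation of the product, so it suffices to evaluate
$$\mathcal{I}^f_{\sigma^2,\text{vec}(D_k)} = \mathbb{E}\bigg[\frac{\partial l(\theta^f)}{\partial \sigma^2}\bigg(\frac{\partial l(\theta^f)}{\partial \text{vec}(D_k)}\bigg)'\bigg].$$
I would then reintroduce the change of variables $u = \sigma^{-1}V^{-1/2}e$ and $T_{(k,j)} = Z'_{(k,j)}V^{-1/2}$ from (\ref{utdef}), so that $u \sim N(0,I_n)$. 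Under this substitution the scalar derivative (\ref{FSderiv2}) rewrites as $\frac{1}{2\sigma^2}(u'u - n)$, and each summand of the matrix inside (\ref{FFSderiv1}), namely $Z'_{(k,j)}V^{-1}(ee'/\sigma^2 - V)V^{-1}Z_{(k,j)}$, rewrites as $T_{(k,j)}(uu' - I_n)T'_{(k,j)}$, since $ee'/\sigma^2 - V = V^{1/2}(uu' - I_n)V^{1/2}$.

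Next, using linearity of expectation and of vec, I would pull the factors $T_{(k,j)}$ outside, reducing the whole computation to the single matrix-valued fourth moment of a standard normal vector,
$$M = \mathbb{E}\bigl[(u'u - n)(uu' - I_n)\bigr].$$
This is the crux of the argument. Expanding and using $\mathbb{E}[uu'] = I_n$ and $\mathbb{E}[u'u] = n$ collapses $M$ to $\mathbb{E}[(u'u)\,uu'] - n\,I_n$. A direct entrywise computation via Isserlis' theorem splits into two cases: for $i\neq j$ the moment vanishes by symmetry (odd total order in $u_i$ or $u_j$), and for $i=j$ the sum $\sum_k \mathbb{E}[u_k^2 u_i^2]$ equals $3 + (n-1) = n+2$ after isolating the $k=i$ term. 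Hence $\mathbb{E}[(u'u)\,uu'] = (n+2)I_n$ and $M = 2I_n$.

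Substituting $M = 2I_n$ back gives $\mathbb{E}\bigl[(u'u - n)\,T_{(k,j)}(uu'-I_n)T'_{(k,j)}\bigr] = 2\,T_{(k,j)}T'_{(k,j)} = 2\,Z'_{(k,j)}V^{-1}Z_{(k,j)}$. Summing over $j$, multiplying by the prefactor $\frac{1}{2\sigma^2}\cdot\frac{1}{2}$, and then applying vec and transposing the result delivers $\frac{1}{2\sigma^2}\text{vec}'\bigl(\sum_{j=1}^{l_k} Z'_{(k,j)}V^{-1}Z_{(k,j)}\bigr)$, as claimed. I expect the main obstacle to be the fourth-moment calculation $\mathbb{E}[(u'u)\,uu']$, although this is clean once one splits into diagonal versus off-diagonal entries; a secondary concern is simply keeping the row-vs.-column-vector conventions consistent so that the answer matches the $\text{vec}'(\cdot)$ form of the statement, but this is pure bookkeeping.
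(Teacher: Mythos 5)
Your proof is correct, but it takes a genuinely different route from the paper's. The paper converts everything into Kronecker-product form, writing the relevant quantity as $\frac{1}{4\sigma^2}\,\mathrm{vec}'(I_n)\,\mathrm{cov}(u\otimes u)\sum_j (T_{(k,j)}\otimes T_{(k,j)})'$ and then invoking the Magnus--Neudecker identity $\mathrm{cov}(u\otimes u)=2N_n$, followed by the commutation rule $(A\otimes A)N_k=N_n(A\otimes A)$ and the absorption property $\mathrm{vec}'(A)N_{q_k}=\mathrm{vec}'(A)$ for symmetric $A$. You instead keep the score in matrix form, exploit the fact that $u'u-n$ is a scalar to pull $T_{(k,j)}$ straight out of the expectation, and reduce the whole problem to the contracted fourth moment $\mathbb{E}[(u'u-n)(uu'-I_n)]=2I_n$, which you verify entrywise with Isserlis' theorem. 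Your version is more elementary and self-contained: it needs only the weak moment fact $\mathbb{E}[(u'u)\,uu']=(n+2)I_n$ rather than the full $n^2\times n^2$ covariance of $u\otimes u$, and it avoids the symmetrization matrix $N_n$ entirely (which also explains why no $N_{q_k}$ factor appears to be absorbed at the end --- the symmetry of $T_{(k,j)}T_{(k,j)}'$ is used implicitly rather than via $\mathrm{vec}'(A)N=\mathrm{vec}'(A)$). What the paper's heavier machinery buys is uniformity: the identity $\mathrm{cov}(u\otimes u)=2N_n$ is reused verbatim in the harder $\mathrm{vec}(D_{k_1})$--$\mathrm{vec}(D_{k_2})$ block (Theorem \ref{covdldDk1Dk2}), where the scalar trick you rely on is no longer available. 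All of your intermediate claims check out, including the prefactor bookkeeping $\frac{1}{2\sigma^2}\cdot\frac{1}{2}\cdot 2=\frac{1}{2\sigma^2}$ and the rewriting $ee'/\sigma^2-V=V^{1/2}(uu'-I_n)V^{1/2}$.
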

\begin{proof} To begin, $u$ and $T_{(k,j)}$ are defined as in (\ref{utdef}). Rewriting the covariance in Theorem \ref{covdldDkdSigma2} in terms of (\ref{utdef}) and removing constant terms gives:
\begin{equation}\nonumber
\frac{1}{4\sigma^2}\text{cov}\bigg(u'u,\text{vec}\bigg(\sum_{j=1}^{l_k}(T_{(k,j)}u)(T_{(k,j)}u)'\bigg)\bigg).
\end{equation}
Noting that the Kronecker product satisfies the property vec$(aa')=a\otimes a$, for all column vectors $a$, we obtain that the above is equal to:
\begin{equation}\nonumber\frac{1}{4\sigma^2}\text{cov}\bigg(u'u,\sum_{j=1}^{l_k}\bigg[(T_{(k,j)}u)\otimes (T_{(k,j)}u)\bigg]\bigg).
\end{equation}
We now note that the Kronecker product satisfies vec$(ABC)=(C'\otimes A)$vec$(B)$ for arbitrary matrices $A$, $B$ and $C$ of appropriate dimensions. Utilizing this and applying the mixed product property of the Kronecker product to the above expression yeilds:
\begin{equation}\nonumber
\frac{1}{4\sigma^2}\text{cov}\bigg(\text{vec}'(I_n)(u \otimes u),\sum_{j=1}^{l_k}\bigg[(T_{(k,j)}\otimes T_{(k,j)})(u\otimes u)\bigg]\bigg).
\end{equation}
By moving constant values outside of the covariance function using standard results the above now becomes:
\begin{equation}\nonumber\frac{1}{4\sigma^2}\text{vec}'(I_n)\text{cov}(u \otimes u)\sum_{j=1}^{l_k}\bigg[(T_{(k,j)}\otimes T_{(k,j)})\bigg]^{'}.
\end{equation}
Noting that $u \sim N(0,I_n)$ we now employ a result from \citet{Magnus1986:dup} which states that $\text{cov}(u \otimes u)=2N_n$. Substituting this result into the previous expression gives the below:
\begin{equation}\nonumber\frac{1}{2\sigma^2}\text{vec}'(I_n)N_n\sum_{j=1}^{l_k}\bigg[(T_{(k,j)}\otimes T_{(k,j)})\bigg]'.
\end{equation}
We now note a further result given in \citet{Magnus1986:dup}; the matrix $N_n$ satisfies $(A \otimes A)N_k = N_n(A \otimes A)$ for all $A$, $n$ and $k$  such that the resulting matrix multiplications are well defined. Applying this result to the above expression, the below can be obtained:
\begin{equation}\nonumber\frac{1}{2\sigma^2}\text{vec}'(I_n)\sum_{j=1}^{l_k}\bigg[(T_{(k,j)}\otimes T_{(k,j)})\bigg]'N_{q_k}.
\end{equation}
Finally, again using the relationship vec$(ABC)=(C'\otimes A)$vec$(B)$ which is satisfied by the Kronecker product for arbitrary matrices $A$, $B$ and $C$ of appropriate dimensions, the above can be seen to reduce to:
\begin{equation}\nonumber\frac{1}{2\sigma^2}\text{vec}'\bigg(\sum_{j=1}^{l_k}T_{(k,j)}T_{(k,j)}'\bigg)N_{q_k}.
\end{equation}
Using the definition of $T_{(k,j)}$, given in (\ref{utdef}), and noting that the matrix $N_{q_k}$ satisfies the relationship $\text{vec}'(A)N_{q_k}=\text{vec}'(A)$ for all appropriately sized symmetric matrices $A$, the result now follows. \qed

\end{proof}
\begin{theorem}\label{covdldDk1Dk2} For any arbitrary integers $k_1$ and $k_2$ between $1$ and $r$, the covariance of the partial derivatives of $l(\theta^f)$ with respect to vec$(D_{k_1})$ and vec$(D_{k_2})$ is given by:

\begin{equation}\nonumber
  \begin{aligned}[b]
    &\mathcal{I}^f_{\text{vec}(D_{k_1}),\text{vec}(D_{k_2})}=\text{cov}\bigg(\frac{\partial l(\theta^f)}{\partial \text{vec}(D_{k_1})},\frac{\partial l(\theta^f)}{\partial \text{vec}(D_{k_2})}\bigg)=\\ &\frac{1}{2}N_{q_{k_1}}\sum_{j=1}^{l_{k_2}}\sum_{i=1}^{l_{k_1}}(Z'_{(k_1,i)}V^{-1}Z_{(k_2,j)}\otimes Z'_{(k_1,i)}V^{-1}Z_{(k_2,j)}).
    \end{aligned}
\end{equation}

\end{theorem}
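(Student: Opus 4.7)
The plan is to adapt the strategy used in Theorems \ref{covdldDkdBeta} and \ref{covdldDkdSigma2} to this two-factor-block setting. As before, I would start by introducing the reduced variables $u=\sigma^{-1}V^{-1/2}e$, which satisfies $u\sim N(0,I_n)$, and $T_{(k,j)}=Z'_{(k,j)}V^{-1/2}$. Using Corollary \ref{partialDerivCorollary}, the score with respect to vec$(D_k)$ rewrites as
\begin{equation}\nonumber
\frac{\partial l(\theta^f)}{\partial\text{vec}(D_k)}
=\frac{1}{2}\sum_{j=1}^{l_k}\text{vec}\bigl(T_{(k,j)}uu'T'_{(k,j)}\bigr)-\frac{1}{2}\sum_{j=1}^{l_k}\text{vec}\bigl(T_{(k,j)}T'_{(k,j)}\bigr).
\end{equation}
The second term is deterministic and so drops out of the covariance by bilinearity. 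Applying vec$(ABC)=(C'\otimes A)\text{vec}(B)$ together with vec$(uu')=u\otimes u$ converts the stochastic part into $(T_{(k,j)}\otimes T_{(k,j)})(u\otimes u)$.

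Next I would substitute these expressions into the definition $\mathcal{I}^f_{\text{vec}(D_{k_1}),\text{vec}(D_{k_2})}$ and pull the deterministic factors out of the covariance:
\begin{equation}\nonumber
\mathcal{I}^f_{\text{vec}(D_{k_1}),\text{vec}(D_{k_2})}=\frac{1}{4}\sum_{i=1}^{l_{k_1}}\sum_{j=1}^{l_{k_2}}(T_{(k_1,i)}\otimes T_{(k_1,i)})\,\text{cov}(u\otimes u)\,(T_{(k_2,j)}\otimes T_{(k_2,j)})'.
\end{equation}
Then I would invoke the identity $\text{cov}(u\otimes u)=2N_n$ from \citet{Magnus1986:dup} (used already in Theorem \ref{covdldDkdSigma2}) to replace the covariance with $2N_n$, removing the factor of $1/2$.

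The decisive step is the commutation property $(A\otimes A)N_k=N_m(A\otimes A)$ for any $(m\times k)$ matrix $A$, which lets me push $N_n$ across the left Kronecker factor to yield $N_{q_{k_1}}$ on the outside:
\begin{equation}\nonumber
(T_{(k_1,i)}\otimes T_{(k_1,i)})N_n=N_{q_{k_1}}(T_{(k_1,i)}\otimes T_{(k_1,i)}).
\end{equation}
Finally, applying the mixed product property $(A\otimes A)(B\otimes B)=AB\otimes AB$ to $(T_{(k_1,i)}\otimes T_{(k_1,i)})(T'_{(k_2,j)}\otimes T'_{(k_2,j)})$ collapses each summand to $T_{(k_1,i)}T'_{(k_2,j)}\otimes T_{(k_1,i)}T'_{(k_2,j)}$. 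Unwinding the definition of $T_{(k,j)}$ gives $T_{(k_1,i)}T'_{(k_2,j)}=Z'_{(k_1,i)}V^{-1}Z_{(k_2,j)}$, producing exactly the advertised expression.

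The main obstacle I anticipate is bookkeeping rather than conceptual: verifying that the dimensions of the commutation identity land on $N_{q_{k_1}}$ (not, say, $N_{q_{k_2}}$), and confirming that the identity $\text{cov}(u\otimes u)=2N_n$ is correctly being used in an off-diagonal covariance (rather than a variance) — here one needs $\text{cov}(Au,Bu)$-type identities for quadratic forms in a standard normal, which is exactly what $2N_n$ encodes since both score blocks share the same underlying $u$. Beyond these dimensional checks, the derivation is a transparent chaining of Kronecker identities analogous to the single-block case already handled in Theorem \ref{covdldDkdSigma2}.
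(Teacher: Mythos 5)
Your proposal is correct and follows essentially the same route as the paper's own proof: rewriting the score in terms of $u=\sigma^{-1}V^{-1/2}e$ and $T_{(k,j)}=Z'_{(k,j)}V^{-1/2}$, dropping the deterministic term, applying $\text{vec}(aa')=a\otimes a$ and the mixed product property, invoking $\text{cov}(u\otimes u)=2N_n$ from \citet{Magnus1986:dup}, and commuting $N_n$ through the Kronecker factors to land on $N_{q_{k_1}}$. The dimensional bookkeeping you flag resolves exactly as you suspect, since $T_{(k_1,i)}$ is $(q_{k_1}\times n)$.
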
  
\begin{proof}
By Corollary \ref{partialDerivCorollary}, and using the notation introduced in (\ref{utdef}), it can be seen that the partial derivative vector with respect to vec$(D_k)$ is given by:
\begin{equation}\nonumber\frac{\partial l(\theta | y)}{\partial \text{vec}(D_k)}=\text{vec}\bigg(\frac{1}{2}\sum_{j=1}^{l_k}(T_{(k,j)}u)(T_{(k,j)}u)'-\frac{1}{2}\sum_{j=1}^{l_k}T_{(k,j)}T_{(k,j)}'\bigg).\end{equation}
By substituting the above into the covariance expression appearing in Theorem 4 and removing constant terms, the below is obtained:
\begin{equation}\nonumber
\begin{aligned}[b]
&\frac{1}{4}\text{cov}\bigg(\sum_{j=1}^{l_{k_1}}\text{vec}\bigg[(T_{(k_1,j)}u)(T_{(k_1,j)}u)'\bigg],\\
& \hspace{1.8cm}\sum_{j=1}^{l_{k_2}}\text{vec}\bigg[(T_{(k_2,j)}u)(T_{(k_2,j)}u)'\bigg]\bigg).
\end{aligned}
\end{equation}
Noting that for any column vector $a$, the Kronecker product satisfies the property vec$(aa')=a\otimes a$, the above can be seen to be equal to:
\begin{equation}\nonumber
\begin{aligned}[b]
& \frac{1}{4}\text{cov}\bigg(\sum_{j=1}^{l_{k_1}}\bigg[(T_{(k_1,j)}u)\otimes (T_{(k_1,j)}u)\bigg],\\
& \hspace{1.8cm}\sum_{j=1}^{l_{k_2}}\bigg[(T_{(k_2,j)}u)\otimes (T_{(k_2,j)}u)\bigg]\bigg).
\end{aligned}\end{equation}
Using the mixed product property of the Kronecker product and moving constant matrices outside of the covariance function, using standard results, gives:
\begin{equation}\nonumber\frac{1}{4}\sum_{j=1}^{l_{k_1}}\bigg[(T_{(k_1,j)}\otimes T_{(k_1,j)})\bigg]\text{cov}(u\otimes u)\sum_{j=1}^{l_{k_2}}\bigg[(T_{(k_2,j)}\otimes T_{(k_2,j)})\bigg]'.
\end{equation}
Noting that $u\sim N(0,I_n)$ and using a result from \cite{Magnus1986:dup} which states that cov$(u\otimes u)=2N_n$, yields:
\begin{equation}\nonumber\frac{1}{4}\sum_{j=1}^{l_{k_1}}\bigg[(T_{(k_1,j)}\otimes T_{(k_1,j)})\bigg]2N_n\sum_{j=1}^{l_{k_2}}\bigg[(T_{(k_2,j)}'\otimes T_{(k_2,j)}')\bigg].
\end{equation}
We now utilize the fact that, as noted by \cite{Magnus1986:dup}, the matrix $N_n$ satisfies $(A \otimes A)N_k = N_n(A \otimes A)$ for all $A$, $n$ and $k$  such that the resulting matrix multiplications are well defined. Using this and the mixed product property of the Kronecker product the above can now be seen to be equal to:
\begin{equation}\nonumber\frac{1}{2}N_{q_{k_1}}\sum_{j=1}^{l_{k_2}}\sum_{i=1}^{l_{k_1}}\bigg[(T_{(k_1,i)}T_{(k_2,j)}')\otimes (T_{(k_1,i)}T_{(k_2,j)}')\bigg].
\end{equation}
From the definition of $T_{(k,j)}$, it can be seen that the above is equal to the result of Theorem 4. \qed
\end{proof}

\noindent
We now turn attention to the derivation of (\ref{FSFI2}) and (\ref{FSFI3}). As in the proofs of Corollaries \ref{vecTotalDerivCor} and \ref{vechTotalDerivCor} of Appendix \ref{derivAppendix}, we begin by noting that:

\begin{equation}\nonumber
    \begin{aligned}[b]
    \frac{dl(\theta)}{d\text{vech}(D_k)}& = \mathcal{D}^+_{q_k}\frac{dl(\theta)}{d\text{vec}(D_k)}\\
    & =\mathcal{D}^+_{q_k}\mathcal{D}_{q_k}\mathcal{D}_{q_k}'\frac{\partial l(\theta)}{\partial \text{vec}(D_k)}\\
    & =\mathcal{D}_{q_k}'\frac{\partial l(\theta)}{\partial \text{vec}(D_k)},
    \end{aligned}
\end{equation}
where the first equality follows from the definition of the duplication matrix and the second equality follows from Theorem 5.12 of \citet{Turkington2013}. Applying the above identity to Theorems \ref{covdldDkdBeta}, \ref{covdldDkdSigma2} and \ref{covdldDk1Dk2} and moving the matrix $\mathcal{D}_{q_k}'$ outside the covariance function in each, leads to the following three corollaries which, when taken in combination, provide equations (\ref{FSFI2}) and (\ref{FSFI3}).

\begin{corollary}\label{covdldDkdBetaCor} For any arbitrary integer $k$ between $1$ and $r$, the covariance of the total derivatives of $l(\theta^h)$ with respect to $\beta$ and vech$(D_k)$ is given by:

\begin{equation}\nonumber
    \mathcal{I}^h_{\beta,\text{vech}(D_k)}=\text{cov}\bigg(\frac{d l(\theta^h)}{d\beta},\frac{d l(\theta^h)}{d \text{vech}(D_k)}\bigg)=\mathbf{0}_{p,q_k(q_k+1)/2}.
\end{equation}

\end{corollary}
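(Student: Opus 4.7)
The plan is to leverage Theorem \ref{covdldDkdBeta} directly, which has already established that $\text{cov}(\partial l(\theta^f)/\partial \beta, \partial l(\theta^f)/\partial \text{vec}(D_k)) = \mathbf{0}_{p, q_k^2}$, and then translate this result from the full ($\text{vec}$) parameterisation into the half ($\text{vech}$) parameterisation. The key observation is that the score with respect to $\beta$ does not depend on the parameterisation chosen for $D_k$, so $dl(\theta^h)/d\beta = \partial l(\theta^f)/\partial \beta$ as given by (\ref{FSderiv1}), while for the covariance-related component we can invoke the identity $dl(\theta^h)/d\text{vech}(D_k) = \mathcal{D}_{q_k}' \partial l(\theta^f)/\partial \text{vec}(D_k)$ established in the discussion immediately preceding the corollary.

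With these two substitutions, I would write
\begin{equation}\nonumber
\mathcal{I}^h_{\beta,\text{vech}(D_k)} = \text{cov}\bigg(\frac{\partial l(\theta^f)}{\partial\beta}, \mathcal{D}_{q_k}'\frac{\partial l(\theta^f)}{\partial \text{vec}(D_k)}\bigg).
\end{equation}
Then, using the standard bilinearity property of covariance, the constant matrix $\mathcal{D}_{q_k}'$ can be moved outside, yielding
\begin{equation}\nonumber
\mathcal{I}^h_{\beta,\text{vech}(D_k)} = \text{cov}\bigg(\frac{\partial l(\theta^f)}{\partial\beta}, \frac{\partial l(\theta^f)}{\partial \text{vec}(D_k)}\bigg)\mathcal{D}_{q_k} = \mathcal{I}^f_{\beta,\text{vec}(D_k)}\mathcal{D}_{q_k},
\end{equation}
where the right-hand factor of $\mathcal{D}_{q_k}$ (rather than $\mathcal{D}_{q_k}'$) arises because of the convention that covariance matrices place the second argument's transpose on the right. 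Applying Theorem \ref{covdldDkdBeta} now gives $\mathbf{0}_{p,q_k^2}\mathcal{D}_{q_k}$, and since the product of a zero matrix with any conformable matrix is zero of the appropriate dimension, this equals $\mathbf{0}_{p, q_k(q_k+1)/2}$, which is the claimed result.

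There is really no substantive obstacle here; the proof is essentially a bookkeeping exercise that verifies the dimensions match and that the chain-rule identity for the half-vectorisation commutes cleanly with the covariance operation. The only minor care required is to be explicit about which side $\mathcal{D}_{q_k}$ appears on when extracting it from the covariance (driven by the definition $\text{cov}(X,Y) = E[(X - E[X])(Y-E[Y])']$ used throughout), and to confirm that the resulting matrix has the stated shape $p \times q_k(q_k+1)/2$, which it does since $\mathcal{D}_{q_k}$ has dimensions $q_k^2 \times q_k(q_k+1)/2$.
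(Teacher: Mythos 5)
Your proposal is correct and follows essentially the same route as the paper: the appendix derives the identity $\frac{dl(\theta)}{d\text{vech}(D_k)}=\mathcal{D}_{q_k}'\frac{\partial l(\theta)}{\partial \text{vec}(D_k)}$ and obtains the corollary by moving the constant duplication matrix outside the covariance and applying Theorem \ref{covdldDkdBeta}, exactly as you do. Your additional bookkeeping about which side $\mathcal{D}_{q_k}$ lands on and the resulting dimensions is accurate.
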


\begin{corollary}\label{covdldDkdSigma2Cor} For any arbitrary integer $k$ between $1$ and $r$, the covariance of the total derivatives of $l(\theta^h)$ with respect to $\sigma^2$ and vech$(D_k)$ is given by:

\begin{equation}\nonumber
\begin{aligned}[b]
    & \mathcal{I}^h_{\sigma^2, \text{vech}(D_k)}=\text{cov}\bigg(\frac{d l(\theta^h)}{d\sigma^2},\frac{d l(\theta^h)}{d \text{vech}(D_k)}\bigg)= \\
    & \frac{1}{2\sigma^2}\text{vec}'\bigg(\sum_{j=1}^{l_k}Z'_{(k,j)}V^{-1}Z_{(k,j)}\bigg)\mathcal{D}_{q_k}.
\end{aligned}
\end{equation}
\end{corollary}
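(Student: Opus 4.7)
The plan is to obtain Corollary \ref{covdldDkdSigma2Cor} as a direct consequence of Theorem \ref{covdldDkdSigma2}, exactly mirroring the argument sketched in the passage immediately preceding Corollary \ref{covdldDkdBetaCor}. The key identity to invoke is
\begin{equation}\nonumber
\frac{dl(\theta)}{d\text{vech}(D_k)} = \mathcal{D}_{q_k}'\frac{\partial l(\theta)}{\partial \text{vec}(D_k)},
\end{equation}
which was established in the derivation preceding Corollary \ref{covdldDkdBetaCor} using the definition of the duplication matrix together with Theorem 5.12 of \citet{Turkington2013}. Since $\sigma^2$ is a scalar parameter (no repeated or symmetric counterparts), the total and partial derivatives with respect to $\sigma^2$ coincide, so $dl(\theta^h)/d\sigma^2 = \partial l(\theta^f)/\partial \sigma^2$.

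First I would write the left-hand side of the claim as
\begin{equation}\nonumber
\text{cov}\bigg(\frac{dl(\theta^h)}{d\sigma^2},\frac{dl(\theta^h)}{d\text{vech}(D_k)}\bigg) = \text{cov}\bigg(\frac{\partial l(\theta^f)}{\partial \sigma^2},\mathcal{D}_{q_k}'\frac{\partial l(\theta^f)}{\partial \text{vec}(D_k)}\bigg).
\end{equation}
Next, since $\mathcal{D}_{q_k}'$ is a constant (non-random) matrix and the covariance of a scalar with a linear transformation of a vector satisfies $\text{cov}(X,AY)=\text{cov}(X,Y)A'$, I would pull $\mathcal{D}_{q_k}'$ outside the covariance, giving a factor of $\mathcal{D}_{q_k}$ on the right.

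Finally, I would substitute the expression provided by Theorem \ref{covdldDkdSigma2} for the remaining covariance, namely $\frac{1}{2\sigma^2}\text{vec}'\big(\sum_{j=1}^{l_k}Z'_{(k,j)}V^{-1}Z_{(k,j)}\big)$, so that the full right-hand side becomes
\begin{equation}\nonumber
\frac{1}{2\sigma^2}\text{vec}'\bigg(\sum_{j=1}^{l_k}Z'_{(k,j)}V^{-1}Z_{(k,j)}\bigg)\mathcal{D}_{q_k},
\end{equation}
matching the statement of the corollary. There is no real obstacle here: the work is entirely absorbed into Theorem \ref{covdldDkdSigma2} and the chain-rule/duplication-matrix identity that was already recorded just before Corollary \ref{covdldDkdBetaCor}. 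The only point requiring minor care is ensuring the duplication matrix is transposed and placed on the correct side when moving it outside the covariance operator, which is fixed by the convention $\text{cov}(X,AY)=\text{cov}(X,Y)A'$.
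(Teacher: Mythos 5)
Your proposal is correct and follows essentially the same route as the paper: the paper derives the identity $dl(\theta)/d\text{vech}(D_k)=\mathcal{D}_{q_k}'\,\partial l(\theta)/\partial\text{vec}(D_k)$ immediately before the corollaries and obtains Corollary \ref{covdldDkdSigma2Cor} by moving $\mathcal{D}_{q_k}'$ outside the covariance and invoking Theorem \ref{covdldDkdSigma2}, exactly as you do. Your handling of the transpose convention and of the scalar $\sigma^2$ derivative is also consistent with the paper's argument.
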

\begin{corollary}\label{covdldDk1Dk2Cor} For any arbitrary integers $k_1$ and $k_2$ between $1$ and $r$, the covariance of the total derivatives of $l(\theta^h)$ with respect to vech$(D_{k_1})$ and vech$(D_{k_2})$ is given by:

\begin{equation}\nonumber
  \begin{aligned}[b]
    &\mathcal{I}^h_{\text{vech}(D_{k_1}),\text{vech}(D_{k_2})}=\text{cov}\bigg(\frac{d l(\theta^h)}{d \text{vech}(D_{k_1})},\frac{d l(\theta^h)}{d \text{vech}(D_{k_2})}\bigg)=\\ &\frac{1}{2}\mathcal{D}'_{q_{k_1}}\sum_{j=1}^{l_{k_2}}\sum_{i=1}^{l_{k_1}}(Z'_{(k_1,i)}V^{-1}Z_{(k_2,j)}\otimes Z'_{(k_1,i)}V^{-1}Z_{(k_2,j)})\mathcal{D}_{q_{k_2}}.
    \end{aligned}
\end{equation}
\end{corollary}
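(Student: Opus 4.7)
The plan is to reduce this corollary to Theorem \ref{covdldDk1Dk2} by pushing the duplication matrices through the covariance, and then absorb the stray symmetrisation matrix $N_{q_{k_1}}$ that appears in the ``full'' expression.

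First, I would invoke the identity
\[
\frac{dl(\theta)}{d\,\text{vech}(D_k)} \;=\; \mathcal{D}_{q_k}'\,\frac{\partial l(\theta)}{\partial \text{vec}(D_k)},
\]
which is established in the paragraph just preceding the three corollaries (via the duplication-matrix identity $\mathcal{D}_{q_k}^{+}\mathcal{D}_{q_k}\mathcal{D}_{q_k}' = \mathcal{D}_{q_k}'$, itself obtained from Theorem 5.12 of \citet{Turkington2013}). Applied on both sides of the covariance and using bilinearity to move the constant matrices $\mathcal{D}_{q_{k_1}}'$ and $\mathcal{D}_{q_{k_2}}$ out, this yields
\[
\mathcal{I}^h_{\text{vech}(D_{k_1}),\text{vech}(D_{k_2})} \;=\; \mathcal{D}_{q_{k_1}}'\,\mathcal{I}^f_{\text{vec}(D_{k_1}),\text{vec}(D_{k_2})}\,\mathcal{D}_{q_{k_2}}.
\]

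Next I would substitute the expression for $\mathcal{I}^f_{\text{vec}(D_{k_1}),\text{vec}(D_{k_2})}$ furnished by Theorem \ref{covdldDk1Dk2}. This produces precisely the double sum appearing in the statement, pre-multiplied by $\mathcal{D}_{q_{k_1}}' N_{q_{k_1}}$ and post-multiplied by $\mathcal{D}_{q_{k_2}}$. To finish I need to show $\mathcal{D}_{q_{k_1}}' N_{q_{k_1}} = \mathcal{D}_{q_{k_1}}'$. This follows from the well-known identity $N_{q_k} \mathcal{D}_{q_k} = \mathcal{D}_{q_k}$ (since $N_{q_k} = \tfrac{1}{2}(I + K_{q_k,q_k})$ and the commutation matrix fixes the vectorisation of any symmetric matrix, so in particular $K_{q_k,q_k}\mathcal{D}_{q_k} = \mathcal{D}_{q_k}$); transposing and using the symmetry of $N_{q_k}$ gives $\mathcal{D}_{q_k}' N_{q_k} = \mathcal{D}_{q_k}'$.

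The main subtle step is that last identity: one might worry that the asymmetric appearance of $N_{q_{k_1}}$ (on one side only) in Theorem \ref{covdldDk1Dk2} would lead to a non-symmetric form here, but because the duplication matrix already restricts attention to the symmetric subspace, absorbing $N_{q_{k_1}}$ into $\mathcal{D}_{q_{k_1}}'$ is exactly what is needed. No other manipulations are required, and the result follows immediately.
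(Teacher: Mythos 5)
Your proposal is correct and follows essentially the same route as the paper: apply the identity $\frac{dl}{d\text{vech}(D_k)}=\mathcal{D}_{q_k}'\frac{\partial l}{\partial\text{vec}(D_k)}$, pull the constant duplication matrices out of the covariance to reduce to Theorem \ref{covdldDk1Dk2}, and absorb $N_{q_{k_1}}$ via $\mathcal{D}_{q_{k_1}}'N_{q_{k_1}}=\mathcal{D}_{q_{k_1}}'$. The only difference is that you derive that last identity from $N_{q_k}=\tfrac{1}{2}(I+K_{q_k,q_k})$ whereas the paper simply cites it from \citet{Magnus1986:dup}; both are fine.
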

We note that the results of Corollaries \ref{covdldDkdBetaCor}, \ref{covdldDkdSigma2Cor} and \ref{covdldDk1Dk2Cor} do not contain the matrix $N_{q_k}$, which appears in the corresponding theorems (Theorems \ref{covdldDkdBeta}, \ref{covdldDkdSigma2} and \ref{covdldDk1Dk2}). This is due to another result of \citet{Magnus1986:dup}, which states that $\mathcal{D}'_{k}N_k=\mathcal{D}'_{k}$ for any integer $k$. This concludes the derivations of Fisher Information matrix expressions given in Sections \ref{FSsection}-\ref{FSFSsection}.

\subsection{Full Fisher Scoring}\label{FullAppendix}

In this appendix, we prove that equations (\ref{FFS2}) and (\ref{vecDUpdate}) are valid Fisher Scoring rules of the form given by equation (\ref{FFS}). To achieve this, we first require the following lemma.

\begin{lemma}\label{FFSlem}
Let $n, p$ and $k$ be arbitrary positive integers with $p<n$. Let  $\tilde{N}$ be of dimension $(n \times n)$ with rank$(\tilde{N})<n$, $\tilde{K}$ be of dimension $(n \times p)$ with rank$(\tilde{K})=p$, $C$ be of dimension $(n \times n)$ with rank$(\tilde{N})=n$ and $\delta $ be of size $(n \times k)$. If the following properties are satisfied:
\begin{enumerate}
    \item $\tilde{K}\tilde{K}^+=\tilde{N}$, where $\tilde{K}^+=(\tilde{K}'\tilde{K})^{-1}\tilde{K}'$ is the generalized inverse of $\tilde{K}$.
    \item $\tilde{N}$ is symmetric and idempotent.
    \item $\tilde{N}C=C\tilde{N}=\tilde{N}C\tilde{N}$.
    \item $\tilde{N}\delta = \delta.$
\end{enumerate}
\noindent
then it follows that:
\begin{equation}\nonumber(\tilde{N}C)^+\delta = C^{-1}\delta.
\end{equation}
\end{lemma}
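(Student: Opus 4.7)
The plan is to exhibit an explicit candidate for $(\tilde{N}C)^+$ and then verify that this candidate sends $\delta$ to $C^{-1}\delta$. Specifically, I would guess
$$(\tilde{N}C)^+ = C^{-1}\tilde{N},$$
motivated by the following heuristic: $\tilde{N}$ plays the role of an orthogonal projector (by properties 1 and 2, it is the projector onto $\mathrm{col}(\tilde{K})$, with $\tilde{N}^+ = \tilde{N}$), and property 3 says that $C$ commutes with $\tilde{N}$ in a way that makes $C^{-1}\tilde{N}$ behave as a one-sided inverse on $\mathrm{col}(\tilde{N})$. If this candidate is correct, the conclusion is immediate: applying it to $\delta$ gives $C^{-1}\tilde{N}\delta = C^{-1}\delta$ by property 4.

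To justify the candidate, I would verify the four Moore-Penrose defining conditions with $M := \tilde{N}C$ and $M^+ := C^{-1}\tilde{N}$:
\begin{enumerate}
    \item $MM^+M = \tilde{N}CC^{-1}\tilde{N}\tilde{N}C = \tilde{N}^3 C = \tilde{N}C = M$, using idempotence of $\tilde{N}$ from property 2.
    \item $M^+MM^+ = C^{-1}\tilde{N}\tilde{N}CC^{-1}\tilde{N} = C^{-1}\tilde{N}^2 = C^{-1}\tilde{N} = M^+$, again by property 2.
    \item $MM^+ = \tilde{N}CC^{-1}\tilde{N} = \tilde{N}^2 = \tilde{N}$, which is symmetric by property 2.
    \item $M^+M = C^{-1}\tilde{N}\tilde{N}C = C^{-1}\tilde{N}C$; here I would invoke property 3, which gives $\tilde{N}C = C\tilde{N}$, so $M^+M = C^{-1}C\tilde{N} = \tilde{N}$, again symmetric.
\end{enumerate}

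I expect step 4 to be the main obstacle, or at least the most delicate one, because symmetry of $M^+M$ is the condition that distinguishes the Moore-Penrose pseudoinverse from other generalized inverses, and it is the only step that genuinely requires the commutation property 3 (rather than just symmetric idempotence of $\tilde{N}$). The identity $\tilde{N}C = C\tilde{N}$ from property 3 is exactly what collapses $C^{-1}\tilde{N}C$ to $\tilde{N}$; without it the candidate would not be the Moore-Penrose inverse. Once all four conditions are verified, uniqueness of the Moore-Penrose pseudoinverse gives $(\tilde{N}C)^+ = C^{-1}\tilde{N}$, and applying this to $\delta$ together with property 4 ($\tilde{N}\delta = \delta$) yields the claim. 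Note that properties 1 and the full part of property 3 asserting $\tilde{N}C = \tilde{N}C\tilde{N}$ are not strictly needed in this argument, but they are consistent with and in fact implied by the other hypotheses, so no contradiction arises.
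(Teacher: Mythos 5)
Your proof is correct, and it takes a genuinely different route from the paper's. The paper proves the identity constructively: it substitutes $\tilde{N}=\tilde{K}\tilde{K}^+$ (property 1), invokes a product formula for the Moore--Penrose inverse of $AB$ when $A$ has full column rank and $B$ has full row rank, explicitly inverts the middle factor $\tilde{K}^+CC'\tilde{K}^{+'}$, and then simplifies the resulting expression down to $\tilde{N}C^{-1}\tilde{N}=C^{-1}\tilde{N}$ using properties 2 and 3. You instead guess the candidate $C^{-1}\tilde{N}$ and verify the four Penrose axioms directly, which is shorter, self-contained, and makes clear that property 1 and the rank hypotheses on $\tilde{K}$ are not actually needed (as you correctly note, $\tilde{N}C=\tilde{N}C\tilde{N}$ also follows from commutation plus idempotence). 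Your computations all check out: conditions 1--3 need only symmetric idempotence of $\tilde{N}$, and condition 4 is indeed the one place where the commutation $\tilde{N}C=C\tilde{N}$ is essential, collapsing $C^{-1}\tilde{N}C$ to $\tilde{N}$. What the paper's route buys is a derivation of the pseudoinverse rather than a verification of a guessed form; what yours buys is brevity, weaker hypotheses, and reliance only on the uniqueness of the Moore--Penrose inverse rather than on an external product formula.
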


\begin{proof}
In proving the above lemma, we make use of the following result given by \citet{barnett1990matrices};
\begin{itemize}
    \item[$\bullet$] If $A$ is a matrix of dimension $(m \times l)$ with rank$(A)=l<m$, and $B$ is a matrix of dimension $(l \times m)$ with rank$(B)=l$, then;
    \begin{equation}\label{BarnettResult}(AB)^+=B'(BB')^{-1}(A'A)^{-1}A'.
    \end{equation}
\end{itemize}
By property 1, we have that:
\begin{equation}\nonumber(\tilde{N}C)^+=(\tilde{K}\tilde{K}^+C)^+.
\end{equation}
By applying (\ref{BarnettResult}) to the above, with $A=\tilde{K}$ and $B=\tilde{K}^+C$, and simplifying the result, the following is obtained:
\begin{equation}\label{BarnettResultApplied}(\tilde{K}\tilde{K}^+C)^+=C'\tilde{K}^{+'}(\tilde{K}^+CC'\tilde{K}^{+'})^{-1}\tilde{K}^+.
\end{equation}
We now consider the inversion in the center of the above expression, i.e. the inversion of $(\tilde{K}^+CC'\tilde{K}^{+'})$. This inversion is given by;
\begin{equation}\nonumber
(\tilde{K}^+CC'\tilde{K}^{+'})^{-1}=\tilde{K}'C^{-1'}C^{-1}\tilde{K}.
\end{equation}
We will demonstrate this by showing that the product of the matrix inside the inversion on the left hand side and the matrix on the right hand side is the identity matrix:
\begin{equation}\nonumber(\tilde{K}^+CC'\tilde{K}^{+'})(\tilde{K}'C^{-1'}C^{-1}\tilde{K}) = \tilde{K}^+CC'\tilde{K}^{+'}\tilde{K}'C^{-1'}C^{-1}\tilde{K}.
\end{equation}
Using properties $1$, $2$ and $3$ we see that the above is equal to:
\begin{equation}\nonumber \tilde{K}^+\tilde{N}CC'C^{-1'}C^{-1}\tilde{K}.
\end{equation}
Simplifying and applying property $1$ now yields that the above is equal the below:
\begin{equation}\nonumber \tilde{K}^+\tilde{N}\tilde{K} = \tilde{K}^+\tilde{K}= I,
\end{equation}
where $I$ is the identity matrix as required. Returning to (\ref{BarnettResultApplied}) and applying properties 1 and 2, we now have that:
\begin{equation}\nonumber(\tilde{N}C)^+ = C'\tilde{N}C^{-1'}C^{-1}\tilde{N}.
\end{equation}
By applying properties 2 and 3, the above can now be reduced to:
\begin{equation}\label{NCNCN}
(\tilde{N}C)^+=\tilde{N}C^{-1}\tilde{N}.
\end{equation}
We now note that, by property $3$, $\tilde{N}C=C\tilde{N}$. By multiplying the left and right side of this identity by $C^{-1}$ it can be seen that:
\begin{equation}\nonumber C^{-1}\tilde{N}=\tilde{N}C^{-1}.
\end{equation}
Multiplying the right hand side of the above by $\tilde{N}$ and applying property 2 we obtain:
\begin{equation}\nonumber C^{-1}\tilde{N}=\tilde{N}C^{-1}\tilde{N}.
\end{equation}
Substituting the above into (\ref{NCNCN}) gives:
\begin{equation}\nonumber (\tilde{N}C)^+ = C^{-1}\tilde{N}.
\end{equation}
The result of Lemma \ref{FFSlem} now follows by multiplying the right-hand side by $\delta$ and applying property 4. \qed

\end{proof}

\begin{theorem}\label{FFSthm1}
The below two update rules are equivalent, where $F(\theta^f_s)$ is the matrix defined in Section \ref{FFSsection} and $\mathcal{I}(\theta^f_s)$ is the Fisher Information matrix of $\theta^f_s$.
\begin{equation}\nonumber
\theta^f_{s+1} = \theta^f_{s} + \alpha_s F(\theta_{s}^f)^{-1}\frac{\partial l(\theta_s^f)}{\partial\theta},
    \end{equation}
\begin{equation}\nonumber
\theta^f_{s+1} = \theta^f_{s} + \alpha_s \mathcal{I}(\theta_{s}^f)^+\frac{\partial l(\theta_s^f)}{\partial\theta}.
    \end{equation}
\end{theorem}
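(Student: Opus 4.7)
The plan is to apply Lemma \ref{FFSlem} blockwise, with $C = F(\theta_s^f)$ and $\delta = \partial l(\theta_s^f)/\partial \theta$. I would take $\tilde{N}$ to be the block-diagonal matrix whose blocks corresponding to $\beta$ and $\sigma^2$ are $I_p$ and $I_1$ respectively, and whose block corresponding to $\text{vec}(D_k)$ is $N_{q_k}$ for each $k \in \{1,\hdots,r\}$. Correspondingly, I would choose $\tilde{K}$ to be block-diagonal with blocks $I_p$, $I_1$, and the duplication matrices $\mathcal{D}_{q_1}, \hdots, \mathcal{D}_{q_r}$. If the four hypotheses of Lemma \ref{FFSlem} hold for this choice, then $\mathcal{I}(\theta_s^f)^+ \delta = F(\theta_s^f)^{-1}\delta$, which is precisely the equivalence of the two update rules.

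First I would establish the factorisation $\mathcal{I}(\theta_s^f) = \tilde{N}\, F(\theta_s^f)$. The identity $(A \otimes A)N_{q_{k_2}} = N_{q_{k_1}}(A \otimes A)$ from \citet{Magnus1986:dup}, applied to the blocks given by (\ref{FFSF2}), rewrites (\ref{FFSFI2}) as $\mathcal{I}^f_{\text{vec}(D_{k_1}),\text{vec}(D_{k_2})} = N_{q_{k_1}}\, F_{\text{vec}(D_{k_1}),\text{vec}(D_{k_2})}$. The remaining blocks of $\mathcal{I}(\theta_s^f)$ and $F(\theta_s^f)$ agree by construction, and the identity portions of $\tilde{N}$ leave them untouched, so the factorisation holds.

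Next I would verify the four hypotheses of Lemma \ref{FFSlem} in turn. Property (1), $\tilde{K}\tilde{K}^+ = \tilde{N}$, reduces to the standard identity $\mathcal{D}_{q_k}\mathcal{D}_{q_k}^+ = N_{q_k}$ applied blockwise. Property (2) is immediate: each $N_{q_k}$ is symmetric and idempotent, as is the identity, so the block-diagonal $\tilde{N}$ inherits both properties. For property (3), the equality $\tilde{N}C = C\tilde{N}$ is the symmetry of $\mathcal{I}(\theta_s^f)$: since $F$ is symmetric (swapping $k_1 \leftrightarrow k_2$ and relabelling summation indices in (\ref{FFSF2}) leaves the block invariant), the factorisation $\mathcal{I}^f = \tilde{N}F$ combined with $\mathcal{I}^f = \mathcal{I}^{f\prime} = F'\tilde{N}' = F\tilde{N}$ gives the commutation; the identity $\tilde{N}C\tilde{N} = \tilde{N}C$ then follows from $\tilde{N}C = C\tilde{N}$ and idempotence of $\tilde{N}$. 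Property (4), $\tilde{N}\delta = \delta$, reduces to checking that $N_{q_k}\,\partial l/\partial\text{vec}(D_k) = \partial l/\partial\text{vec}(D_k)$, which follows from Corollary \ref{partialDerivCorollary} since $\partial l/\partial\text{vec}(D_k)$ is the vectorisation of a symmetric matrix (each summand $Z_{(k,j)}'V^{-1}(ee'/\sigma^2 - V)V^{-1}Z_{(k,j)}$ has the form $A'SA$ with $S$ symmetric), and $N_{q_k}$ acts as the identity on vectorisations of symmetric matrices.

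The main obstacle I anticipate is property (3): while $\tilde{N}C = C\tilde{N}$ amounts to symmetry of $\mathcal{I}^f$ and is thus almost automatic, one must be careful to verify the symmetry of $F$ itself (not just of $\mathcal{I}^f$), since this is what allows the two-sided argument to go through. Everything else is a bookkeeping exercise on block-diagonal structure. The existence assumption that $F(\theta_s^f)$ is of full rank is implicit in the statement (the FFS update rule requires $F^{-1}$), so no additional hypothesis need be introduced. Once (1)--(4) are in hand, Lemma \ref{FFSlem} yields $(\tilde{N}F)^+\delta = F^{-1}\delta$, i.e. $\mathcal{I}(\theta_s^f)^+\,\partial l/\partial \theta = F(\theta_s^f)^{-1}\,\partial l/\partial \theta$, completing the proof.
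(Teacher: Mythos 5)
Your proposal is correct and follows essentially the same route as the paper's proof: both factor $\mathcal{I}(\theta_s^f)$ as the product of the block-diagonal symmetrizer $\tilde{N}=\mathrm{diag}(I_{p+1},N_{q_1},\dots,N_{q_r})$ with $F(\theta_s^f)$ (using the Magnus--Neudecker identity $N_{q_{k_1}}(A\otimes A)=(A\otimes A)N_{q_{k_2}}$ on the blocks of (\ref{FFSF2})) and then invoke Lemma \ref{FFSlem} with $C=F(\theta_s^f)$ and $\delta=\partial l/\partial\theta$, checking properties 1--4 exactly as you do, with property 4 resting on the symmetry of $\partial l/\partial D_k$ from Theorem \ref{DmatTheorem}. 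The only divergence is your choice of $\tilde{K}$ as the block diagonal of duplication matrices $\mathcal{D}_{q_k}$ rather than the paper's commutation matrices $K_{q_k,q_k}$; since $\mathcal{D}_{q_k}\mathcal{D}_{q_k}^{+}=N_{q_k}$ whereas $K_{q_k,q_k}$ is a permutation matrix with $K_{q_k,q_k}K_{q_k,q_k}^{+}=I_{q_k^2}$, your choice is the one that actually satisfies property 1 of the lemma, so this is if anything a small improvement rather than a gap.
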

\begin{proof}
To prove Theorem \ref{FFSthm1}, it suffices to prove the below equality:
\begin{equation}\nonumber
    \mathcal{I}(\theta_{s}^f)^+\frac{\partial l(\theta_s^f)}{\partial\theta}=F(\theta_{s}^f)^{-1}\frac{\partial l(\theta_s^f)}{\partial\theta}.
\end{equation}
To prove the above, we employ Lemma \ref{FFSlem}. To achieve this, we proceed by defining:

\begin{itemize}
    \item[$\bullet$] $\tilde{N}$ to be the matrix:
    $$\begin{bmatrix}
     I_{p+1} & 0 & 0 & \hdots & 0 \\
     0 & N_{q_1} & 0 & \hdots & 0 \\
     0 & 0 & N_{q_2} & \hdots & 0 \\
     \vdots & \vdots & \vdots & \ddots & \vdots \\ 
     0 & 0 & 0 & \hdots & N_{q_r} \\
    \end{bmatrix},$$
    \item[$\bullet$] $\tilde{K}$ to be the matrix:
    $$\begin{bmatrix}
     I_{p+1} & 0 & 0 & \hdots & 0 \\
     0 & K_{q_1,q_1} & 0 & \hdots & 0 \\
     0 & 0 & K_{q_2,q_2} & \hdots & 0 \\
     \vdots & \vdots & \vdots & \ddots & \vdots \\ 
     0 & 0 & 0 & \hdots & K_{q_r,q_r} \\
    \end{bmatrix},$$
    \item[$\bullet$] $\delta$ to be the column vector:
    $$\frac{\partial l(\theta^f)}{\partial \theta^f},$$
    \item[$\bullet$] $C$ to be the matrix $F(\theta^f)$,
\end{itemize}
and claim that $F(\theta^f)\tilde{N}=\mathcal{I}(\theta^f)$. To prove this equality, we first note the following two properties of the matrix $N_k$, given by \citet{Magnus1986:dup}. 
\begin{enumerate}
    \item $N_{k_1}(A \otimes A)=N_{k_1}(A \otimes A)N_{k_2}=(A \otimes A)N_{k_2}$ for any arbitrary matrix $A$ and scalars $k_1$ and $k_2$, such that the matrix multiplications are well defined.
    \item $N_{k}\text{vec}(A)=\text{vec}(A)$ for any arbitrary symmetric matrix $A$ of appropriate dimension.
\end{enumerate}
The first of the above properties, in combination with equations (\ref{FFSFI2}) and (\ref{FFSF2}), can be seen to imply that, for any $k_1$ and $k_2$ between $1$ and $r$:
\begin{equation}\nonumber\mathcal{I}^f_{\text{vec}(D_{k_1}),\text{vec}(D_{k_2})}=F_{\text{vec}(D_{k_1}),\text{vec}(D_{k_2})}N_{q_{k_2}}.
\end{equation}
In a similar manner, the second of the above properties, in combination with equation (\ref{FFSFI1}), can be seen to imply that, for any $k$ between $1$ and $r$: \begin{equation}\nonumber\mathcal{I}^f_{\sigma^2,\text{vec}(D_k)}=F_{\sigma^2,\text{vec}(D_k)}N_{q_k}.
\end{equation}
Expanding the multiplication $F(\theta_s^f)\tilde{N}$ block-wise demonstrates that the two equations above are sufficient to prove the equality $F(\theta_s^f)\tilde{N}=\mathcal{I}(\theta^f_s)$. The matrices $\tilde{N}, \tilde{K}$ and $C$ can be seen to satisfy properties 1-3 of Lemma \ref{FFSlem} by noting standard results concerning the matrices $\mathcal{D}_k$, $K_{m,n}$ and $N_n$, provided in \citet{Magnus1986:dup}. Property 4 of Lemma \ref{FFSlem} can be seen to hold for the matrix $\tilde{N}$ and column vector $\delta$ by noting the symmetry of $\frac{\partial l(\theta^f_s)}{\partial D_k}$ (see Theorem \ref{DmatTheorem}). By combining the previous arguments and applying Lemma \ref{FFSlem}, the below is obtained:
\begin{equation}\nonumber\mathcal{I}(\theta^f)^+\delta=(F(\theta^f)\tilde{N})^+\delta=F(\theta^f)^{-1}\delta.
\end{equation}
The result of Theorem \ref{FFSthm1} now follows.\qed

\end{proof}

\begin{theorem}\label{FFSthm2}
For any fixed integer $k$ between $1$ and $r$, the below two update rules are equivalent, where $F_{\text{vec}(D_{k,s})}$ is as defined in Section \ref{FFSsection} and $\mathcal{I}^f_{\text{vec}(D_{k,s})}$ is the Fisher Information matrix of $\text{vec}(D_k)$.
\begin{equation}\nonumber
    \text{vec}(D_{k,s+1})=\text{vec}(D_{k,s})+\alpha_s F^{-1}_{\text{vec}(D_{k,s})}\frac{\partial l(\theta^f_s)}{\partial\text{vec}(D_{k,s})},
\end{equation}
\begin{equation}\nonumber
    \text{vec}(D_{k,s+1})=\text{vec}(D_{k,s})+\alpha_s\big(\mathcal{I}^{f}_{\text{vec}(D_{k,s})}\big)^{+}\frac{\partial l(\theta^f_s)}{\partial\text{vec}(D_{k,s})}.
\end{equation}
\end{theorem}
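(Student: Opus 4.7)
The plan is to reduce Theorem \ref{FFSthm2} to a direct application of Lemma \ref{FFSlem}, in essentially the same way that Lemma \ref{FFSlem} was used to prove Theorem \ref{FFSthm1}, but now restricted to the single block corresponding to $\text{vec}(D_k)$ rather than to the full parameter vector $\theta^f$. To do this, it suffices to show that
\begin{equation}\nonumber
\big(\mathcal{I}^f_{\text{vec}(D_{k,s})}\big)^{+}\,\frac{\partial l(\theta_s^f)}{\partial\text{vec}(D_{k,s})} \;=\; F^{-1}_{\text{vec}(D_{k,s})}\,\frac{\partial l(\theta_s^f)}{\partial\text{vec}(D_{k,s})},
\end{equation}
since once this identity is established, subtracting $\text{vec}(D_{k,s})$ from both sides of the two update rules and multiplying by $\alpha_s^{-1}$ gives the required equivalence.

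The key structural observation, already used in the proof of Theorem \ref{FFSthm1}, is that equations (\ref{FFSFI2}) and (\ref{FFSF2}), together with the identity $N_{q_k}(A\otimes A) = (A\otimes A)N_{q_k}$ from \citet{Magnus1986:dup}, give the factorisation $\mathcal{I}^f_{\text{vec}(D_{k,s})} = F_{\text{vec}(D_{k,s})}\,N_{q_k}$. With this factorisation in hand, I would invoke Lemma \ref{FFSlem} with the choices $\tilde N = N_{q_k}$, $\tilde K = K_{q_k,q_k}$, $C = F_{\text{vec}(D_{k,s})}$, and $\delta = \partial l(\theta_s^f)/\partial\text{vec}(D_{k,s})$. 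The four hypotheses of the lemma are then verified as follows: property 1, $K_{q_k,q_k}K_{q_k,q_k}^+ = N_{q_k}$, and property 2, that $N_{q_k}$ is symmetric and idempotent, are standard facts about the commutation and symmetrisation matrices; property 3, $N_{q_k}C = CN_{q_k} = N_{q_k}CN_{q_k}$, follows from applying $N_{q_k}(A\otimes A) = (A\otimes A)N_{q_k}$ term-by-term in the summation (\ref{FFSF2}); property 4, $N_{q_k}\delta = \delta$, follows from Theorem \ref{DmatTheorem}, since $\partial l/\partial D_k$ is a symmetric matrix and so its vectorisation is fixed by $N_{q_k}$.

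Applying Lemma \ref{FFSlem} then yields $(N_{q_k}C)^{+}\delta = C^{-1}\delta$, which is exactly the required identity $\big(\mathcal{I}^f_{\text{vec}(D_{k,s})}\big)^{+}\delta = F^{-1}_{\text{vec}(D_{k,s})}\delta$, completing the proof. The main obstacle, as in Theorem \ref{FFSthm1}, is really only the bookkeeping needed to verify the hypotheses of Lemma \ref{FFSlem}; the non-trivial analytical work (the inversion identity and rank considerations) has already been absorbed into the lemma. The simpler single-block setting here actually avoids the block-diagonal constructions of $\tilde N$ and $\tilde K$ used for the full vector $\theta^f$, so the verification of the four properties is more direct.
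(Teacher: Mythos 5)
Your proposal takes essentially the same route as the paper: the paper's own proof of Theorem \ref{FFSthm2} consists precisely of setting $\tilde{N}=N_{q_k}$, $\tilde{K}=K_{q_k,q_k}$, $C=F_{\text{vec}(D_{k})}$ and $\delta=\partial l(\theta^f_s)/\partial\text{vec}(D_{k,s})$ and then repeating the argument of Theorem \ref{FFSthm1} verbatim, which is exactly what you do. One caveat, which your write-up shares with the paper's choice of $\tilde{K}$: the identity you cite for property 1, $K_{q_k,q_k}K_{q_k,q_k}^+=N_{q_k}$, is not a standard fact and is in fact false --- the commutation matrix is an orthogonal permutation matrix, so it is square and full rank with $K_{q_k,q_k}K_{q_k,q_k}^+=I_{q_k^2}$, violating both property 1 and the rank condition $p<n$ of Lemma \ref{FFSlem}; the matrix that actually satisfies $\tilde{K}\tilde{K}^+=N_{q_k}$ with the required tall, full-column-rank shape is the duplication matrix $\mathcal{D}_{q_k}$ (by the Magnus--Neudecker identity $\mathcal{D}_{q_k}(\mathcal{D}_{q_k}'\mathcal{D}_{q_k})^{-1}\mathcal{D}_{q_k}'=N_{q_k}$), and substituting $\tilde{K}=\mathcal{D}_{q_k}$ leaves every other step of your argument unchanged.
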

\begin{proof}
The proof of the above proceeds by defining $\tilde{N}=N_{q_k}$, $\tilde{K}=K_{q_k,q_k}$, $C=F_{\text{vec}(D_{k})}$ and $\delta$ to be the partial derivative appearing in the statement of the theorem. Following this, the proof follows the same structure as that of Theorem \ref{FFSthm1}. For brevity, the argument will not be repeated in this section. \qed
\end{proof}

\subsection{Restricted Maximum Likelihood Estimation}\label{remlApp}

In this appendix, we describe how the methods from Section \ref{AlgoSection} may be adapted to use an alternative likelihood criteria: the criteria employed by Restricted Maximum Likelihood (ReML) estimation. A well-documented issue for ML estimation is that the variance estimates produced using ML are biased. First described by \citet{Patterson1971}, and proposed for the LMM in \cite{Laird1982}, ReML offers an alternative to ML to address this issue by maximizing the log-likelihood function of the residual vector, $e$, instead of the response vector, $Y$. Neglecting constant terms, the Restricted Maximum log-likelihood function, $l_R$, is given by:
\begin{equation}\label{llhrestricted}
    l_R(\theta^h)=l(\theta^h)-\frac{1}{2}\bigg( -p\log(\sigma^2)+\log|X'V^{-1}X| \bigg),
\end{equation}
where $l(\theta^h)$ is given in (\ref{llh}). To derive the ReML-based FS algorithm, akin to that described in Section \ref{FSsection}, the following adjustments to the score vectors given by (\ref{FSderiv2}) and (\ref{FSderiv3}) must be used.
\begin{equation}\nonumber
\frac{dl_R(\theta^h)}{d\sigma^2}=\frac{dl(\theta^h)}{d\sigma^2}+\frac{1}{2}p\sigma^{-2},
\end{equation}
\begin{equation}\nonumber
  \begin{aligned}[b]
& \frac{dl_R(\theta^h)}{d\text{vech}(D_k)}=\frac{dl(\theta^h)}{d\text{vech}(D_k)}+\\
& \frac{1}{2}\mathcal{D}'_{q_k}\text{vec}\bigg(\sum_{j=1}^{l_k}Z_{(k,j)}'V^{-1}X(X'V^{-1}X)^{-1}X'V^{-1}Z_{(k,j)}\bigg).
\end{aligned}
\end{equation}
The latter of the above results can be derived through trivial adjustment of the proofs of Theorem \ref{DmatTheorem} and Corollaries \ref{partialDerivCorollary}-\ref{vechTotalDerivCor} given in Appendix \ref{derivAppendix}. As the derivative of $l_R(\theta^h)$ with respect to $\beta$ is identical to that of $l(\theta^h)$, given in Section \ref{FSsection}, we do not list it again above.

For a derivation of the ReML score vectors of $\beta$ and $\sigma^2$, we direct the reader to the works of \citet{Demidenko:2013mfx} where proofs may be found for the single-factor LMM. As these proofs do not depend on the number of factors in the model, they can be easily seen to also apply in the multi-factor LMM setting without further adjustment.

Due to the asymptotic equivalence of the ML and ReML estimates, the parameter estimates produced by ML and ReML have the same asymptotic covariance matrix. Consequently, the Fisher Information matrix for the parameter estimates produced by ReML, which can be shown to be equal to the inverse of the asymptotic covariance matrix, is identical to that of the parameter estimates produced by ML. As a result, the ReML-based FS algorithm utilizes the Fisher Information matrix specified by (\ref{FSFI1})-(\ref{FSFI3}) and requires no further derivation.

To summarize, the ReML-based FS algorithm for the multi-factor LMM is almost identical to that given in Algorithm \ref{FSalgorithm}. The only adaptations required occur on line 3 of Algorithm \ref{FSalgorithm} where the score vectors must be substituted for their ReML counterparts, provided above. Analogous adjustments can be made for the algorithms presented in Sections \ref{FFSsection}-\ref{CSFSsection}.

\subsection{Constraint Matrices}\label{conApp}
In this appendix, we provide further detail on the approach to constrained optimization which was outlined in Section \ref{covstruct}. We begin by providing examples of how the notation described in Section \ref{covstruct} may be used in practice. For a given $k$, we introduce the notation $\{d_i\}$ and $\{\tilde{d}_i\}$ to represent the set of unique parameters required to specify the constrained and unconstrained representations of $D_k$, respectively. For example, if $D_k$ is $(3 \times 3)$ in dimension and we wish to induce Toeplitz structure onto $D_k$, then $D_k$ can be considered to take the following ``constrained" and ``unconstrained" representations:
\begin{equation}\nonumber
    D_k = \begin{bmatrix}
    d_1 & d_4 & d_7 \\
    d_2 & d_5 & d_8 \\
    d_3 & d_6 & d_9 \\
    \end{bmatrix} = \begin{bmatrix}
    \tilde{d}_1 & \tilde{d}_2 & \tilde{d}_3 \\
    \tilde{d}_2 & \tilde{d}_1 & \tilde{d}_2 \\
    \tilde{d}_3 & \tilde{d}_2 & \tilde{d}_1 \\
    \end{bmatrix}.
\end{equation}

The notation vecu$(A)$ is used to denote the column vector of unique variables in the matrix $A$, given in order of first occurrence as listed in column major order. In the above example, this implies vecu$(D_k)=[\tilde{d}_1,\tilde{d}_2, \tilde{d}_3]'$ whilst vec$(D_k)=[d_1,\hdots, d_9]'$. Using this notation, the constraint matrix $\mathcal{C}_k$ may be defined element-wise. To see this, note that the chain rule for the total derivative is given by:
\begin{equation}\nonumber
    \frac{dl(\theta^{con})}{d\tilde{d}_j}=\sum_i\frac{\partial d_i}{\partial\tilde{d}_j}\frac{\partial l(\theta^{con})}{\partial d_i}.
\end{equation}
Comparing this with the derivative expression provided in equation (\ref{derivCk}), it can be seen that the elements of $\mathcal{C}_k$ are given by:
\begin{equation}\nonumber
    \mathcal{C}_{k[j,i]}=\frac{\partial d_i}{\partial\tilde{d}_j}.
\end{equation}

In general, derivation of the constraint matrix for the $k^{th}$ factor is relatively straightforward for most practical applications. Table \ref{constTab} supports this statement by providing examples of the constraint matrix for the $k^{th}$ factor operating under the assumption that $D_k$ exhibits several commonly employed covariance structures. In addition, we note that two instances in which a constraint matrix approach was employed implicitly can be found in the main body of this work. In Section \ref{FSsection}, we took $\mathcal{C}_k=\mathcal{D}_{q_k}$ to enforce symmetry in $D_k$ and, in Section \ref{CSFSsection}, we took $\mathcal{C}_k=(d\text{vech}(D_k)/d\text{vech}(\Lambda_k))\mathcal{D}_{q_k}$ in order to derive the Fisher Scoring update in terms of the Cholesky factor, $\Lambda_k$.

\begin{table}[h]
  \centering
  \begin{tabular}{@{}|c|c|@{}}
 \hline
 \textbf{Covariance Structure} & \textbf{Constraint matrix, $\mathcal{C}_k$} \\ 
 \hline
    & \\
 Diagonal &   \\ 
 $\begin{bmatrix}
    \tilde{d}_1 & 0 & 0 \\
    0 & \tilde{d}_1 & 0 \\
    0 & 0 & \tilde{d}_1 \\
    \end{bmatrix}$ & $\begin{bmatrix} 
    1 & 0 & 0 & 0 & 1 & 0 & 0 & 0 & 1 \\ \end{bmatrix}$ \\
    & \\
 \hline
    & \\
 Variance components &   \\ 
 $\begin{bmatrix}
    \tilde{d}_1 & 0 & 0 \\
    0 & \tilde{d}_2 & 0 \\
    0 & 0 & \tilde{d}_3 \\
    \end{bmatrix}$ & $\begin{bmatrix} 
    1 & 0 & 0 & 0 & 0 & 0 & 0 & 0 & 0 \\
    0 & 0 & 0 & 0 & 1 & 0 & 0 & 0 & 0 \\
    0 & 0 & 0 & 0 & 0 & 0 & 0 & 0 & 1 \\ \end{bmatrix}$ \\
    & \\
 \hline
    & \\
 Toeplitz &   \\ 
 $\begin{bmatrix}
    \tilde{d}_1 & \tilde{d}_2 & \tilde{d}_3 \\
    \tilde{d}_2 & \tilde{d}_1 & \tilde{d}_2 \\
    \tilde{d}_3 & \tilde{d}_2 & \tilde{d}_1 \\
    \end{bmatrix}$ & $\begin{bmatrix} 
    1 & 0 & 0 & 0 & 1 & 0 & 0 & 0 & 1 \\
    0 & 1 & 0 & 1 & 0 & 1 & 0 & 1 & 0\\
    0 & 0 & 1 & 0 & 0 & 0 & 1 & 0 & 0 \end{bmatrix}$ \\
    & \\
    \hline
    & \\
 Compound Symmetry  &   \\ 
 $\begin{bmatrix}
    1 & \tilde{d}_1 & \tilde{d}_1 \\
    \tilde{d}_1 & 1 & \tilde{d}_1 \\
    \tilde{d}_1 & \tilde{d}_1 & 1 \\
    \end{bmatrix}$ & $\begin{bmatrix} 
    0 & 1 & 1 & 1 & 0 & 1 & 1 & 1 & 0 \\ \end{bmatrix}$ \\
    & \\
    \hline
    & \\
 AR(1)  &   \\ 
 $\begin{bmatrix}
    1 & \tilde{d}_1 & \tilde{d}_1^2 \\
    \tilde{d}_1 & 1 & \tilde{d}_1 \\
    \tilde{d}_1^2 & \tilde{d}_1 & 1 \\
    \end{bmatrix}$ & \small{$\begin{bmatrix} 
    0 & 1 & 2\tilde{d}_1 & 1 & 0 & 1 & 2\tilde{d}_1 & 1 & 0 \\ \end{bmatrix}$} \\
    & \\
    \hline
\end{tabular}
  \caption{The constraint matrix for the $k^{th}$ factor, $\mathcal{C}_k$, in the setting in which the $k^{th}$ factor groups $3$ random effects, under various covariance structures. }\label{constTab}
\end{table}

It is also worth noting that this approach to constrained optimization allows for different covariance structures to be enforced on different factors in the LMM. For instance, a researcher may enforce the first factor in an LMM to have an AR(1) structure while allowing the second factor to exhibit a completely different structure such as compound symmetry. As noted in Section \ref{covstruct}, this approach can also be extended further to allow all elements of the vectorized covariance matrices, $\text{vec}(D_1),\hdots \text{vec}(D_r)$, to be expressed as functions of one set of parameters, $\rho_D$, common to all $\{\text{vec}(D_k)\}_{k \in \{1,\hdots, r\}}$. To further detail this, we define $v(D)$ and extend the definition $\theta^{con}$:
\begin{equation}\nonumber v(D)=\begin{bmatrix}\text{vec}(D_1)\\ \vdots\\ \text{vec}(D_r)\end{bmatrix}, \hspace{1cm}
\theta^{con}=\begin{bmatrix}\beta\\ \sigma^2\\ \rho_D \end{bmatrix},
\end{equation}

Through the same process used to obtain equation the equations presented in Section \ref{covstruct}, the below can be obtained;

\begin{equation}\label{FullDequation}\begin{aligned}
    & \frac{dl(\theta^{con})}{d \rho_D}=\mathcal{C}\frac{\partial l(\theta^{con})}{\partial v(D)}, \\
    & \mathcal{I}^{con}_{\rho_D}= \mathcal{C}\mathcal{I}^f_{\text{v}(D)}\mathcal{C}',
    \end{aligned}
\end{equation}
where $\mathcal{C}$ is the constraint matrix defined to be the Jacobian matrix of partial derivatives of $v(D)$ taken with respect to $\rho_D$. The score vector and Fisher Information matrix associated to $v(D)$ are readily seen to be given by block-wise combinations of the matrix expressions (\ref{FFSderiv1}) and (\ref{FFSFI2}).

\subsection{Cholesky Fisher Scoring}\label{CholAppendix}

In this appendix, we prove the identity stated by the below theorem, Theorem \ref{cholTheorem}, which was utilized in Section \ref{CSFSsection}.
\begin{theorem}\label{cholTheorem} Let $D_k$ be a square symmetric positive-definite matrix with Cholesky decomposition given by $D_k=\Lambda_k\Lambda_k'$ where $\Lambda_k$ is the lower triangular Cholesky factor. The below expression gives the derivative of vech$(D_k)$ with respect to vech$(\Lambda_k)$.
\begin{equation}\nonumber
    \frac{\partial\text{vech}(D_k)}{\partial\text{vech}(\Lambda_k)}=\mathcal{L}_{q_k}(\Lambda'_k \otimes I_{q_k})(I_{q_k^2}+K_{q_k})\mathcal{D}_{q_k}.
\end{equation}

\end{theorem}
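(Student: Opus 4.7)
The plan is to differentiate the defining relation $D_k = \Lambda_k\Lambda_k'$ and then translate the resulting matrix differential into the required half-vectorised Jacobian using the vec--vech machinery of Magnus and Neudecker.

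First, I would take the matrix differential $dD_k = (d\Lambda_k)\Lambda_k' + \Lambda_k(d\Lambda_k')$, apply $\mathrm{vec}$ to both sides, and use the identities $\mathrm{vec}(ABC) = (C'\otimes A)\mathrm{vec}(B)$ together with $\mathrm{vec}(\Lambda_k') = K_{q_k}\mathrm{vec}(\Lambda_k)$ (from Section~\ref{NotationSec}) to collect everything as a single linear map acting on $d\mathrm{vec}(\Lambda_k)$:
\begin{equation*}
d\mathrm{vec}(D_k) = \bigl[(\Lambda_k\otimes I_{q_k}) + (I_{q_k}\otimes\Lambda_k)K_{q_k}\bigr]\,d\mathrm{vec}(\Lambda_k).
\end{equation*}
The commutation identity $K_{q_k}(A\otimes B) = (B\otimes A)K_{q_k}$, applied to square factors of equal size, reduces the bracket to $(I_{q_k^2}+K_{q_k})(\Lambda_k\otimes I_{q_k})$.

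Next, I would convert this to a relation between $d\mathrm{vech}(D_k)$ and $d\mathrm{vech}(\Lambda_k)$. The symmetry of $D_k$ gives $d\mathrm{vech}(D_k) = \mathcal{L}_{q_k}\,d\mathrm{vec}(D_k)$, which pulls $\mathcal{L}_{q_k}$ out on the left. On the input side, I would pass from $d\mathrm{vec}(\Lambda_k)$ to $d\mathrm{vech}(\Lambda_k)$ using the lower-triangular structure of $\Lambda_k$. Because the derivative convention used in the paper (see Section~\ref{NotationSec}) puts the independent variable along the rows, taking the transpose at the appropriate point swaps the order of the two central factors and converts $(\Lambda_k\otimes I_{q_k})$ into $(\Lambda_k'\otimes I_{q_k})$ (using symmetry of $K_{q_k}$ and one more application of the commutation identity), which is exactly the ordering demanded by the target expression.

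The main obstacle is the appearance of the duplication matrix $\mathcal{D}_{q_k}$ rather than the elimination-matrix transpose that would emerge most naturally when passing from $d\mathrm{vec}(\Lambda_k)$ to $d\mathrm{vech}(\Lambda_k)$ under the lower-triangular constraint. I would handle this by exploiting the symmetrising action of $N_{q_k} = (I_{q_k^2}+K_{q_k})/2$, which sits immediately to the left of the input-side matrix. Using the Magnus--Neudecker identities $N_{q_k}\mathcal{D}_{q_k} = \mathcal{D}_{q_k}$ and $\mathcal{L}_{q_k}\mathcal{D}_{q_k} = I_{q_k(q_k+1)/2}$, which tie $\mathcal{D}_{q_k}$ and $\mathcal{L}_{q_k}$ together on the symmetric subspace, the input-side conversion matrix can be rewritten in terms of $\mathcal{D}_{q_k}$ once the factor $(I_{q_k^2}+K_{q_k})$ has been pulled through. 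Verifying this final substitution carefully---either by checking the action on an arbitrary $d\mathrm{vech}(\Lambda_k)$, or through a direct algebraic manipulation using the identities above---is the most delicate step of the proof.
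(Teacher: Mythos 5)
Your derivation of the middle factor is sound and reproduces the result the paper cites: differentiating $D_k=\Lambda_k\Lambda_k'$, applying $\mathrm{vec}(ABC)=(C'\otimes A)\mathrm{vec}(B)$ and $\mathrm{vec}(\Lambda_k')=K_{q_k}\mathrm{vec}(\Lambda_k)$, and commuting gives $d\,\mathrm{vec}(D_k)=(I_{q_k^2}+K_{q_k})(\Lambda_k\otimes I_{q_k})\,d\,\mathrm{vec}(\Lambda_k)$, whose transpose is the $(\Lambda_k'\otimes I_{q_k})(I_{q_k^2}+K_{q_k})$ of the statement under the paper's convention. The paper's own proof is purely a chain-rule-plus-citations argument: it writes the Jacobian as the product of $\partial\mathrm{vec}(\Lambda_k)/\partial\mathrm{vech}(\Lambda_k)=\mathcal{L}_{q_k}$, the Magnus--Neudecker Jacobian of $\Lambda_k\Lambda_k'$, and $\partial\mathrm{vech}(D_k)/\partial\mathrm{vec}(D_k)=\mathcal{D}_{q_k}$, the last two quoted from \citet{Magnus1999} and Theorem 5.10 of \citet{Turkington2013} respectively.

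The step you yourself flag as most delicate is, however, a genuine gap, and it cannot be closed the way you propose. Carried out honestly, your differential computation produces the elimination-type matrix on the remaining side (for the symmetric $dD_k$ one selects the lower triangle via $\mathcal{L}_{q_k}$, and for lower-triangular $\Lambda_k$ one has $\mathrm{vec}(\Lambda_k)=\mathcal{L}_{q_k}'\mathrm{vech}(\Lambda_k)$), yielding the entrywise Jacobian $\mathcal{L}_{q_k}(\Lambda_k'\otimes I_{q_k})(I_{q_k^2}+K_{q_k})\mathcal{L}_{q_k}'$. This is \emph{not} equal to the stated $\mathcal{L}_{q_k}(\Lambda_k'\otimes I_{q_k})(I_{q_k^2}+K_{q_k})\mathcal{D}_{q_k}$, and no application of $N_{q_k}\mathcal{D}_{q_k}=\mathcal{D}_{q_k}$ or $\mathcal{L}_{q_k}\mathcal{D}_{q_k}=I$ can make it so: for $q_k=2$ and $\Lambda_k=\left(\begin{smallmatrix}a&0\\b&c\end{smallmatrix}\right)$ the first expression evaluates to $\left(\begin{smallmatrix}2a&b&0\\0&a&2b\\0&0&2c\end{smallmatrix}\right)$ while the theorem's evaluates to $\left(\begin{smallmatrix}2a&2b&0\\0&2a&2b\\0&0&2c\end{smallmatrix}\right)$; they differ by a factor of $2$ exactly in the columns corresponding to the off-diagonal entries of $D_k$. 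The reason is that the $\mathcal{D}_{q_k}$ in the theorem does not arise from the differential of $\Lambda_k\Lambda_k'$ at all (and it sits on the $D$-side of the chain, not the $\Lambda$-side as your sketch suggests); it is Turkington's conventional definition of $\partial\mathrm{vech}(D_k)/\partial\mathrm{vec}(D_k)$ for a symmetric argument, which assigns a unit derivative to \emph{both} members of each symmetric pair, chosen for consistency with the total-derivative expressions such as the $\mathcal{D}_{q_k}'$ appearing in (\ref{FSderiv3}). To arrive at the theorem as stated you must adopt or cite that convention for the $D$-side factor; it is a definitional choice that the symmetriser identities you list cannot reproduce from the elementwise calculation.
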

\begin{proof}
By the chain rule for vector-valued functions, as stated by \citet{Turkington2013}, the derivative in Theorem \ref{cholTheorem} can be expanded in the following manner:
\begin{equation}\nonumber
    \frac{\partial\text{vech}(D_k)}{\partial \text{vech}(\Lambda_k)}=\frac{\partial \text{vec}(\Lambda_k)}{\partial \text{vech}(\Lambda_k)}\frac{\partial \text{vec}(D_k)}{\partial \text{vec}(\Lambda_k)}\frac{\partial \text{vech}(D_k)}{\partial \text{vec}(D_k)}.
\end{equation}
We now consider each of the above derivatives in turn. The first derivative in the product is given by Theorem 5.9 of \citet{Turkington2013}, which states:
\begin{equation}\nonumber
    \frac{\partial\text{vec}(\Lambda_k)}{\partial\text{vech}(\Lambda_k)}=\mathcal{L}_{q_k}.
\end{equation}
The second derivative in the product is given by a result of \citet{Magnus1999}, which states that:
\begin{equation}\nonumber
\frac{\partial\text{vec}(\Lambda_k\Lambda_k')}{\partial\text{vec}(\Lambda_k)}=(\Lambda'_k \otimes I_{q_k})(I_{q_k^2}+K_{q_k}).
\end{equation}
For the final derivative in the product, Theorem 5.10 of \citet{Turkington2013} gives:
\begin{equation}\nonumber
    \frac{\partial\text{vech}(D_k)}{\partial\text{vec}(D_k)}=\mathcal{D}_{q_k}.
\end{equation}
Combining the above derivatives yields the desired result. \qed

\end{proof}

\subsection{Satterthwaite degrees of freedom estimation}
In this appendix, we provide an in-depth accounting of Satterthwaite estimation for degrees of freedom in the multi-factor LMM setting. In Section \ref{Tsect}, we detail how the expression described by (\ref{swdf}) is derived for estimating the degrees of freedom of the approximate T-statistic. Following this, in Section \ref{Fsect}, we describe how this approach is extended to degrees of freedom estimation for the approximate F-statistic. Finally, in Section \ref{swproof}, we prove the derivative result stated in Section \ref{swsection} and extend it to the setting of the constrained covariance matrices discussed in Section \ref{covstruct}. The first two appendices follow the proofs outlined in the work of \citet{Kuznetsova2017}. The reader is referred to this work for a more in-depth discussion of Satterthwaite degrees of freedom estimation for the LMM.

\subsubsection{Satterthwaite estimation for the approximate T-statistic}\label{Tsect}

To derive the estimator given by (\ref{swdf}), we first begin by noting that the approximate T-statistic used for LMM null hypothesis testing is given by:
\begin{equation}\nonumber
    T=\frac{L\hat{\beta}}{\sqrt{L\widehat{\text{Var}}(\hat{\beta})L'}},
\end{equation}
where $\hat{\beta}$ is the estimator of $\beta$ obtained from numerical optimization and $\widehat{\text{Var}}(\hat{\beta})$ is an estimate of the variance of the $\beta$ estimator, given by:

\begin{equation}\nonumber
    \widehat{\text{Var}}(\hat{\beta})=\hat{\sigma}^{2}(X'\hat{V}^{-1}X)^{-1}.
\end{equation}
We highlight that the above expression differs from the T-statistic typically employed in the setting of linear regression hypothesis testing. Under the standard assumptions for linear regression, it is possible to derive an exact expression for the distribution of the OLS variance estimate of $\hat{\beta}$ given by $\hat{\sigma}^{2}(X'X)^{-1}$. However, the above estimator, $\widehat{\text{Var}}(\hat{\beta})$, differs from the OLS estimate as it includes an estimate of $V$ which has an unknown distribution. As a direct result, no equivalent theory is available to obtain the distribution of $\widehat{\text{Var}}(\hat{\beta})$ in the setting of the LMM. A consequence of this is that, whilst $T$ follows a student's $t_{n-p}$ distribution in the setting of linear regression, the distribution of $T$ in the LMM setting is unknown. The aim of this section is to approximate the degrees of freedom of $T$ assuming that it follows a students $t$-distribution.

To meet this aim, we now denote Var$(\hat{\beta})$ as the unknown true variance of the $\hat{\beta}$ estimator. By multiplying both the numerator and denominator of the T-statistic by $(L\text{Var}(\hat{\beta})L)^{1/2}$, the below expression for $T$ is obtained:

\begin{equation}\nonumber
T=\frac{Z}{\sqrt{\frac{L\widehat{\text{Var}}(\hat{\beta})L'}{L\text{Var}(\hat{\beta})L'}}},
\end{equation}
where $Z$ is a random variable with standard normal distribution, given by $Z=L\hat{\beta}/\sqrt{L\text{Var}(\hat{\beta})L'}$. We now consider the definition of the students $t$-distribution, which states that a random variable is $t$-distributed with $v$ degrees of freedom if and only if it takes the following form:

\begin{equation}\nonumber
T = \frac{Z}{\sqrt{\frac{X}{v}}},
\end{equation}
\noindent
where $Z\sim N(0,1)$ and $X\sim\chi^2_{v}$. When the degrees of freedom $v$ are unknown, as both $X$ and $T$ are distributed with $v$ degrees of freedom, it suffices to estimate the degrees of freedom of the chi-square variable $X$ instead of the $t$-distributed variable $T$. By equating the approximate T-statistic with the defining form of a $t$-distributed random variable, under the assumption that the T-statistic is truly $t$-distributed, the below expression for $X$ can be obtained:
\begin{equation}\nonumber
    X=\frac{vL\widehat{\text{Var}}(\hat{\beta})L'}{L\text{Var}(\hat{\beta})L'}\sim \chi^2_v.
\end{equation}
We now define the notation $S^2(\hat{\eta})$ to represent the estimated variance of $L\hat{\beta}$ as a function of the estimated variance parameters given by $\hat{\eta}=(\hat{\sigma}^2,\hat{D}_1,\hdots,\hat{D}_r)$. Similarly, we define $\Sigma^2$ to be the true unknown variance of $L\hat{\beta}$. Noting that $\Sigma^2=L\text{Var}(\hat{\beta})L'$ and by rearranging the above, the below can be obtained:
\begin{equation}\nonumber
S^2(\hat{\eta}) = L\widehat{\text{Var}}(\hat{\beta})L' \sim \frac{\Sigma^2}{v}\chi^2_{v}.
\end{equation}
By considering the variance of the above expression and noting that, as $X\sim \chi^2_v$, Var$(X)=2v$, the below expression is obtained.
\begin{equation}\nonumber
    \text{Var}(S^2(\hat{\eta}))=\frac{(\Sigma^2)^2}{v^2}\text{Var}(X)=\frac{2(\Sigma^2)^2}{v}.
\end{equation}
To obtain an estimator of $v$ in terms of $S^2(\hat{\eta})$, the above is rearranged and the approximation $\Sigma^2\approx S^2(\hat{\eta})$ is used. This yields the below approximation.
\begin{equation}\nonumber
    \hat{v}(\hat{\eta}) = \frac{2(S^2(\hat{\eta}))^2}{\text{Var}(S^2(\hat{\eta}))}.
\end{equation}
Under the assumption that $T\sim t_v$, $\hat{v}(\hat{\eta})$ is an estimator for the degrees of freedom of $X$ and, therefore, for the degrees of freedom of $T$. This concludes the derivation of (\ref{swdf}).

\subsubsection{Satterthwaite estimation for the approximate F-statistic}\label{Fsect}

In this appendix, we detail how the Satterthwaite method described in Appendix \ref{Tsect} may be extended in order to estimate the degrees of freedom of the approximate F-statistic. The approximate F-statistic for the LMM takes the below form:

\begin{equation}\nonumber
F=\frac{\hat{\beta}'L'[L\widehat{\text{Var}}(\hat{\beta})L']^{-1}L\hat{\beta}}{r},
\end{equation}
where, throughout this section only, the notation $F$ will represent the approximate F-statistic and $r$ will be used to denote $r=\text{rank}(L)$. The aim of this section is to approximate $F$ with an $\mathit{F}_{r,v}$ distribution where $v$, the denominator degrees of freedom, is estimated using a Satterthwaite method of approximation. To simplify the notation we will first consider $Q$, which is the numerator of $F$;
\begin{equation}\nonumber
Q=rF=\hat{\beta}'L'[L\text{Var}(\hat{\beta})L']^{-1}L\hat{\beta}.
\end{equation}
We begin by considering the eigendecomposition of $L\text{Var}(\hat{\beta})L'$, which we will denote:

\begin{equation}\nonumber
L\text{Var}(\hat{\beta})L'=U\Lambda U',
\end{equation}
where $U$ is an orthonormal matrix of eigenvectors and $\Lambda$ is a diagonal matrix of eigenvalues. Using standard properties of the eigendecomposition the below is obtained:

\begin{equation}\nonumber
\begin{aligned}[b]
    Q & = (U'L\hat{\beta})'\Lambda^{-1}(U'L\hat{\beta})\\
    &= \sum_{i=1}^r\frac{(U'L\hat{\beta})_i^2}{\lambda_i},
    \end{aligned}
\end{equation}
where $(U'L\hat{\beta})_i$ is the $i^{th}$ element of the vector $U'L\hat{\beta}$ and $\lambda_i$ is the $i^{th}$ diagonal element (eigenvalue) of $\Lambda$, $\Lambda_{[i,i]}$. For purposes which will become clear, we define $\tilde{L}$ as $U'L$. By noting that $\lambda_i=(U'L\text{Var}(\hat{\beta})L'U)_i$, the above can be seen to be equal to:

\begin{equation}\nonumber
Q= \sum_{i=1}^r\Bigg(\frac{\tilde{L}_i\hat{\beta}}{\sqrt{\tilde{L}_i\text{Var}(\hat{\beta})\tilde{L}'_i}}\Bigg)^2,
\end{equation}
where $\tilde{L}_i$ represents the $i^{th}$ row of $\tilde{L}$. Noting now the definition of $\tilde{L}$, it can be seen that the above is a sum of independent squared T-statistics of the form discussed in Appendix \ref{Tsect}. By denoting the unknown degrees of freedom of each of the T-statistics in the above sum by $v_i$, it follows that:

\begin{equation}\nonumber
\begin{aligned}[b]
\mathbb{E}(Q) & = \sum_{i=1}^r\mathbb{E}\Bigg[\Bigg(\frac{\tilde{L}_i\hat{\beta}}{\sqrt{\tilde{L}_i\text{Var}(\hat{\beta})\tilde{L}'_i}}\Bigg)^2\Bigg] \\ & = \sum_{i=1}^r\frac{v_i}{v_i-2},
\end{aligned}
\end{equation}
where the second equality follows from the fact that a squared T-statistic with degrees of freedom $v_i$ is an $F_{1,v_i}$-statistic, which in turn has expectation $\frac{v_i}{v_i-2}$. Noting that, in practice, the true variance of $\hat{\beta}$ is unknown, the methods of \ref{Tsect} must be employed to estimate each of the $v_i$ in the above sum.

To use the above expression for $\mathbb{E}(Q)$ to estimate the denominator degrees of freedom of $F$, three cases are considered; the cases $r>2, r=2$ and $r=1$. Each of these will now be considered in turn.\newline
\newline
\noindent
\textbf{Case 1 ($\mathbf{r>2}$):} By recalling $Q=rF$, and by the standard formula for the expectation of an $F$ distribution with numerator degrees of freedom $v$ greater than $2$, the below is obtained. 
\begin{equation}\nonumber
 \mathbb{E}(Q)= \mathbb{E}(rF)= \frac{rv}{v-2}.
\end{equation}
Setting the above equal to the previous expression for $\mathbb{E}(Q)$ and rearranging, the below expression for $v$ is obtained:

\begin{equation}\nonumber v = \frac{2\sum_{i=1}^r\frac{v_i}{v_i-2}}{\bigg(\sum_{i=1}^r\frac{v_i}{v_i-2}\bigg)-r}.
\end{equation}
By using the Satterthwaite degrees of freedom approximation method described in Appendix \ref{Tsect} to estimate $\{v_i\}_{i \in \{1,\hdots,r\}}$, it follows that the above formula can be used to estimate the degrees of freedom of the approximate $F$-statistic when $r>2$.
\newline
\newline
\noindent
\textbf{Case 2 ($\mathbf{r=2}$):} When $r=2$, the expectation of $Q$ is infinite and, resultantly, the equality used in the argument of case 1 no longer holds. A common convention for estimation of $v$ in the case $r=2$ is based on the observation that for $r>2$ the expectation of $Q$ satisfies the following relationship with $v$:
\begin{equation}\nonumber
    v = \frac{2\mathbb{E}(Q)}{\mathbb{E}(Q)-r}.
\end{equation}
By taking the limit from above of both sides of this expression, and noting that $\mathbb{E}(Q)\rightarrow \infty$ as $v \downarrow 2$, the following estimate for $v$ is obtained.
\begin{equation}\nonumber
    v = \lim_{r\downarrow 2}\frac{2\mathbb{E}(Q)}{\mathbb{E}(Q)-r}=2.
\end{equation}
\newline
\noindent
\textbf{Case 3 ($\mathbf{r=1}$):} For the case $r=1$, it is well known that if $F\sim \mathit{F}_{1,v_1}$ for some integer $v_1$, then $F$ is the square of a T-statistic with the same degrees of freedom, $v_1$. Resultantly, in this case, the unknown denominator degrees of freedom of the F-statistic, $v_1$, can be estimated using the methods of Appendix \ref{Tsect}.

\subsubsection{Derivative of $S^2(\hat{\eta}^h)$ with respect to vech$(\hat{D}_k)$}\label{swproof}

In this appendix, proof of the derivative result which was given in Section \ref{swsection} is provided. Following this, an extension of this result is provided for models containing constrained covariance matrices of the form described in Section \ref{covstruct}. The result given in Section \ref{swsection} is restated by the below theorem.

\begin{theorem}\label{s2derivthm} Define $\hat{\eta}^h$ as in section \ref{swsection} and define the function $S^2(\hat{\eta}^h)$ by:
\begin{equation}\nonumber
    S^2(\hat{\eta}^h)=\sigma^2L(X'\hat{V}^{-1}X)^{-1}L'.
\end{equation}
The derivative of $S^2(\hat{\eta}^h)$ with respect to the half vectorisation of $\hat{D}_k$ is given by:
\begin{equation}\nonumber
\frac{d S^2(\hat{\eta}^h)}{d \text{vech}(\hat{D}_k)} = 
\hat{\sigma}^{2}\mathcal{D}_{q_k}'\bigg(\sum_{j=1}^{l_k}\hat{B}_{(k,j)}\otimes \hat{B}_{(k,j)}\bigg),
\end{equation}
where $B_{(k,j)}$ is given by:
\begin{equation}\nonumber
    \hat{B}_{(k,j)} = Z_{(k,j)}'\hat{V}^{-1}X(X'\hat{V}^{-1}X)^{-1}L'.
\end{equation}
\end{theorem}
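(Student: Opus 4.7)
The plan is to work with matrix differentials, first computing the partial derivative of $S^2$ with respect to $\text{vec}(\hat{D}_k)$ and then passing to the total derivative with respect to $\text{vech}(\hat{D}_k)$ via the duplication-matrix identity used in the derivation of Corollary \ref{vechTotalDerivCor}.

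First I would write $S^2(\hat{\eta}^h) = \hat{\sigma}^2 L M^{-1} L'$ with $M = X'\hat{V}^{-1}X$, holding $\hat{\sigma}^2$ fixed throughout. Applying the standard identity $dA^{-1} = -A^{-1}(dA)A^{-1}$ twice, once to $M^{-1}$ and once to $\hat{V}^{-1}$, so that the two minus signs cancel, gives
\begin{equation*}
dS^2 = \hat{\sigma}^2 L M^{-1} X' \hat{V}^{-1}(d\hat{V}) \hat{V}^{-1} X M^{-1} L'.
\end{equation*}
Using the decomposition $\hat{V} = I_n + \sum_{k'}\sum_{j} Z_{(k',j)} \hat{D}_{k'} Z_{(k',j)}'$ from equation (\ref{vdef}), only summands indexed by factor $k$ contribute when differentiating with respect to $\hat{D}_k$, so $d\hat{V} = \sum_{j=1}^{l_k} Z_{(k,j)} (d\hat{D}_k) Z_{(k,j)}'$. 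Recognising $L M^{-1} X' \hat{V}^{-1} Z_{(k,j)}$ as $\hat{B}_{(k,j)}'$ then yields the scalar identity
\begin{equation*}
dS^2 = \hat{\sigma}^2 \sum_{j=1}^{l_k} \hat{B}_{(k,j)}' (d\hat{D}_k) \hat{B}_{(k,j)}.
\end{equation*}

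Next I would apply $\text{vec}(\cdot)$ to both sides (trivial on scalars) together with the identity $\text{vec}(ABC) = (C'\otimes A)\text{vec}(B)$ to obtain
\begin{equation*}
dS^2 = \hat{\sigma}^2 \sum_{j=1}^{l_k} (\hat{B}_{(k,j)}' \otimes \hat{B}_{(k,j)}')\, \text{vec}(d\hat{D}_k),
\end{equation*}
from which the derivative convention set out at the beginning of Appendix \ref{derivAppendix} identifies
\begin{equation*}
\frac{\partial S^2(\hat{\eta}^h)}{\partial \text{vec}(\hat{D}_k)} = \hat{\sigma}^2 \sum_{j=1}^{l_k} \hat{B}_{(k,j)} \otimes \hat{B}_{(k,j)}.
\end{equation*}
Finally, since $\hat{D}_k$ is symmetric, the relation $d/d\text{vech}(D_k) = \mathcal{D}_{q_k}'\,\partial/\partial\text{vec}(D_k)$ established in the derivation of Corollary \ref{vechTotalDerivCor} delivers the stated formula.

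The main obstacle is really bookkeeping rather than mathematical depth: one must carefully track the transposition that arises in passing from the bilinear form $\hat{B}_{(k,j)}'(d\hat{D}_k)\hat{B}_{(k,j)}$ to its vectorised form (producing $\hat{B}_{(k,j)} \otimes \hat{B}_{(k,j)}$ in the gradient rather than its transpose), and ensure that $\mathcal{D}_{q_k}'$, not $\mathcal{D}_{q_k}^+$, is the correct multiplier in the last step --- a subtlety flagged by the comment following Corollary \ref{vechTotalDerivCor}. An analogous chain-rule argument through the constraint matrix $\mathcal{C}_k$ of Section \ref{covstruct} would then handle the extension to constrained covariance structures.
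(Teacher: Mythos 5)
Your proof is correct and follows essentially the same route as the paper: both differentiate through the chain $\hat{D}_k \to \hat{V} \to \hat{V}^{-1} \to X'\hat{V}^{-1}X \to L(X'\hat{V}^{-1}X)^{-1}L'$ using the derivative-of-an-inverse identity, the decomposition (\ref{vdef}) of $\hat{V}$, the Kronecker identity $\text{vec}(ABC)=(C'\otimes A)\text{vec}(B)$, and finally $\mathcal{D}_{q_k}'$ to pass to the half-vectorisation. The only difference is presentational — you work with matrix differentials and vectorise once at the end, whereas the paper writes out the product of the four intermediate Jacobians explicitly and collapses them via the mixed-product property — and your sign and transposition bookkeeping is right throughout.
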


\begin{proof}
By the chain rule for vector valued functions, as stated by \citet{Turkington2013}, and by noting that the function $S^2(\hat{\eta}^h)$ outputs a $(1 \times 1)$ scalar value, it can be seen that:

\begin{equation}\nonumber
\begin{aligned}[b]
     & \frac{\partial S^2(\hat{\eta}^h)}{\partial \text{vec}(\hat{D}_k)}=\hat{\sigma}^2\frac{\partial\big(L(X'\hat{V}^{-1}X)^{-1}L'\big)}{\partial\text{vec}(\hat{D}_k)} =\\
    & \hat{\sigma}^2 \frac{\partial\text{vec}(\hat{V})}{\partial\text{vec}(\hat{D}_k)}
    \frac{\partial\text{vec}(\hat{V}^{-1})}{\partial\text{vec}(\hat{V})} \frac{\partial\text{vec}(X'\hat{V}^{-1}X)}{\partial\text{vec}(\hat{V}^{-1})} 
    \frac{\partial\big(L(X'\hat{V}^{-1}X)^{-1}L'\big)}{\partial\text{vec}(X'\hat{V}^{-1}X)}.
    \end{aligned}
\end{equation}
We now consider each of the terms in this product in turn. Using the expansion given in (\ref{vdef}) the first derivative in the product is evaluated to the below.
\begin{equation}\nonumber
\frac{\partial\text{vec}(\hat{V})}{\partial\text{vec}(\hat{D}_k)}=\sum_{j=1}^{l_k}Z'_{(k,j)}\otimes Z'_{(k,j)}.
\end{equation}
For the second derivative in the product, we apply result (4.17) from \citet{Turkington2013}, which states:
\begin{equation}\nonumber
\frac{\partial\text{vec}(\hat{V}^{-1})}{\partial\text{vec}(\hat{V})}=-\hat{V}^{-1}\otimes \hat{V}^{-1}.
\end{equation}
For the third term of the product, a result stated in Chapter 5.7 of \citet{Turkington2013} gives the following:
\begin{equation}\nonumber
    \frac{\partial\text{vec}(X'\hat{V}^{-1}X)}{\partial\text{vec}(\hat{V}^{-1})}=X \otimes X.
\end{equation}
By a variant of (4.17) from \citet{Turkington2013}, given on the line following the statement of (4.17), the below is obtained:
\begin{equation}\nonumber
\begin{aligned}[b]
& \frac{\partial\big(L(X'\hat{V}^{-1}X)^{-1}L'\big)}{\partial\text{vec}(X'\hat{V}^{-1}X)}=\\
& -\text{vec}(L(X'\hat{V}^{-1}X)^{-1}(L(X'\hat{V}^{-1}X)^{-1})')\\
& =-((X'\hat{V}^{-1}X)^{-1}L')\otimes ((X'\hat{V}^{-1}X)^{-1}L').
\end{aligned}
\end{equation}
Following this, application of the mixed product property of the Kronecker product and multiplication by the transposed duplication matrix yields the result of Theorem \ref{s2derivthm}. \qed

\end{proof}

\begin{theorem}\label{s2derivthm2} Define $\rho_{\hat{D}}$ and $\mathcal{C}$ as in Section (\ref{covstruct}) and $\hat{B}_{(k,j)}$ and $\hat{\eta}^h$ as in Theorem \ref{s2derivthm}. The derivative of $S^2(\hat{\eta}^h)$ with respect to $\rho_{\hat{D}}$ is given by:
\begin{equation}\nonumber
\frac{d S^2(\hat{\eta}^h)}{d \rho_{\hat{D}}} = 
\sigma^{2}\mathcal{C}\hat{\mathcal{B}},
\end{equation}
where $\hat{\mathcal{B}}$ is defined by:
\begin{equation}\nonumber
\hat{\mathcal{B}}= \left[\bigg(\sum_{j=1}^{l_1}\hat{B}_{(1,j)}'\otimes \hat{B}_{(1,j)}'\bigg),\hdots,\bigg(\sum_{j=1}^{l_r}\hat{B}_{(r,j)}'\otimes \hat{B}_{(r,j)}'\bigg)\right]'.
\end{equation}
\end{theorem}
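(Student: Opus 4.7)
The plan is to obtain this derivative through one further application of the chain rule, built directly on top of the calculation already carried out in the proof of Theorem \ref{s2derivthm}. Since $S^2(\hat{\eta}^h)$ depends on $\hat{D}$ only through $\hat{V}$, and under the constrained parameterisation $\hat{D}$ depends on $\rho_{\hat{D}}$ only through the constraint matrix $\mathcal{C}$, I expect the final expression to factorise as $\mathcal{C}$ (bridging $\rho_{\hat{D}}$ to $v(\hat{D})$) applied to a stacked vector whose blocks are the per-factor unconstrained partial derivatives from Theorem \ref{s2derivthm}.

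First I would revisit the derivation used in Theorem \ref{s2derivthm}, stopping one step short, before the multiplication by $\mathcal{D}'_{q_k}$. Collapsing the four-fold Kronecker product that appears there via the mixed-product property yields the unconstrained partial derivative
\[ \frac{\partial S^2(\hat{\eta}^h)}{\partial \text{vec}(\hat{D}_k)} = \hat{\sigma}^2 \sum_{j=1}^{l_k} \hat{B}_{(k,j)} \otimes \hat{B}_{(k,j)}, \]
a column vector of length $q_k^2$. Vertically stacking these $r$ vectors over $k=1,\hdots,r$ produces $\partial S^2(\hat{\eta}^h) / \partial v(\hat{D}) = \hat{\sigma}^2 \hat{\mathcal{B}}$; the identity $(A\otimes B)' = A'\otimes B'$ is what reconciles this vertical stack with the horizontal-concatenation-then-transpose presentation of $\hat{\mathcal{B}}$ used in the statement of the theorem.

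The second step is to bridge from $v(\hat{D})$ to $\rho_{\hat{D}}$ using the chain rule for vector-valued functions (Turkington's Theorem 5.9, also invoked in Appendix \ref{CholAppendix} and in equation (\ref{FullDequation})). By the construction of $\mathcal{C}$ given in Appendix \ref{conApp}, $\mathcal{C}$ is precisely $\partial v(\hat{D})/\partial \rho_{\hat{D}}$ in the paper's derivative convention, so
\[ \frac{dS^2(\hat{\eta}^h)}{d\rho_{\hat{D}}} = \mathcal{C}\,\frac{\partial S^2(\hat{\eta}^h)}{\partial v(\hat{D})} = \hat{\sigma}^2\, \mathcal{C}\hat{\mathcal{B}}, \]
which is the stated identity.

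There is no deep analytic obstacle here; the argument is essentially a repackaging of Theorem \ref{s2derivthm} with one extra chain-rule step and the constraint-matrix identity from Section \ref{covstruct}. The only care required is bookkeeping, namely verifying that the per-factor column-vector derivatives are stacked into $\hat{\mathcal{B}}$ in the same order that $\mathcal{C}$ expects its columns (matching the block ordering of $v(\hat{D})$), and that the scalar $\hat{\sigma}^2$ factor is kept outside the constraint matrix throughout.
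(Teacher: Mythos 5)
Your proposal is correct and follows essentially the same route as the paper's own proof: extract the unconstrained per-factor partial derivatives $\hat{\sigma}^2\sum_j \hat{B}_{(k,j)}\otimes\hat{B}_{(k,j)}$ from the proof of Theorem \ref{s2derivthm}, stack them into $\partial S^2(\hat{\eta}^h)/\partial v(\hat{D})=\hat{\sigma}^2\hat{\mathcal{B}}$, and apply the chain rule with $\mathcal{C}=\partial v(\hat{D})/\partial\rho_{\hat{D}}$. The only addition is your explicit remark on the transpose identity reconciling the stacking convention with the stated form of $\hat{\mathcal{B}}$, which the paper leaves implicit.
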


\begin{proof} From the proof of Theorem \ref{s2derivthm}, it can be seen that the partial derivative of $S^2(\hat{\eta}^h)$ with respect to vec$(\hat{D}_k)$ is given by:
\begin{equation}\nonumber
\frac{\partial S^2(\hat{\eta}^h)}{\partial \text{vec}(\hat{D}_k)} = 
\hat{\sigma}^{2}\sum_{j=1}^{l_k}\hat{B}_{(k,j)}\otimes \hat{B}_{(k,j)}.
\end{equation}
By defining $v(\hat{D})$ as in Section \ref{covstruct}, it can be seen from the above that the partial derivative of $S^2(\hat{\eta}^h)$ with respect to $v(\hat{D})$ is given by:
\begin{equation}\nonumber
\frac{\partial S^2(\hat{\eta}^h)}{\partial v(\hat{D})} = 
\hat{\sigma}^{2}\hat{\mathcal{B}}.
\end{equation}
By the chain rule for vector valued functions, as stated by \citet{Turkington2013}, the below can now be obtained;
\begin{equation}\nonumber
\frac{d S^2(\hat{\eta}^h)}{d \rho_{\hat{D}}} = 
\frac{\partial v(\hat{D})}{\partial \rho_{\hat{D}}}\frac{\partial S^2(\hat{\eta}^h)}{\partial v(\hat{D})}=\mathcal{C}\frac{\partial S^2(\hat{\eta}^h)}{\partial v(\hat{D})},
\end{equation}
where the second equality follows from the definition of $\mathcal{C}$. Substituting the partial derivative of $S^2(\hat{\eta}^h)$ with respect to $\rho_{\hat{D}}$ into the above completes the proof. \qed

\end{proof}

\subsection{The ACE model}\label{ACEapp}
In this appendix, further detail concerning the ACE model of Section \ref{AceExample} is provided. First, Appendix \ref{GammaApp} details how the random effects vector, $b$, and covariance matrix, $D$, may be constructed for the ACE model. Following this, Appendix \ref{KinApp} details how the matrices which model the additive genetic and common environmental variance components of the ACE model are defined. Next, Appendix \ref{AceConApp} provides an overview of the constrained optimization procedure adopted for the ACE model, alongside an expression for the constraint matrix employed for optimization. Finally, Appendix \ref{ACEcompApp} provides detail on how the parameter estimation method that was employed to obtain the results of Section \ref{AceRes} was efficiently implemented.

\subsubsection{Specification of random effects}\label{GammaApp}

In this Appendix, we provide detail on how the random effects vector, $b$, and covariance matrix, $D$, are defined for the ACE model. To do so, we first describe the covariance of the random terms $\gamma_{k,j,i}$, which appear in twin study model from Section \ref{AceExample}. Following this, we use the $\gamma_{k,j,i}$ terms to construct the random effects vector, $b$. To simplify notation, we make the assumption that for each family structure type, subjects within family units which exhibit such a structure are ordered consistently. For example, if the family structure describes ``families containing one twin-pair and one half-sibling'', it may be assumed that the members of every family who exhibit such a structure are given in the same order: (twin, twin, half-sibling). 

As noted in Section \ref{AceExample}, $\gamma_{k,j,i}$ models the within-``family unit" covariance. As such, $\text{cov}(\gamma_{k,j_1,i},\gamma_{k,j_2,i})=0$, for any two distinct family units, $j_1\neq j_2$. Within any individual family unit (e.g. family unit $j$ of structure type $k$), the random effects $\{\gamma_{k,j,i}\}_{i \in \{1,\hdots, q_k\}}$ are defined to have the below covariance matrix:
\begin{equation}\nonumber
\begin{aligned}[b]
    & \text{cov}\left(\begin{bmatrix}
     \gamma_{k,j,1} \\
     \gamma_{k,j,2} \\
     \vdots \\
     \gamma_{k,j,q_k} \\
    \end{bmatrix}\right) = \sigma^2_a\mathbf{K}^a_k + \sigma^2_c\mathbf{K}^c_k,\\
\end{aligned}
\end{equation}
where $\mathbf{K}^a_k$ and $\mathbf{K}^c_k$ are the known, predefined kinship (expected additive genetic material) and common environmental effects matrices, respectively (see Appendix \ref{KinApp} for further detail).

The random effects vector $b$, may be constructed by vertical concatenation of the random $\gamma_{k,i,j}$ terms, i.e. $b = [\gamma_{1,1,1},\gamma_{1,1,2},...,\gamma_{r,l_r,q_r}]'$. To derive an expression for the covariance matrix of $b$, we note from equation (\ref{LMMdef2}) that $\sigma^2_eD$ is equal to cov$(b)$. Equating this with the previous expression, it may now be seen that $D$ is block diagonal, with its $k^{th}$ unique diagonal block given by:
\begin{equation}\nonumber
\begin{aligned}
    D_k & =  \sigma^{-2}_e(\sigma^2_a\mathbf{K}^a_k + \sigma^2_c\mathbf{K}^c_k) \\
\end{aligned}. 
\end{equation}

\subsubsection{Kinship and environmental matrices}\label{KinApp}

In twin studies, twin pairs are typically described as either Monozygotic (MZ) or Dizygotic (DZ). MZ twin pairs are identical twins, whilst DZ twins are non-identical. The additive genetic kinship matrix, $\mathbf{K}^a_k$, describes the additive genetic similarity between members of a family unit. For a family of $q_k$ members, $\mathbf{K}^a_k$, is a $(q_k \times q_k)$ dimensional constant matrix with its $(i,j)^{th}$ element given by:

\begin{equation}\nonumber\small{
    (\mathbf{K}^a_k)_{[i,j]} = \begin{cases}
    1 & \text{if subjects }i \text{ and }j \text{ are MZ twins or }i=j.\\
    \frac{1}{2} & \text{if subjects }i \text{ and }j \text{ are DZ twins or full siblings.}\\
    \frac{1}{4} & \text{if subjects }i \text{ and }j \text{ are half siblings.} \\
    0 & \text{if subjects }i \text{ and }j \text{ are unrelated.} \\
    \end{cases}}
\end{equation}

The common environmental matrix, $\mathbf{K}^c_k$, describes the similarity between members of a family unit which is attributed to individuals' being reared in a shared environment. $\mathbf{K}^c_k$ is a $(q_k \times q_k)$ dimensional constant matrix with its $(i,j)^{th}$ element given by:
\begin{equation}\nonumber\small{
    (\mathbf{K}^a_k)_{[i,j]} = \begin{cases}
    1 & \text{if subjects }i \text{ and }j \text{ were reared together.}\\
    0 & \text{if subjects }i \text{ and }j \text{ were not reared together.} \\
    \end{cases}}
\end{equation}

For more information on the construction of kinship and common environmental matrices for twin studies, the reader is referred to \citet{lawlor2009family}.

\subsubsection{Constrained optimization for the ACE model}\label{AceConApp}

In this section, a brief overview of the constrained optimization procedure which was adopted for parameter estimation of the ACE model is provided. An expression for the constraint matrix, $\mathcal{C}$ is then provided by Theorem \ref{ADEThm}, alongside derivation.

To perform parameter estimation for the ACE model, an approach based on the ReML criterion described in Appendix \ref{remlApp} and the constrained covariance structure methods outlined in Section \ref{covstruct} was adopted. The resulting optimization procedure was performed in terms of the parameter vector $\theta^{ACE}=(\beta, \tau_a, \tau_c, \sigma^2_e)$, where $\tau_a=\sigma^{-1}_e\sigma_a$ and  $\tau_c=\sigma^{-1}_e\sigma_c$. $\beta$ and $\sigma^2_e$ were updated according to the GLS update rules provided by equations (\ref{betaUpdate}) and (\ref{sigma2Update}), respectively, whilst updates for the parameter vector  $[\tau_a, \tau_c]'$ were performed via a Fisher Scoring update rule. The Fisher Scoring update rule employed was of the form (\ref{FS}) with $\theta$ substituted for $[\tau_a, \tau_c]'$. To obtain the necessary gradient vector and information matrix required to perform this update step, equation (\ref{FullDequation}) of Appendix \ref{conApp} was employed. An expression for the required constraint matrix, $\mathcal{C}$, alongside derivation, is given by Theorem \ref{ADEThm} below.

\begin{theorem}\label{ADEThm}

Let $v(D)$ denote the vector of covariance parameters $[\text{vec}(D_1)', \text{vec}(D_2)',\hdots  \text{vec}(D_r)']'$. For the ACE model, the constraint matrix (Jacobian) which maps a partial derivative vector taken with respect to $v(D)$ to a total derivative vector taken with respect to $\tau=[\tau_a,\tau_c]'$ is given by:
\begin{equation}\small{\nonumber
    \begin{aligned} \mathcal{C} =  \bigg(\mathbb{1}_{(1,r)} \otimes \begin{bmatrix}
         2\tau_a & 0 \\ 0 & 2\tau_c
        \end{bmatrix}\bigg)  \bigg(\bigoplus_{k=1}^r \begin{bmatrix}
         \text{vec}(\mathbf{K}^a_k)' \\ \text{vec}(\mathbf{K}^c_k)'
        \end{bmatrix}\bigg),
    \end{aligned}}
\end{equation}
where $\oplus$ represents the direct sum. 
\end{theorem}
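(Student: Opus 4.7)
The plan is to obtain the formula by direct differentiation and then recognise the resulting block structure as a product of a Kronecker form and a direct sum.

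First I would rewrite $D_k$ in terms of the parameters $\tau$ over which the Fisher Scoring update is actually performed. From the definitions $\tau_a = \sigma_e^{-1}\sigma_a$ and $\tau_c = \sigma_e^{-1}\sigma_c$, together with the expression $D_k = \sigma_e^{-2}(\sigma_a^2 \mathbf{K}^a_k + \sigma_c^2 \mathbf{K}^c_k)$ given in Appendix~\ref{GammaApp}, one obtains immediately $D_k = \tau_a^2 \mathbf{K}^a_k + \tau_c^2 \mathbf{K}^c_k$, so that by linearity of the vec operator $\text{vec}(D_k) = \tau_a^2 \,\text{vec}(\mathbf{K}^a_k) + \tau_c^2\,\text{vec}(\mathbf{K}^c_k)$. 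Since $\mathbf{K}^a_k$ and $\mathbf{K}^c_k$ are constant matrices, differentiating this expression element-wise with respect to $\tau_a$ and $\tau_c$ is routine and yields $\partial \text{vec}(D_k)/\partial \tau_a = 2\tau_a \text{vec}(\mathbf{K}^a_k)$ and $\partial \text{vec}(D_k)/\partial \tau_c = 2\tau_c \text{vec}(\mathbf{K}^c_k)$.

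Next I would assemble the Jacobian $\mathcal{C} = \partial v(D)/\partial \tau$ row-by-row using the derivative convention laid out at the start of Appendix~\ref{derivAppendix}. The row corresponding to $\tau_a$ is then the horizontal concatenation $[\,2\tau_a\text{vec}(\mathbf{K}^a_1)',\ldots, 2\tau_a\text{vec}(\mathbf{K}^a_r)'\,]$, and analogously for $\tau_c$ with the superscript $c$ replacing $a$ and $\tau_c$ replacing $\tau_a$. This gives an explicit $2 \times (\sum_k q_k^2)$ matrix of the required Jacobian.

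Finally I would verify that this explicit matrix coincides with the stated factorised form. Observe that the direct sum $\bigoplus_{k=1}^r [\text{vec}(\mathbf{K}^a_k)'; \text{vec}(\mathbf{K}^c_k)']$ is block-diagonal with $r$ blocks, each of size $2 \times q_k^2$, while the Kronecker product $\mathbb{1}_{(1,r)} \otimes \text{diag}(2\tau_a,2\tau_c)$ is the $2 \times 2r$ matrix consisting of $r$ horizontally concatenated copies of $\text{diag}(2\tau_a, 2\tau_c)$. Carrying out this multiplication block-wise, the $k$-th block of columns of the product selects precisely the $k$-th summand of the direct sum, multiplied on the left by $\text{diag}(2\tau_a,2\tau_c)$, which reproduces the two rows computed in the previous step.

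The computation is essentially bookkeeping rather than deep, so there is no serious obstacle. The one point that warrants care is matching conventions: the Jacobian $\mathcal{C}$ used in Section~\ref{covstruct} and equation~(\ref{FullDequation}) has rows indexed by the smaller parameter vector (here $\tau$) and columns indexed by $v(D)$, which is the transpose of the convention $\partial\text{vec}(A)/\partial\text{vec}(B)$ defined in Appendix~\ref{derivAppendix}; I would state this orientation explicitly at the start of the proof to avoid any ambiguity when the formula is later substituted into~(\ref{FullDequation}).
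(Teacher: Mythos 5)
Your proposal is correct and follows essentially the same route as the paper: both reduce to computing $\partial\text{vec}(D_k)/\partial\tau_a = 2\tau_a\text{vec}(\mathbf{K}^a_k)$ and $\partial\text{vec}(D_k)/\partial\tau_c = 2\tau_c\text{vec}(\mathbf{K}^c_k)$ from $\text{vec}(D_k)=\tau_a^2\text{vec}(\mathbf{K}^a_k)+\tau_c^2\text{vec}(\mathbf{K}^c_k)$, and you correctly identify the row/column orientation of $\mathcal{C}$. The only cosmetic difference is that the paper obtains the factorised form by chaining through auxiliary per-factor copies $\tilde{\tau}_{a,k},\tilde{\tau}_{c,k}$ (yielding $(\mathbb{1}_{(1,r)}\otimes I_2)$ times a direct sum, then rearranging), whereas you write out the explicit $2\times\sum_k q_k^2$ Jacobian and verify the factorisation by block multiplication; the content is identical.
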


\begin{proof}
To prove Theorem \ref{ADEThm}, we first define $\tilde{\tau}_{a,1},\hdots \tilde{\tau}_{a,r}$ and $\tilde{\tau}_{c,1},\hdots \tilde{\tau}_{c,r}$ as variables which satisfy the below expressions, for all $k \in \{1,\hdots,r\}$.
\begin{equation}\nonumber
    \begin{aligned}
     & \tau_a = \tilde{\tau}_{a,k}, \hspace{0.2cm} \tau_c = \tilde{\tau}_{c,k}, \\
     & \text{vec}(D_k) = \tilde{\tau}^2_{a,k}\text{vec}(\mathbf{K}^a_k) + \tilde{\tau}^2_{c,k}\text{vec}(\mathbf{K}^c_k).
    \end{aligned}    
\end{equation}
Following this, we define the vector $\tilde{\tau}$ as $\tilde{\tau}=[\tilde{\tau}_{a,1},\tilde{\tau}_{c,1}, \hdots, \tilde{\tau}_{a,r},\tilde{\tau}_{c,r}]'$. By the chain rule for vector-valued functions, as stated by \citet{Turkington2013}, and by the definition of $\mathcal{C}$, provided in Section \ref{covstruct}, it can be seen that:
\begin{equation}\small{\label{ADEchain}
   \mathcal{C}=\frac{d v(D)}{d\tau}=\frac{\partial \tilde{\tau}}{\partial\tau}\frac{\partial v(D_k)}{\partial\tilde{\tau}}}.
\end{equation}
By direct evaluation, the first partial derivative in the product on the right hand side of the above can be seen to be equal to the following.
\begin{equation} \nonumber
    \frac{\partial \tilde{\tau}}{\partial\tau} = \begin{bmatrix}
     1 & 0 & 1 & 0 & 1 & \hdots 0 \\
     0 & 1 & 0 & 1 & 0 & \hdots 1 \\
    \end{bmatrix} = \mathbb{1}_{(1,r)} \otimes I_2.
\end{equation}
To evaluate the second derivative in the product, we first consider the partial derivative of vec$(D_k)$ with respect to the parameter vector $[\tilde{\tau}_{a,k},\tilde{\tau}_{c,k}]'$ for arbitrary $k \in \{1,\hdots,r\}$. From the definitions of $\tilde{\tau}_{a,k}$ and $\tilde{\tau}_{c,k}$, it can be seen that:
\begin{equation}\nonumber
\begin{aligned}
\frac{\partial \text{vec}(D_k)}{\partial [\tilde{\tau}_{a,k}, \tilde{\tau}_{c,k}]'} & =  \frac{\partial }{\partial [\tilde{\tau}_{a,k}, \tilde{\tau}_{c,k}]'}(\tilde{\tau}^2_{a,k}\text{vec}(\mathbf{K}^a_k) + \tilde{\tau}^2_{c,k}\text{vec}(\mathbf{K}^c_k)) \\
& =\begin{bmatrix}
 \frac{\partial }{\partial \tilde{\tau}_{a,k}} (\tilde{\tau}^2_{a,k}\text{vec}(\mathbf{K}^a_k) + \tilde{\tau}^2_{c,k}\text{vec}(\mathbf{K}^c_k))' \\
 \frac{\partial }{\partial \tilde{\tau}_{c,k}} (\tilde{\tau}^2_{a,k}\text{vec}(\mathbf{K}^a_k) + \tilde{\tau}^2_{c,k}\text{vec}(\mathbf{K}^c_k))'\\
\end{bmatrix}\\
& =\begin{bmatrix}
 2\tilde{\tau}_{a,k}\text{vec}(\mathbf{K}^a_k)' \\
 2\tilde{\tau}_{c,k}\text{vec}(\mathbf{K}^c_k)'\\
\end{bmatrix}.\\
\end{aligned}
\end{equation}
By similar reasoning it can be seen, for arbitrary $k_1,k_2 \in \{1,\hdots,r\}$, such that $k_1 \neq k_2$, that the below is true:
\begin{equation}\nonumber
\frac{\partial \text{vec}(D_{k_1})}{\partial [\tilde{\tau}_{a,k_2}, \tilde{\tau}_{c,k_2}]'} =  \mathbf{0}_{(2,q^2_{k_1})},
\end{equation}
where $\mathbf{0}_{(2,q^2_{k_1})}$ is the $(2 \times q^2_{k_1})$ dimensional matrix of zero elements. By combining the above expressions and noting the definitions of $v(D)$ and $\tilde{\tau}$, it can now be seen that the derivative of $v(D)$ with respect to $\tilde{\tau}$ is given by the below.
\begin{equation}\nonumber
    \begin{aligned}
    & \frac{\partial v(D)}{\partial\tilde{\tau}}= \\ & \begin{bmatrix}
     \begin{bmatrix}
 2\tilde{\tau}_{a,1}\text{vec}(\mathbf{K}^a_1)' \\
 2\tilde{\tau}_{c,1}\text{vec}(\mathbf{K}^c_1)'\\
\end{bmatrix} & \mathbf{0}_{(2,q^2_2)} & \hdots & \mathbf{0}_{(2,q^2_r)} \\
     \mathbf{0}_{(2,q^2_1)} &\begin{bmatrix}
 2\tilde{\tau}_{a,2}\text{vec}(\mathbf{K}^a_2)' \\
 2\tilde{\tau}_{c,2}\text{vec}(\mathbf{K}^c_2)'\\
\end{bmatrix} &  \hdots & \mathbf{0}_{(2,q^2_r)} \\
 \vdots & \vdots & \ddots & \vdots \\
     \mathbf{0}_{(2,q^2_1)} & \mathbf{0}_{(2,q^2_2)} & \hdots & \begin{bmatrix}
 2\tilde{\tau}_{a,r}\text{vec}(\mathbf{K}^a_r)' \\
 2\tilde{\tau}_{c,r}\text{vec}(\mathbf{K}^c_r)'\\
\end{bmatrix}\\
    \end{bmatrix} \\
    & = \bigoplus_{k=1}^r \begin{bmatrix}
         2\tilde{\tau}_{a,k}\text{vec}(\mathbf{K}^a_k)' \\ 2\tilde{\tau}_{c,k}\text{vec}(\mathbf{K}^c_k)'
        \end{bmatrix}. \\
    \end{aligned}
\end{equation}
By substituting the above partial derivative results into (\ref{ADEchain}), the following expression can now be obtained for the constraint matrix, $\mathcal{C}$.
\begin{equation}\small{\nonumber
   \mathcal{C}=\bigg(\mathbb{1}_{(1,r)} \otimes I_2\bigg)\bigg(\bigoplus_{k=1}^r \begin{bmatrix}
         2\tilde{\tau}_{a,k}\text{vec}(\mathbf{K}^a_k)' \\ 2\tilde{\tau}_{c,k}\text{vec}(\mathbf{K}^c_k)'
        \end{bmatrix}\bigg)}.
\end{equation}
By substituting $\tau_a = \tilde{\tau}_{a,k}$ and $\tau_c = \tilde{\tau}_{c,k}$ and rearranging, the result stated in Theorem \ref{ADEThm} can now be obtained. \qed
\end{proof}

\subsubsection{Computation and the ACE model}\label{ACEcompApp}

A key aspect in which the ACE model differs from other LMM's considered in this work is that, in the ACE model, the second dimension of the random effects design matrix, $q$, is equal to the number of observations, $n$. Resultantly, the random effects design matrix, $Z$, which is given by the $(n \times n)$ identity matrix, has notably large dimensions. Due to the large size of $Z$, the methods of Section \ref{compeff}, which assume that the second dimension of $Z$ is much smaller than $n$ (i.e. $q << n$), cannot be applied to the ACE model in order to improve computation speed. In this appendix, we briefly outline how the FSFS algorithm of Section \ref{FSFSsection} may be implemented efficiently, in conjunction with the ReML adjustments described in Appendix \ref{remlApp} and the constrained optimization methods of Section \ref{covstruct} to perform parameter estimation for the ACE model.

To begin, we consider the GLS updates, given by equations (\ref{betaUpdate}) and (\ref{sigma2Update}), which are employed for updating the parameters $\beta$ and $\sigma^2$, respectively. By defining the matrix, $\bar{D}_k$ by $\bar{D}_k=I_{q_k}+D_k$, and noting that the matrix $Z$ is the $(n \times n)$ identity matrix, the below matrix equality can be seen to hold for the ACE model:
\begin{equation}\nonumber
\begin{aligned}
    & X'V^{-1}X = \sum_{k=1}^r \sum_{j=1}^{l_k}X_{(k,j)}'\bar{D}^{-1}_k X_{(k,j)} \\
\end{aligned},
\end{equation}
where $X_{(k,j)}$ represents the design matrix for the $j^{th}$ family unit which possesses family structure of type $k$. Via well-known properties of the Kronecker product, the above equality can be seen to be equivalent to the following:
\begin{equation}\nonumber
\begin{aligned}
    & \text{vec}(X'V^{-1}X) = \sum_{k=1}^r \bigg(\sum_{j=1}^{l_k}X_{(k,j)}'\otimes X_{(k,j)}'\bigg) \text{vec}(\bar{D}^{-1}_k). \\
\end{aligned}
\end{equation}

The above equality is noteworthy as the matrix expression which appears inside the large brackets on the right-hand side is fixed and does not depend on $\beta$, $\sigma^2$ or $D$. Further, the matrix expression inside the brackets is typically small in size and has dimensions $(p^2 \times q^2_k)$. As a result, this expression can be calculated and stored prior to the iterative procedure of the FSFS algorithm. This means that the above equality offers a quick and convenient method for repeated evaluation of the matrix, $X'V^{-1}X$, which is employed for many calculations throughout the FSFS algorithm. Similar logic can be applied to the evaluation of the vector, $X'V^{-1}Y$, and the scalar value, $Y'V^{-1}Y$. Using this approach, it can now be seen that the GLS estimators, (\ref{betaUpdate}) and (\ref{sigma2Update}), can be calculated using only $\{D_k\}_{k \in \{1,\hdots,r\}}$ and pre-calculated matrices which are small in dimension. Since the matrices $\{D_k\}_{k \in \{1,\hdots,r\}}$ are also typically small in dimension, this calculation is extremely quick and computationally efficient.

Next, we consider the matrix $F_{\text{vec}(D_k)}$ and the partial derivative vector of $l_R (\theta^f)$, taken with respect to $\text{vec}(D_k)$. Expressions for these quantities can be obtained using equations (\ref{FFSFI2}) and (\ref{FFSderiv1}), respectively, in combination with the adjustments described in Appendix \ref{remlApp}. Again noting that the random effects design matrix, $Z$, is equal to the identity matrix, the ReML equivalent of equation (\ref{FFSderiv1}) simplifies substantially to the below expression:
\begin{equation}\small{\nonumber
    \begin{aligned}
    & \frac{\partial l_R(\theta^f)}{\partial \text{vec}(D_k)}= \\
    & \frac{1}{2}\text{vec}\bigg(\bar{D}_k^{-1}\bigg(\frac{E_kE_k'}{\sigma^2}-l_k\bar{D}_k + \sum_{j=1}^{l_k}X_{(k,j)}(X'V^{-1}X)^{-1}X_{(k,j)}'\bigg)\bar{D}_k^{-1}\bigg),
    \end{aligned}}
\end{equation}
where $E_k=\text{vec}_{q_k}(e_{(k)}')'$, vec$_m$ is defined as the generalized vectorization operator described in Section \ref{compeff} and $e_{(k)}$ is defined as the residual vector corresponding to subjects who belong to a family unit with family structure type $k$. Using equation (\ref{FFSFI2}), the sub-matrix of $F$ corresponding to vec$(D_k)$ can also be seen to be given by the following simplified expression.
\begin{equation}\nonumber
    F_{\text{vec}(D_k)}= \frac{1}{2}l_k(\bar{D}_k^{-1} \otimes \tilde{D}_k^{-1}).
\end{equation}
The two expressions above can be evaluated computationally in an extremely efficient manner primarily for two reasons. The first of these is that the matrices involved in computation are typically small; $D_k$, $E_k$ and $X_{(k,j)}$ have dimensions $(q_k \times q_k)$, $(q_k \times l_kq_k)$ and $(q_k \times p)$, respectively. The second reason is that, excluding the inversions of $\bar{D}_k$ and $X'V^{-1}X$, the only operations required to evaluate the above expressions are computationally quick to perform. These operations are: matrix multiplication, the reshape operation, matrix addition and the Kronecker product.

Lastly, we consider the evaluation of the score vector and Fisher Information matrix associated to $\tau=[\tau_a,\tau_c]'=\sigma^{-1}_e[\sigma_a,\sigma_c]'$, which shall be denoted as $\frac{d l(\theta^{ACE})}{d \tau}$ and $\mathcal{I}^{ACE}_\tau$, respectively. We define $\mathbf{K}_k=[\text{vec}(\mathbf{K}^a_k), \text{vec}(\mathbf{K}^c_k)]'$. By employing equation (\ref{FullDequation}) of Appendix \ref{conApp}, and using the constraint matrix derived in Appendix \ref{AceConApp}, the score vector and Fisher Information matrix associated to $\tau$ can be reduced to the following:
\begin{equation}\nonumber
\begin{aligned}
    & \mathcal{I}^{ACE}_\tau = (\tau\tau') \odot\bigg(\sum_{k=1}^r \mathbf{K}_kF_{\text{vec}(D_k)}\mathbf{K}_k'\bigg),\\
    & \frac{d l_R(\theta^{ACE})}{d \tau} = \tau \odot \bigg(\sum_{k=1}^r\mathbf{K}_k\frac{\partial l_R(\theta^{f})}{\partial \text{vec}(D_k)}\bigg),
\end{aligned}
\end{equation}
where $\odot$ represents the Hadamard (entry-wise) product. The results of Section \ref{AceRes} were produced by using the GLS updates for the parameters $\beta$ and $\sigma^2$ and by employing the two expressions above to perform Fisher Scoring update steps for the parameter vector $\tau$.

It should be noted that, in the ACE model, misspecification of variance components can result in a low-rank Fisher Information matrix. This can be seen by noting the positioning of the Hadamard product in the above Fisher Information matrix expression and by considering the case in which one of the elements of $\tau$ equals $0$. If not accounted for appropriately, this issue may, in practice, result in failure of the algorithm to converge. A simple, practical solution to this issue is to execute the algorithm multiple times using different initial starting points; once with a starting point where $\tau_a$ is set to zero, once with $\tau_c$ set to zero and once with neither $\tau_a$ nor $\tau_c$ set to zero. The results of Section \ref{AceRes} were obtained using this approach.


%
%

\bibliographystyle{spbasic}      
\bibliography{references}   

\begin{thebibliography}{41}
\providecommand{\natexlab}[1]{#1}
\providecommand{\url}[1]{{#1}}
\providecommand{\urlprefix}{URL }
\expandafter\ifx\csname urlstyle\endcsname\relax
  \providecommand{\doi}[1]{DOI~\discretionary{}{}{}#1}\else
  \providecommand{\doi}{DOI~\discretionary{}{}{}\begingroup
  \urlstyle{rm}\Url}\fi
\providecommand{\eprint}[2][]{\url{#2}}

\bibitem[{Barnett(1990)}]{barnett1990matrices}
Barnett S (1990) Matrices: Methods and Applications. Oxford applied mathematics
  and computing science series, Clarendon Press

\bibitem[{Bates et~al.(2015)Bates, Mächler, Bolker, and
  Walker}]{Bates:2015pls}
Bates D, Mächler M, Bolker B, Walker S (2015) Fitting linear mixed-effects
  models using lme4. Journal of Statistical Software, Articles 67(1):1--48,
  \doi{10.18637/jss.v067.i01}

\bibitem[{Demidenko(2013)}]{Demidenko:2013mfx}
Demidenko E (2013) Mixed models. Theory and applications with R. 2nd ed.
  \doi{10.1002/9781118651537}

\bibitem[{Dempster et~al.(1977)Dempster, Laird, and Rubin}]{Dempster1977}
Dempster AP, Laird NM, Rubin DB (1977) Maximum likelihood from incomplete data
  via the em algorithm. Journal of the Royal Statistical Society Series B
  (Methodological) 39(1):1--38,
  \urlprefix\url{http://www.jstor.org/stable/2984875}

\bibitem[{Dempster et~al.(1981)Dempster, Rubin, and Tsutakawa}]{Dempster1981}
Dempster AP, Rubin DB, Tsutakawa RK (1981) Estimation in covariance components
  models. Journal of the American Statistical Association 76(374):341--353,
  \doi{10.1080/01621459.1981.10477653}

\bibitem[{Giles(2008)}]{giles2008:collect}
Giles MB (2008) Collected matrix derivative results for forward and reverse
  mode algorithmic differentiation

\bibitem[{Henderson et~al.(1959)Henderson, Kempthorne, Searle, and von
  Krosigk}]{Henderson1959}
Henderson CR, Kempthorne O, Searle SR, von Krosigk CM (1959) The estimation of
  environmental and genetic trends from records subject to culling. Biometrics
  15(2):192--218, \urlprefix\url{http://www.jstor.org/stable/2527669}

\bibitem[{Hong and Raudenbush(2008)}]{Raudenbush2008}
Hong G, Raudenbush SW (2008) Causal inference for time-varying instructional
  treatments. Journal of Educational and Behavioral Statistics 33(3):333--362,
  \doi{10.3102/1076998607307355}

\bibitem[{{IBM Corp}(2015)}]{SPSSMIXED}
{IBM Corp} (2015) IBM SPSS Advanced Statistics 23. Armonk, NY: IBM Corp.

\bibitem[{Jennrich and Schluchter(1986)}]{Jennrich1986}
Jennrich RI, Schluchter MD (1986) Unbalanced repeated-measures models with
  structured covariance matrices. Biometrics 42(4):805--820,
  \urlprefix\url{http://www.jstor.org/stable/2530695}

\bibitem[{Kuznetsova et~al.(2017)Kuznetsova, Brockhoff, and
  Christensen}]{Kuznetsova2017}
Kuznetsova A, Brockhoff P, Christensen R (2017) lmertest package: Tests in
  linear mixed effects models. Journal of Statistical Software, Articles
  82(13):1--26, \doi{10.18637/jss.v082.i13},
  \urlprefix\url{https://www.jstatsoft.org/v082/i13}

\bibitem[{Laird et~al.(1987)Laird, Lange, and Stram}]{Laird1987}
Laird N, Lange N, Stram D (1987) Maximum likelihood computations with repeated
  measures: Application of the em algorithm. Journal of the American
  Statistical Association 82(397):97--105, \doi{10.1080/01621459.1987.10478395}

\bibitem[{Laird and Ware(1982)}]{Laird1982}
Laird NM, Ware JH (1982) Random-effects models for longitudinal data.
  Biometrics 38(4):963--974,
  \urlprefix\url{http://www.jstor.org/stable/2529876}

\bibitem[{Lawlor et~al.(2009)Lawlor, Lawlor, and Mishra}]{lawlor2009family}
Lawlor D, Lawlor D, Mishra G (2009) Family Matters: Designing, Analysing and
  Understanding Family Based Studies in Life Course Epidemiology. Life Course
  Approach to Adult Health, OUP Oxford

\bibitem[{Li et~al.(2019)Li, Guo, and Li}]{Li2019}
Li X, Guo N, Li Q (2019) Functional neuroimaging in the new era of big data.
  Genomics, Proteomics \& Bioinformatics 17(4):393 -- 401,
  \doi{https://doi.org/10.1016/j.gpb.2018.11.005}, big Data in Brain Science

\bibitem[{Lindstrom and Bates(1988)}]{Bates1988}
Lindstrom MJ, Bates DM (1988) Newton-raphson and em algorithms for linear
  mixed-effects models for repeated-measures data. Journal of the American
  Statistical Association 83(404):1014--1022,
  \urlprefix\url{http://www.jstor.org/stable/2290128}

\bibitem[{Magnus and Neudecker(1980)}]{Magnus1980}
Magnus JR, Neudecker H (1980) The elimination matrix: Some lemmas and
  applications. SIAM Journal on Algebraic Discrete Methods 1(4):422--449,
  \doi{10.1137/0601049}

\bibitem[{Magnus and Neudecker(1986)}]{Magnus1986:dup}
Magnus JR, Neudecker H (1986) Symmetry, 0-1 matrices, and jacobians : a review.
  Econometric Theory p~46

\bibitem[{Magnus and Neudecker(1999)}]{Magnus1999}
Magnus JR, Neudecker H (1999) Matrix differential calculus with applications in
  statistics and econometrics, rev. ed edn. Wiley Series in Probability and
  Statistics, John Wiley

\bibitem[{Neudecker and Wansbeek(1983)}]{Neudecker1983}
Neudecker H, Wansbeek T (1983) Some results on commutation matrices, with
  statistical applications. Canadian Journal of Statistics 11(3):221--231,
  \doi{10.2307/3314625}

\bibitem[{Patterson and Thompson(1971)}]{Patterson1971}
Patterson HD, Thompson R (1971) Recovery of inter-block information when block
  sizes are unequal. Biometrika 58(3):545--554,
  \urlprefix\url{http://www.jstor.org/stable/2334389}

\bibitem[{Pinheiro and Bates(2009)}]{Bates2009:book}
Pinheiro J, Bates D (2009) Mixed-Effects Models in S and S-PLUS. Statistics and
  Computing, Springer

\bibitem[{Pinheiro and Bates(1996)}]{Pinheiro1996}
Pinheiro JC, Bates DM (1996) Unconstrained parametrizations for
  variance-covariance matrices. Statistics and Computing 6:289--296

\bibitem[{Powell(2009)}]{Powell2009}
Powell M (2009) The bobyqa algorithm for bound constrained optimization without
  derivatives. Technical Report, Department of Applied Mathematics and
  Theoretical Physics

\bibitem[{Powell(1964)}]{Powell1964}
Powell MJD (1964) {An efficient method for finding the minimum of a function of
  several variables without calculating derivatives}. The Computer Journal
  7(2):155--162, \doi{10.1093/comjnl/7.2.155},
  \urlprefix\url{https://doi.org/10.1093/comjnl/7.2.155}

\bibitem[{Rao and Mitra(1972)}]{rao1972:psuedo}
Rao C, Mitra SK (1972) Generalized Inverse of Matrices and Its Applications.
  Probability and Statistics Series, Wiley

\bibitem[{Raudenbush and Bryk(2002)}]{Raudenbush2002}
Raudenbush SW, Bryk AS (2002) Hierarchical Linear Models: Applications and Data
  Analysis Methods, 2nd edn. Advanced Quantitative Techniques in the Social
  Sciences 1, SAGE Publications

\bibitem[{{SAS Institute Inc.}(2015)}]{SASMIXED}
{SAS Institute Inc} (2015) SAS/STAT® 14.1 User’s Guide The MIXED Procedure.
  Springer Berlin Heidelberg, Cary, NC: SAS Institute Inc.

\bibitem[{Satterthwaite(1946)}]{Satterthwaite1946}
Satterthwaite FE (1946) An approximate distribution of estimates of variance
  components. Biometrics Bulletin 2(6):110--114,
  \urlprefix\url{http://www.jstor.org/stable/3002019}

\bibitem[{Smith and Nichols(2018)}]{Smith2018}
Smith SM, Nichols TE (2018) Statistical challenges in “big data” human
  neuroimaging. Neuron 97(2):263 -- 268,
  \doi{https://doi.org/10.1016/j.neuron.2017.12.018}

\bibitem[{Tibaldi et~al.(2007)Tibaldi, Verbeke, Molenberghs, Renard, Van~den
  Noortgate, and de~Boeck}]{Tibaldi2007}
Tibaldi FS, Verbeke G, Molenberghs G, Renard D, Van~den Noortgate W, de~Boeck P
  (2007) Conditional mixed models with crossed random effects. British Journal
  of Mathematical and Statistical Psychology 60(2):351--365,
  \doi{10.1348/000711006X110562}

\bibitem[{Turkington(2013)}]{Turkington2013}
Turkington DA (2013) Generalized Vectorization, Cross-Products, and Matrix
  Calculus. Cambridge University Press, \doi{10.1017/CBO9781139424400}

\bibitem[{Turnbull et~al.(1999)Turnbull, Welsh, Heid, Davis, and
  Ratnofsky}]{Turnbull1999}
Turnbull BJ, Welsh ME, Heid CA, Davis W, Ratnofsky AC (1999) The longitudinal
  evaluation of school change and performance (lescp) in title i schools.
  interim report to congress.

\bibitem[{Van~Essen et~al.(2013)Van~Essen, Smith, Barch, Behrens, Yacoub, and
  Ugurbil}]{Essen2013}
Van~Essen D, Smith S, Barch D, Behrens T, Yacoub E, Ugurbil K (2013) The
  wu-minn human connectome project: an overview. NeuroImage 80,
  \doi{10.1016/j.neuroimage.2013.05.041}

\bibitem[{Verbeke and Molenberghs(2001)}]{verbeke2001:book}
Verbeke G, Molenberghs G (2001) Linear Mixed Models for Longitudinal Data.
  Springer Series in Statistics, Springer New York

\bibitem[{Welch(1947)}]{Welch1947}
Welch BL (1947) The generalization of `student's' problem when several
  different population variances are involved. Biometrika 34(1/2):28--35,
  \urlprefix\url{http://www.jstor.org/stable/2332510}

\bibitem[{West et~al.(2014)West, Welch, and Galecki}]{west2014linear}
West B, Welch K, Galecki A (2014) Linear Mixed Models: A Practical Guide Using
  Statistical Software. CRC Press

\bibitem[{Winkler et~al.(2015)Winkler, Webster, Vidaurre, Nichols, and
  Smith}]{Winkler2015}
Winkler AM, Webster MA, Vidaurre D, Nichols TE, Smith SM (2015) Multi-level
  block permutation. NeuroImage 123:253 -- 268,
  \doi{https://doi.org/10.1016/j.neuroimage.2015.05.092}

\bibitem[{Wolfinger(1996)}]{Wolfinger1996}
Wolfinger R (1996) Heterogeneous variance: Covariance structures for repeated
  measures. Journal of Agricultural, Biological, and Environmental Statistics
  1:205, \doi{10.2307/1400366}

\bibitem[{Wolfinger et~al.(1994)Wolfinger, Tobias, and Sall}]{Wolfinger1994}
Wolfinger R, Tobias R, Sall J (1994) Computing gaussian likelihoods and their
  derivatives for general linear mixed models. Siam Journal on Scientific
  Computing 15, \doi{10.1137/0915079}

\bibitem[{Zhu and Wathen(2018)}]{zhu2018essential}
Zhu S, Wathen AJ (2018) Essential formulae for restricted maximum likelihood
  and its derivatives associated with the linear mixed models.
  \eprint{1805.05188}

\end{thebibliography}

\end{document}